\theoremstyle{plain}
\newtheorem{thm}{Theorem}[section]
\newtheorem{prop}[thm]{Proposition}
\newtheorem{lem}[thm]{Lemma}
\newtheorem{cor}[thm]{Corollary}
\theoremstyle{remark}
\newtheorem{rem}[thm]{Remark}
\theoremstyle{definition}
\newtheorem{defi}[thm]{Definition}
\newtheorem{ex}[thm]{Example}
\newcommand{\A}{\mathcal{A}}
\newcommand{\roundlog}[1]{\lceil \log_2 \|#1\| \rceil}
\newcommand{\plainlog}[1]{\log_2 \|#1\|}
\newcommand{\B}{\mathcal{B}}
\newcommand{\calC}{\mathcal{C}}
\newcommand{\D}{\mathcal{D}}
\newcommand{\F}{\mathcal{F}}
\newcommand{\calG}{\mathcal{G}}
\newcommand{\V}{\mathcal{V}}
\newcommand{\X}{\mathcal{X}}
\newcommand{\Y}{\mathcal{Y}}
\newcommand{\Z}{\mathcal{Z}}
\newcommand{\N}{\mathbb{N}}
\newcommand{\So}{\mathcal{S}}
\newcommand{\T}{\mathcal{T}}
\newcommand{\valX}[1]{\llbracket #1 \rrbracket}
\newcommand{\valXG}[2]{\valX{#2}_{#1}}
\newcommand{\concv}{\mathbin{\varobar}}
\newcommand{\conch}{\mathbin{\varominus}}
\newcommand{\lin}{\text{lin}}
\newcommand{\mergv}{\mathbin{\varobar}}
\newcommand{\mergh}{\mathbin{\varominus}}
\newcommand{\depth}{\text{depth}}
\newcommand{\type}{\text{type}}
\DeclareMathOperator{\rmq}{RMQ}
\definecolor{myYellow}{rgb}{0.9,0.9,0}
\definecolor{myGreen}{rgb}{0.1,1,0}
\begin{document}

\title{Balancing Straight-Line Programs}

\author{Moses Ganardi}
\email{ganardi@mpi-sws.org}

\author{Artur Je\.z}
\email{aje@cs.uni.wroc.pl}

\author{Markus~Lohrey}
\thanks{Markus Lohrey has been supported by the DFG research project LO 748/10-1.}
\email{lohrey@eti.uni-siegen.de}

\address[Moses~Ganardi]{Max Planck Institute for Software Systems (MPI-SWS), Germany}
\address[Artur Je\.z]{University of Wroc{\l}aw, Poland}
\address[Markus~Lohrey]{Universit{\"a}t Siegen, Germany}

\begin{abstract}
We show that a context-free grammar of size $m$ that produces a single string $w$ of length $n$
(such a grammar is also called a string straight-line program)
can be transformed in linear time into a context-free grammar for $w$ of size 
$\mathcal{O}(m)$, whose unique derivation tree has depth $\mathcal{O}(\log n)$.
This solves an open problem in the area of grammar-based compression,
improves many results in this area
and greatly simplifies many existing constructions.
Similar results are shown for two formalisms for grammar-based tree
compression: top dags and forest straight-line programs.
These balancing results can be all deduced from a single
meta theorem stating that the depth of an algebraic circuit over an algebra with a certain finite base property can be reduced 
to $\mathcal{O}(\log n)$ with the cost of a constant multiplicative size increase. Here, $n$ refers to the size of the 
unfolding (or unravelling) of the circuit.
In particular, this results applies to standard arithmetic circuits over (noncommutative) semirings.
\end{abstract}

\maketitle

\section{Introduction}

\paragraph{Grammar-based string compression}
In {\em grammar-based compression} a~combinatorial object is compactly represented
using a~grammar of an~appropriate type.
Such a~grammar can be up to exponentially smaller 
than the object itself.
A~well-studied example of this general idea is grammar-based string compression using 
context-free grammars that produce only one string,
which are also known as {\em straight-line programs}.
Since the term ``straight-line programs'' is used in the literature for different kinds of objects (e.g.\ arithmetic 
straight-line programs) and we will also deal with different types of straight-line programs, we use the term 
{\em string straight-line program}, SSLP for short.
Grammar-based string compression is tightly related to dictionary-based compression: the famous LZ78 algorithm can be viewed 
as a~particular grammar-based compressor, the number of phrases in the LZ77-factorization is a~lower bound for the smallest
SSLP for a~string \cite{Ryt03},
and an~LZ77-factorization of length $m$ can be converted to an~SSLP of size $\mathcal{O}(m \cdot \log n)$ where $n$ is the length of the string~\cite{Ryt03,CLLLPPSS05,Jez15,simplegrammar}.
For various other aspects of grammar-based string compression see \cite{CLLLPPSS05,lohrey_survey}.

\paragraph{Balancing string straight-line programs}
The~two important measures for an~SSLP are size and depth. To define these measures, it is convenient to assume that all
right-hand sides of the grammar have length two (as in Chomsky normal form). Then, the size $|\calG|$ of an~SSLP 
$\calG$ is the number of variables (nonterminals) of $\calG$ and the depth of $\calG$ ($\depth(\calG)$ 
for short) is the depth of the unique derivation tree of $\calG$. It is straightforward to show that any string $s$ of length $n$ 
can be produced by an~SSLP of size $\mathcal{O}(n)$ and depth $\mathcal{O}(\log n)$. A~more difficult problem is to balance a~%
given SSLP: Assume that the SSLP $\calG$ produces a~string of length $n$. Several authors have shown that one can 
restructure $\calG$ in time $\mathcal{O}(|\calG| \cdot \log n)$ into an~equivalent SSLP $\mathcal{H}$ of size 
$\mathcal{O}(|\calG| \cdot \log n)$ and depth $\mathcal{O}(\log n)$~\cite{Ryt03,CLLLPPSS05,simplegrammar}.

Finding SSLPs of small size and small depth is important in many algorithmic applications.
A~prominent example is the {\em random access problem for grammar-compressed strings}:
For a~given SSLP $\calG$ that produces the string $s$ of length 
$n$ and a~given position $p \in [1,n]$ one wants to access the $p$-th symbol in $s$.
As observed in \cite{BilleLRSSW15} one can solve this problem in time $\mathcal{O}(\depth(\calG))$
(assuming arithmetic operations on numbers from the interval $[0,n]$ use constant time).
Combined with one of the known SSLP balancing procedures \cite{Ryt03,CLLLPPSS05} one obtains
access time $\mathcal{O}(\log n)$, but one has to pay with an increased SSLP size of $\mathcal{O}(|\calG| \cdot \log n)$.
Using sophisticated data structures, the following result was shown in \cite{BilleLRSSW15}:

\begin{thm}[random access to grammar-compressed strings, cf.~\cite{BilleLRSSW15}] \label{cor-random-access}
From a given SSLP $\calG$ of size $m$ that generates the string $s$ of length $n$, 
one can construct in time $\mathcal{O}(m)$ a data structure of size $\mathcal{O}(m)$ (measured in words of bit length $\log n$)  that allows to answer 
random access queries in time  $\mathcal{O}(\log n)$.
\end{thm}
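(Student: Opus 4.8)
The plan is to derive Theorem~\ref{cor-random-access} as an immediate corollary of the SSLP balancing result announced in the abstract (the main theorem of this paper): every SSLP $\calG$ of size $m$ generating a string of length $n$ can be transformed \emph{in time $\mathcal{O}(m)$} into an equivalent SSLP $\mathcal{H}$ of size $\mathcal{O}(m)$ and depth $\mathcal{O}(\log n)$. Granting this, both the data structure and the query algorithm are elementary, so essentially all of the work has been pushed into the balancing step.

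Concretely, I would first run the balancing procedure on $\calG$ to obtain such an $\mathcal{H}$ in time $\mathcal{O}(m)$. Then, in a single traversal of the DAG underlying $\mathcal{H}$ in reverse topological order, I would annotate every variable $A$ with the length $\ell(A) := |\valXG{\mathcal{H}}{A}|$ of the string it derives; since all right-hand sides have length at most two, $\ell(A) = \ell(B) + \ell(C)$ for a rule $A \to BC$ and $\ell(a) = 1$ for a terminal $a$, so all these values are computed with $\mathcal{O}(m)$ arithmetic operations on integers in $[0,n]$, each fitting into one machine word. The data structure is $\mathcal{H}$ together with these annotations; it occupies $\mathcal{O}(m)$ words and is built in time $\mathcal{O}(m)$.

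To answer a query for a position $p \in [1,n]$ I would descend from the start variable of $\mathcal{H}$: at a variable $A$ with rule $A \to BC$, if $p \le \ell(B)$ recurse into $B$, and otherwise replace $p$ by $p - \ell(B)$ and recurse into $C$; upon reaching a terminal $a$, output $a$. Correctness follows from the invariant that $p$ always indexes the string derived by the current variable. The descent follows a root-to-leaf path in the derivation tree of $\mathcal{H}$, hence has length $\mathcal{O}(\depth(\mathcal{H})) = \mathcal{O}(\log n)$, and each step costs $\mathcal{O}(1)$ (one comparison and one subtraction on numbers in $[0,n]$), giving query time $\mathcal{O}(\log n)$. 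The only real obstacle is the balancing theorem itself; once it is available the corollary is immediate, and it is precisely the new \emph{linear-time} balancing—as opposed to the earlier $\mathcal{O}(m\log n)$-size constructions of~\cite{Ryt03,CLLLPPSS05,simplegrammar}—that yields the $\mathcal{O}(m)$ bounds on construction time and space stated here.
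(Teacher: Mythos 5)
Your proposal is correct and matches the paper's own proof: apply Theorem~\ref{thm-balance-SSLP} to obtain a balanced SSLP $\mathcal{H}$ of size $\mathcal{O}(m)$ and depth $\mathcal{O}(\log n)$, annotate every variable with the length of the string it derives in one pass, and answer a query by the standard length-guided descent in time $\mathcal{O}(\depth(\mathcal{H})) = \mathcal{O}(\log n)$. No gaps; this is exactly the folklore random-access algorithm combined with the new linear-time balancing result, as in the paper.
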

Our main result for string straight-line programs states that SSLP balancing is in fact possible with a~constant blow-up in size.

\begin{thm}
\label{thm-balance-SSLP}
Given an~SSLP $\calG$ producing a~string of length $n$
one can construct in linear time an~equivalent SSLP
$\mathcal{H}$ of size $\mathcal{O}(|\calG|)$ and depth $\mathcal{O}(\log n)$.
\end{thm}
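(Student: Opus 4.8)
The plan is to regard the SSLP $\calG$ as an algebraic circuit over the free monoid $(\Sigma^{*},\cdot)$: a rule $A\to BC$ becomes a binary concatenation gate, a rule $A\to a$ an input gate, the string derived from $A$ is the value $\valX{A}$ computed at gate $A$, the unfolding of the circuit is the derivation tree $T_{A}$, and $n=|\valX{S}|$ for the start nonterminal $S$. After the standard linear-time preprocessing (Chomsky normal form; removal of non-reachable and non-productive nonterminals) every remaining nonterminal occurs in $T_{S}$, so $|\calG|=\Theta(m)$, $|T_{S}|\le 2n$, and $m\le 2n$, hence $\log m=\mathcal{O}(\log n)$. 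The second ingredient is to also allow \emph{contexts}: a context is a pair $(u,v)\in\Sigma^{*}\times\Sigma^{*}$ that acts on a string by $(u,v)\cdot s=usv$ and that composes by $(u_{1},v_{1})\circ(u_{2},v_{2})=(u_{1}u_{2},v_{2}v_{1})$. In the target grammar a context is stored as a pair of nonterminals deriving $u$ and $v$; composing two contexts or applying a context to a string costs $\mathcal{O}(1)$ extra nonterminals and adds $\mathcal{O}(1)$ to the depth.

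Next, fix for each nonterminal $A$ with rule $A\to BC$ its \emph{heavy child} $h(A)$ (the one of $B,C$ with the longer value) and its \emph{light child} $\ell(A)$, so that $|T_{\ell(A)}|\le|T_{A}|/2$. Following heavy children gives the heavy path $A=A_{0},A_{1}=h(A_{0}),A_{2}=h(A_{1}),\dots$ along which
\[
\valX{A}=(c_{0}\circ c_{1}\circ\cdots\circ c_{k-1})(\valX{A_{k}}),
\]
where $c_{j}\in\{(\valX{\ell(A_{j})},\eps),(\eps,\valX{\ell(A_{j})})\}$ depending on whether $\ell(A_{j})$ sits left or right of $A_{j+1}$. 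The core subroutine $\textsc{Balance}$ takes a sequence of contexts $d_{0},\dots,d_{p-1}$, a ``base'' $e$ (a string or context) and weights $w_{0},\dots,w_{p-1},w_{e}$, and produces $\mathcal{O}(p)$ new nonterminals computing \emph{all} suffix compositions $\sigma_{i}=(d_{i}\circ\cdots\circ d_{p-1})(e)$ for $i=0,\dots,p$ (with $\sigma_{p}=e$): split the sequence at a weight-median into a left and a right part (each of total weight at most $\tfrac{2}{3}$ of the whole), recursively balance the right part to obtain $\sigma_{m^{*}},\dots,\sigma_{p}$, then recursively balance the left part \emph{with base $\sigma_{m^{*}}$} to obtain $\sigma_{0},\dots,\sigma_{m^{*}-1}$; as the combining step introduces no fresh nonterminals, the total size is $\mathcal{O}(p)$, and in $\sigma_{i}$ every input $d_{j}$ with $j\ge i$ is reached at depth $\log(W/w_{j})+\mathcal{O}(1)$, where $W$ is the combined weight of $d_{i},\dots,d_{p-1},e$. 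The construction then runs a depth-first memoised recursion $\textsc{Process}(A)$: walk the heavy path of $A$ up to the first already-processed nonterminal $A_{k}$ (or a suitable size threshold, in which case $\textsc{Process}(A_{k})$ is called recursively), recursively $\textsc{Process}$ the light children $\ell(A_{0}),\dots,\ell(A_{k-1})$, and call $\textsc{Balance}$ on $c_{0},\dots,c_{k-1}$ with base $\valX{A_{k}}$ and $c_{j}$ weighted by $|T_{\ell(A_{j})}|$; this defines $\valX{A_{0}},\dots,\valX{A_{k-1}}$ simultaneously and marks these nonterminals. Starting from $\textsc{Process}(S)$, the heavy-path segments handled over all calls partition the nonterminals, so the output grammar $\mathcal{H}$ has size $\mathcal{O}(m)$.

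For the depth one proves by induction on $|T_{A}|$ that $\depth_{\mathcal{H}}(\valX{A})\le c\log|T_{A}|$ for a suitable constant $c$. In the good case the deepest input feeding $\valX{A}$ through the $\textsc{Balance}$ structure is some $\valX{\ell(A_{j})}$ with $|T_{\ell(A_{j})}|\le|T_{A}|/2$, reached at depth $\log(|T_{A}|/|T_{\ell(A_{j})}|)+\mathcal{O}(1)$, whence $\depth_{\mathcal{H}}(\valX{A})\le\log|T_{A}|-\log|T_{\ell(A_{j})}|+\mathcal{O}(1)+c\log|T_{\ell(A_{j})}|$, which is at most $c\log|T_{A}|$ once $c$ is large enough; the \emph{weighting} of $\textsc{Balance}$ is essential here, since an unweighted balanced tree would only yield depth $\mathcal{O}(\log^{2}n)$. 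With $|T_{S}|\le2n$ this gives $\depth(\mathcal{H})=\mathcal{O}(\log n)$. The linear running time follows because the preprocessing, the lengths $|\valX{A}|$ (all $\le n$, hence $\mathcal{O}(1)$ machine words each, computed in topological order), the heavy-path walks and all $\textsc{Balance}$ calls together cost linear time, using standard techniques to locate the weight-medians.

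The hard part will be reconciling the heavy-path decomposition with the sharing in the DAG. In the derivation \emph{tree} the heavy-path decomposition is canonical, but in the DAG a nonterminal may be the heavy child of one nonterminal and the light child of another, so ``heavy-path segments'' depend on the processing order; and whenever $\textsc{Process}(A)$ is forced to stop at an already-processed nonterminal $A_{k}$ whose tree is not a constant factor smaller than $T_{A}$, the naive potential argument loses an additive constant that could pile up along a chain of such stops, spoiling the $\mathcal{O}(\log n)$ bound. Handling this — by choosing the stopping rule and the processing order so that every such ``large jump'' is charged against a genuine $\Omega(1)$ drop of $\log|T_{A}|$, or, more robustly, by carrying out the whole argument inside an abstract theory of algebraic circuits over an algebra with the finite base property, where the contexts live in a space of bounded complexity — is the technical heart of the result, from which the SSLP statement, as well as the tree-compression versions, then follow.
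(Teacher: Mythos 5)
Your overall architecture (view $\calG$ as a DAG, decompose it into path-like segments, and for each segment build a weight-balanced SSLP producing all suffix compositions of the branch-off contexts, then telescope) is the same as the paper's, and your \textsc{Balance} subroutine is essentially the paper's Proposition~\ref{prop-all-suffixes}. But the proposal has a genuine gap, and it is exactly the one you flag at the end: you never obtain a decomposition of the \emph{DAG} into disjoint segments with the property that every root-to-leaf path of the derivation tree crosses only $\mathcal{O}(\log n)$ of them. Your heavy-child rule fixes one outgoing ``heavy'' edge per node, but a node may be the heavy child of many parents, so the heavy edges form a forest with merging, and your segments are carved out only by the memoised DFS order. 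With an unfortunate (and unspecified) order the additive $\mathcal{O}(1)$ losses at ``stop at an already-processed heavy descendant'' really do pile up linearly: take $S \to X_1 Y_1$, $Y_i \to Y_{i+1} X_{i+1}$, $X_i \to X_{i+1} a_i$ with each $X_{i+1}$ the heavy child of $X_i$; if the light children $X_m, X_{m-1}, \dots, X_1$ hanging off the $Y$-spine are processed in that order, each $\textsc{Process}(X_i)$ stops immediately at the already-processed $X_{i+1}$, the sizes $|T_{X_i}|$ drop only by $\mathcal{O}(1)$ per step, and $\depth_{\mathcal H}(\valX{X_1}) = \Omega(m) = \Omega(n)$. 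So the inductive bound $\depth_{\mathcal H}(\valX{A}) \le c\log|T_A|$ fails in the base case of a segment, and no argument is given that a good stopping rule or processing order exists; the appeal to the abstract circuit framework does not help either, since that framework addresses converting tree-shaped programs back into circuits, not the path decomposition.

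The paper closes precisely this hole with the \emph{symmetric centroid decomposition} (Lemma~\ref{lem-decomposition}): label each DAG node $v$ with $\lambda_{\D}(v) = (\lfloor \log_2 \pi(r,v)\rfloor, \lfloor \log_2 \pi(v,W)\rfloor)$, where $\pi(r,v)$ counts paths from the root to $v$ and $\pi(v,W)$ counts paths from $v$ to the sinks, and keep exactly the edges whose endpoints carry the same label. Because the labelling uses both directions, every node has at most one incoming \emph{and} at most one outgoing kept edge, so the segments are genuinely node-disjoint paths of the DAG, defined intrinsically and independently of any processing order; and since the first component can only increase and the second only decrease along any root-to-sink path, at most $2\log_2 n$ edges on such a path leave a segment. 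That monotone double-potential is the missing idea: once you have it, your telescoping depth analysis (and your \textsc{Balance} routine, with its $\mathcal{O}(\log(W/w_j))$ access depth) goes through exactly as in the paper's Section~\ref{sec:balancing-of-sslps}, giving depth $\mathcal{O}(\log n)$ and size $\mathcal{O}(|\calG|)$ in linear time. As it stands, the proposal identifies the obstacle but does not overcome it, so it does not yet constitute a proof.
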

As a~corollary we obtain a~very simple and clean proof of Theorem~\ref{cor-random-access}.
We can also obtain an~algorithm for the random access problem
with running time $\mathcal{O}(\log n / \log\log n)$ using $\mathcal{O}(m \cdot \log^{\epsilon} n)$ 
words of bit length $\log n$; previously this bound was only shown for balanced SSLPs~\cite{BelazzouguiCPT15}. 
Section~\ref{sec-applications}
contains a~list of further applications of Theorem~\ref{thm-balance-SSLP},
which include the following problems on SSLP-compressed strings:
rank and select queries \cite{BelazzouguiCPT15}, subsequence matching \cite{BilleCG17},
computing Karp-Rabin fingerprints \cite{BilleGCSVV17}, computing
runs, squares, and palindromes \cite{IMSIBTNS15}, real-time traversal \cite{GasieniecKPS05,LohreyMR18} and range-minimum queries~\cite{GaJoMoWe19}.
In all these applications we either improve existing results or significantly simplify existing proofs by 
replacing $\depth(\calG)$ by $\mathcal{O}(\log n)$ in time/space bounds.


Let us say a few words over the underlying computational model in Theorem~\ref{thm-balance-SSLP}.
Our balancing procedure involves (simple) arithmetic on lengths, i.e., numbers of order $n$.
Thus the linear running time can be achieved assuming
that machine words have $\Omega(\log n)$ bits.
Otherwise the running time increases by a~multiplicative $\log n$ factor.
Note that such an~assumption is realistic and standard in the field since machine words of bit length $\Omega(\log n)$
are needed, say, for indexing positions in the represented string.
On the other hand, our procedure works in the pointer model regime.

\paragraph{Balancing forest straight-line programs and top dags}
Grammar-based compression has been generalized from strings to ordered ranked node-labelled trees.
In fact, the representation of a~tree $t$ by its smallest directed acyclic graph (DAG) is a~form of 
grammar-based tree compression. This DAG is obtained by merging nodes where the same subtree of $t$ is rooted. 
It can be seen as a~regular tree grammar that produces only $t$. A~drawback of DAG-compression is that the size 
of the DAG is lower-bounded by the height of the tree $t$. Hence, for deep narrow trees (like for instance 
caterpillar trees), the DAG-representation cannot achieve good compression. This can be overcome by representing a~
tree $t$ by a~linear context-free tree grammar that produces only $t$.
Such grammars are also known as {\em tree straight-line programs} in the case of ranked trees
\cite{BuLoMa07,Lohrey15dlt,LohreyMM13} and {\em forest straight-line programs} in the case of 
unranked trees \cite{GLMRS18}. The~latter are tightly related to {\em top dags} \cite{BilleGLW15,BilleFG17,DudekG18,GLMRS18,Hubschle-Schneider15},
which are another tree compression formalism, also akin to grammars.
Our balancing technique works similarly for those types of compression:

\begin{thm} \label{thm-balancing-TSLP}
Given a~top dag\,/\,forest straight-line program\,/\,tree straight-line program $\calG$
producing the tree $t$ 
one can compute in time $\mathcal{O}(|\calG|)$ a~top dag\,/\,forest straight-line program\,/\,tree straight-line program $\mathcal{H}$ for $t$
of size $\mathcal{O}(|\calG|)$ and depth $\mathcal{O}(\log |t|)$.
\end{thm}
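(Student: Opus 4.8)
The strategy is to reduce Theorem~\ref{thm-balancing-TSLP} to Theorem~\ref{thm-balance-SSLP} (and the underlying meta theorem on algebraic circuits) by encoding the three tree-compression formalisms as algebraic circuits over a suitable algebra that has the finite base property mentioned in the abstract. Concretely, I would first observe that all three formalisms---top dags, forest straight-line programs (FSLPs), and tree straight-line programs (TSLPs)---are mutually convertible with only constant blow-up in size: a TSLP for a ranked tree can be reencoded as an FSLP (using the standard first-child/next-sibling correspondence, or directly since ranked trees are a special case of unranked forests), and an FSLP is linearly equivalent to a top dag by the results of~\cite{GLMRS18}. So it suffices to prove the balancing statement for \emph{one} of them---say FSLPs---and then the two conversions, applied before and after balancing, transfer the result to the other two formalisms at the cost of only constant factors in size (one must check the conversions do not blow up depth super-logarithmically, but since $|t|$ is preserved and the converted grammar has size $\mathcal{O}(|\calG|)$, applying the balancing theorem afterwards fixes the depth).

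**Reduction to the meta theorem.** For the FSLP case, the key step is to view an FSLP as an algebraic circuit over the algebra of forests with the two concatenation-like operations: horizontal concatenation $\conch$ of forests and the vertical operation $\concv$ that plugs a forest into the (unique) hole of a forest-with-hole. One checks that this algebra---more precisely, the two-sorted structure of forests and forest contexts with these operations and the relevant constants---has the required finite base property relative to the parameter $|t|$: the analogue of the fact used for strings is that a forest (or context) of ``size'' $n$ can be decomposed so that the relevant building blocks live in a bounded set. This is exactly the setting where the meta theorem applies, yielding a circuit of depth $\mathcal{O}(\log n)$, with $n = |t|$ the size of the unfolding, and only a constant-factor size increase; translating this balanced circuit back into FSLP form (respecting the normal-form conventions on right-hand side lengths) gives the desired balanced FSLP. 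The linear running time follows because the meta theorem's construction is linear-time and the inter-formalism conversions are linear-time as well.

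**The main obstacle.** The hard part will be verifying that the forest/context algebra genuinely fits the hypotheses of the meta theorem---in particular pinning down the correct notion of ``size'' for forest contexts and checking the finite base property in the two-sorted setting, where one must control both how a context decomposes into sub-contexts and how a forest decomposes using contexts. For strings this is essentially the observation that a long string is a concatenation of two shorter ones; for forests the vertical/horizontal interplay makes the decomposition less symmetric, and one has to ensure the ``spine'' along which a context is cut behaves like a string so that the string-balancing machinery can be invoked along it. A secondary technical point is bookkeeping the normal-form conventions (right-hand sides of length two, the distinction between terminal and nonterminal occurrences) so that the final object is a legitimate top dag / FSLP / TSLP and not merely an abstract circuit; this is routine but must be done carefully for each of the three formalisms, or---preferably---handled once via the linear-equivalence conversions so that only the FSLP normal form needs explicit attention.
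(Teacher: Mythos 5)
Your overall strategy (view the grammar as a circuit over a two-sorted forest algebra and invoke the meta theorem) is indeed the paper's route for FSLPs, but your plan has two genuine gaps. The first is the claim that all three formalisms are interconvertible with constant size blow-up, so that it suffices to balance FSLPs. The conversion top dag $\to$ FSLP is linear, but the converse direction FSLP $\to$ top dag costs a multiplicative factor $|\Sigma|$, and this factor is \emph{unavoidable} (this is stated in Section~\ref{sec-cluster}, citing \cite{GLMRS18}). Since the alphabet is part of the input, your ``balance the FSLP, convert back'' plan only yields a top dag of size $\mathcal{O}(|\Sigma|\cdot|\calG|)$, not $\mathcal{O}(|\calG|)$. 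The paper avoids this by treating top dags directly: it exhibits a finite subsumption base for the cluster algebra $\mathsf{K}(\Sigma)$ (Lemma~\ref{lem-cluster-tame}) and applies the meta theorem (Theorem~\ref{cor-balance-dag}) to it, in the uniform-alphabet setting of Remark~\ref{rem-variable-algebra}; an analogous separate verification (Lemma~\ref{lem-forest-tame}) is done for the forest algebra. This verification of the finite base property is exactly the step you flag as ``the main obstacle'' but do not carry out, and it is the real content of the proof — for both algebras it is a nontrivial case analysis via Lemma~\ref{lem:tame-atomic}, and for top dags it cannot be inherited from the forest-algebra case through the conversions.

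The second gap is your identification of the unfolded size of the FSLP circuit with $|t|$. The meta theorem gives depth $\mathcal{O}(\log n)$ where $n$ is the size of the \emph{derivation tree} (unfolding) of the circuit, and for an FSLP this can be exponentially larger than the produced forest because of the constants $\varepsilon$ and $\ast$ (e.g.\ $X_0=\varepsilon$, $X_{i+1}=X_i\conch X_i$ has exponential unfolding but produces the empty forest). The paper therefore first normalizes the FSLP so that $\varepsilon$ and $\ast$ essentially disappear (Lemma~\ref{lem-eliminate-eps}), guaranteeing that the derivation tree has size $\mathcal{O}(|t|)$ before the meta theorem is applied; for top dags this normalization is not needed since atomic clusters have size two. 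Your proposal needs this preprocessing step, or the depth bound you obtain is $\mathcal{O}(\log(\text{unfolding}))$ rather than $\mathcal{O}(\log|t|)$.
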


For top dags, this solves an~open problem from \cite{BilleGLW15},
where it was proved that from a~tree $t$ of size $n$, whose minimal DAG has size $m$ (measured in number
of edges in the DAG), one can 
construct in linear time a~top dag for $t$ of size $\mathcal{O}(m \cdot \log n)$ and depth 
$\mathcal{O}(\log n)$.
It remained open whether one can get rid of the factor $\log n$ in the size bound.
For the specific top dag constructed in \cite{BilleGLW15}, it was shown in \cite{BilleFG17} that the factor $\log n$
in the size bound $\mathcal{O}(m \cdot \log n)$ cannot be avoided. On the other hand, our results
yield another top dag of size $\mathcal{O}(m)$ and depth $\mathcal{O}(\log n)$. To see this note that one
can easily convert the minimal DAG of $t$ into a~top dag of roughly the same size, which can then be balanced.
This also gives an~alternative proof of a~result from \cite{DudekG18}, according to which
one can construct in linear time a~top dag of size $\mathcal{O}(n / \log_\sigma n)$ and depth $\mathcal{O}(\log n)$
for a~given tree of size $n$ containing $\sigma$ many different node labels.


\paragraph{Balancing circuits over algebras}
Our balancing results for SSLPs, top dags, forests straight-line programs
and tree straight-line programs are all instances of a~general balancing
result that applies to a~large class of circuits over algebraic structures.
To see the connection between circuits and straight-line programs,
consider SSLPs as an~example. An~SSLP is the same thing as a~bounded fan-in circuit over a~free monoid.
The~circuit gates compute the concatenation of their inputs and correspond to the variables of the 
SSLP. In general, for any algebra one can define straight-line programs,
which coincide with the classic notion of a~circuit.

The~definition of a~class of algebras,
to which our general balancing technique applies,
uses {\em unary linear term functions},
which were also used for instance in the context of efficient parallel evaluation 
of expression trees~\cite{MillerT97}. 
Fix an~algebra $\A$ (a set together with finitely many operations of possibly different arities).
For some of our applications we have to allow multi-sorted algebras that have several carrier sets (think 
for instance of a~vector space, where the two carrier sets are an~abelian group and a~field of scalars).
A~unary linear term function is a~unary function on $\A$ that is computed by a~term (or algebraic expression) that contains
a single variable $x$ (which stands for the function argument) and, moreover, $x$ occurs exactly once
in the term. For instance, a~unary linear term function over a~commutative ring is of the form $x \mapsto ax+b$
for ring elements $a,b$. A~{\em subsumption base} for an~algebra $\A$ is, roughly speaking, a~finite set $C(\A)$ of unary linear term
functions that are described by terms with parameters such that every unary linear term function can be obtained from one of the 
terms in $C(\A)$ by instantiating the parameters. In the above example for a~commutative ring the set $C(\A)$ 
consists of the single term $ax+b$, where $a$ and $b$ are the parameters.

Our general balancing result needs one more concept, namely the {\em unfolded size} of a~circuit $\mathcal{G}$.
It can be conveniently defined as follows: we replace in $\mathcal{G}$ every input gate by the number $1$, and 
we replace every internal gate by an~addition gate. The~unfolded size of $\mathcal{G}$ is the value of this
additive circuit.
In other words, this is the size of the tree obtained by unravelling $\mathcal{G}$ into a~tree.
Note that the size of this unfolding can be exponential in the circuit size.
Now we can state the general balancing result in a slightly informal way (the precise statement can be found in Theorem~\ref{cor-balance-dag}):

\begin{thm}[informal statement] \label{thm-general-balancing}
Let $\A$ be a~multi-sorted algebra with a~finite number of operations (of arbitrary arity) 
such that $\A$ has a~finite subsumption base.
Given a~circuit $\calG$ over $\A$ whose unfolded size is $n$,
one can compute in time $\mathcal{O}(|\calG|)$
a circuit $\mathcal{H}$ evaluating to the same element of $\A$
such that $|\mathcal{H}| \in \mathcal{O}(|\calG|)$ and $\depth(\mathcal{H}) \in \mathcal{O}(\log n)$.
\end{thm}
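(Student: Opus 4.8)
The plan is to reduce the general statement to a clean combinatorial core about balancing directed acyclic graphs whose sinks are labelled by unary linear term functions, and then apply a known circuit-balancing result for such ``linear'' DAGs. First I would observe that the standard Brent/Spira-style balancing of expression trees works for trees whose internal nodes are labelled by \emph{unary} linear operations (path compression): given a tree with $n$ leaves one can find a node $v$ whose subtree has between $n/3$ and $2n/3$ leaves, split the tree at $v$, recurse on the subtree at $v$ and on the ``context'' obtained by deleting that subtree (which is again a unary linear term function of the hole), and glue the two logarithmic-depth results together. The subsumption base is exactly what is needed to keep the context represented by a \emph{constant-size} term: whenever composing two unary linear term functions, the result is again a unary linear term function, and by the finite subsumption base it can be rewritten into one of the finitely many parametrised shapes in $C(\A)$, with the parameters themselves being terms that we keep as new gates. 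This is the step that makes ``constant blow-up'' possible rather than $\log n$ blow-up, and it is the conceptual heart of the argument.

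Next I would lift this from trees to DAGs (circuits). The circuit $\calG$ has unfolded size $n$ but size $|\calG|$ possibly much smaller. The idea, following the line of work referenced in the excerpt (Rytter; simple grammar balancing), is to decompose $\calG$ into $\mathcal{O}(|\calG|)$ many ``tree-like'' pieces: assign to each gate $g$ a weight equal to its unfolded size (computable in linear time by the additive-circuit evaluation described just before the theorem), and use a heavy-path / centroid decomposition of $\calG$ so that every gate lies on $\mathcal{O}(\log n)$ light edges. One processes gates in reverse topological order, and for each gate produces a balanced sub-circuit of depth $\mathcal{O}(\log(\text{unfolded size of } g))$ while referring back (as black boxes) to the already-balanced sub-circuits of a constant number of ``special'' descendants. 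The unary-linear-context trick is applied along heavy paths: a heavy path in the unravelled tree is a composition of unary linear term functions with one hole, which collapses via $C(\A)$ into $\mathcal{O}(1)$ gates per ``level'' of the balancing recursion. Summing the sizes of all pieces telescopes to $\mathcal{O}(|\calG|)$ because each original gate contributes $\mathcal{O}(1)$ to $\mathcal{O}(1)$ pieces after the heavy-path grouping, and the depth bound follows because along any root-to-input path one crosses $\mathcal{O}(\log n)$ light edges, each contributing $\mathcal{O}(\log n)$ depth from its balanced piece — this naively gives $\mathcal{O}(\log^2 n)$, so the real work is to be more careful and charge depth additively against the logarithm of the unfolded weight, which drops by a constant factor across each light edge, yielding a geometric sum and hence $\mathcal{O}(\log n)$.

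For the running time, everything above is designed to be linear: computing unfolded sizes is one pass over the additive circuit (using the $\Omega(\log n)$-word-size assumption for the arithmetic on numbers up to $n$), the heavy-path decomposition is linear, and each of the $\mathcal{O}(|\calG|)$ pieces is handled in time proportional to its own (small) size because the subsumption base gives us $\mathcal{O}(1)$-time rewriting of composed unary linear term functions — the finitely many shapes in $C(\A)$ can be composed by table lookup, producing the new parameter terms. The output circuit $\mathcal{H}$ evaluates to the same element of $\A$ because every transformation (path compression via $C(\A)$, tree rebalancing, splicing in balanced descendant sub-circuits) is value-preserving by construction.

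The main obstacle I anticipate is getting the depth bound to be $\mathcal{O}(\log n)$ rather than $\mathcal{O}(\log^2 n)$: the interaction between the DAG-level heavy-path decomposition (which by itself contributes a $\log n$ factor in the number of pieces along a path) and the tree-level Brent-style balancing (which contributes another $\log n$ per piece) must be arranged so that depths add up in a single geometric series in $\log(\text{unfolded weight})$ instead of multiplying. Concretely, one must balance a whole heavy path \emph{together with} the light subtrees hanging off it in one recursive step, rather than balancing the path first and then the subtrees, so that the recursion depth is governed solely by how fast the unfolded weight shrinks. Making this precise — and checking that the constant-size context terms coming from $C(\A)$ do not accumulate depth — is where the bulk of the technical care will go; the rest is bookkeeping that the finite subsumption base has been designed to make routine.
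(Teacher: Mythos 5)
Your overall architecture is the paper's: weight every gate by its unfolded size, decompose the DAG so that any root-to-sink path crosses only $\mathcal{O}(\log n)$ decomposition boundaries, balance each piece in a weight-sensitive way so that depths telescope against drops of the logarithm of the weight instead of multiplying, and use the finite subsumption base to collapse compositions of unary linear term functions into constant-size circuit gadgets. The step that does not go through as you state it is the decomposition itself. A heavy-path (or centroid) decomposition computed on the DAG is \emph{not} a partition into node-disjoint paths: each gate has at most one outgoing heavy edge, but it may be the heavy child of many parents, so the heavy edges form a forest of in-trees (the ``heavy path forest'' of \cite{BilleLRSSW15}). Consequently the object to be balanced per piece is not a weighted string of contexts but a node-weighted tree in which every node needs a balanced representation of its own path to the piece's sink; your size accounting (``each gate contributes $\mathcal{O}(1)$ to $\mathcal{O}(1)$ pieces'') and your plan to treat a piece as a single composition of ULTFs both break at the merge nodes. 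This is exactly the obstacle the paper's new \emph{symmetric centroid decomposition} is designed to remove: an edge $(u,i,v)$ is kept only if $\lfloor\log_2\pi(r,\cdot)\rfloor$ and $\lfloor\log_2\pi(\cdot,W)\rfloor$ agree at $u$ and $v$, which forces in-degree and out-degree at most one inside the decomposition while still guaranteeing at most $2\log_2 n$ non-decomposition edges on every root-to-sink path (Lemma~\ref{lem-decomposition}). Without this (or an explicit generalization of your per-piece construction from paths to in-trees), the proposal has a genuine hole.

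Two smaller points. The mechanism that avoids $\mathcal{O}(\log^2 n)$ is a telescoping, not a geometric, sum: the paper proves a weight-balanced ``all suffixes'' construction (Proposition~\ref{prop-all-suffixes}) in which the derivation path from a suffix variable $S_i$ to a letter $a$ has length at most $3+2(\log_2\|S_i\|-\log_2\|a\|)$; summing these along a root-to-leaf path gives $\mathcal{O}(k)+2\log_2 n$ with $k\le 2\log_2 n$ pieces. A plain Brent/Spira split does not by itself give you this additive form, so this lemma (or an equivalent) has to be proved explicitly. Finally, the paper separates the two ingredients you interleave: it first balances the circuit into a tree straight-line program over the functional extension $\hat\T(\Gamma)$, with no use of the algebra (Theorem~\ref{thm-balance-TSLP}), and only afterwards applies the subsumption base once, in Lemma~\ref{lemma-tslp-to-dag}, to convert the TSLP back into a circuit over $\A$ with constant blow-up in size and depth; your on-the-fly rewriting into the shapes of $C(\A)$ is the same idea, but the factored version is easier to verify and is what makes the linear running time routine.
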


Theorems~\ref{thm-balance-SSLP} and \ref{thm-balancing-TSLP} are immediate corollaries of Theorem~\ref{thm-general-balancing}.
Theorem~\ref{thm-general-balancing} can be also applied to not necessarily commutative
semirings, as every semiring has a~finite subsumption base.
Hence, for every semiring circuit one can reduce with a~linear size blow-up the depth to 
$\mathcal{O}(\log n)$, where $n$ is the size of the circuit unfolding. 

Note that in the depth bound $\mathcal{O}(\log n)$ in our balancing result for string straight-line programs
(Theorem~\ref{thm-balance-SSLP}), $n$ refers to the length of the produced string. A~string straight-line program
can be viewed as a~circuit for a~non-commutative semiring circuit that produces a~single monomial (the symbols in the string
correspond to the non-commuting variables). If one considers arbitrary circuits over non-commutative semirings (that produce
a sum of more than one monomial), depth reduction is not possible in general by a~result of 
Kosaraju \cite{Kosaraju90}. For circuits over commutative semirings depth reduction is possible 
by a~seminal result of Valiant, Skyum, Berkowitz and Rackoff 
\cite{ValiantSBR83}: for any commutative semiring, every circuit of size $m$ and formal 
degree $d$ can be transformed into an~
equivalent circuit of depth $\mathcal{O}(\log m \log d)$ and size polynomial in $m$ and $d$. This result led to many further investigations
on depth reduction for bounded degree circuits over various classes of commutative as well as non-commutative semirings \cite{AllenderJMV98}. 
If one drops the restriction to bounded degree circuits, then depth reduction gets even harder. For general Boolean circuits, the best known result states that every Boolean circuit of size 
$m$ is equivalent to a~Boolean circuit of depth $\mathcal{O}(m / \log m)$ \cite{PatersonV76}.

\paragraph{Proof strategy}
The~proof of Theorem~\ref{thm-balance-SSLP}
consists of two main steps (the general result Theorem~\ref{thm-general-balancing}
is shown similarly). Take an~SSLP $\calG$ for the string $s$ of length $n$ and let 
$m$ be the size of $\calG$. 
We consider the derivation tree $t$ for $\calG$; it has size $\mathcal{O}(n)$.
The~SSLP $\calG$ can be viewed as a~DAG for $t$ of size $m$.
We decompose this DAG into node-disjoint paths such that each path from the root to a~leaf intersects $\mathcal{O}(\log n)$ paths from the decomposition
(Section~\ref{sec-centroid}).
Each path from the decomposition is then viewed as a~string of integer-weighted symbols,
where the weights 
are the lengths of the strings derived from nodes that branch off from the path.
For this weighted string we construct an~SSLP of linear size that produces all suffixes of the path 
in a~weight-balanced way (Section~\ref{sec-suffixes}).
Plugging these SSLPs together yields the final balanced SSLP.

Some of the concepts of our construction can be traced back to the area of parallel algorithms:
the path decomposition for DAGs from Section~\ref{sec-centroid} is related
to the centroid decomposition of trees~\cite{ColeV88}, where it is the key technique in several 
parallel algorithms on trees. Moreover, the SSLP of linear size that produces all suffixes of a~weighted string
with (Section~\ref{sec-suffixes}) can be seen as a~weight-balanced version of the 
optimal prefix sum algorithm.

For the general result Theorem~\ref{thm-general-balancing} we need another ingredient:
when the above construction is used for circuits over algebras,
the corresponding procedure produces a~tree straight-line program for the unfolding of the circuit.
We show that if the underlying algebra $\A$ has a~finite subsumption base,
then one can compute from a~tree straight-line program
an equivalent circuit over $\A$. 
Moreover, the size and depth of this circuit 
are linearly bounded in the size and depth of the tree straight-line program.
This construction was used before for the special cases of semirings
and regular expressions~\cite{GHJLN17,GL18}.

\section{Part I: Balancing of string straight-line programs} \label{sec-part-I}

The goal of the first part of the paper is to prove Theorem~\ref{thm-balance-SSLP}.
This result can be also derived from our general balancing theorem (Theorem~\ref{thm-general-balancing}), which
will be shown in the second part of the paper (Section~\ref{sec-part-II}). The techniques that we introduce in part I
will be also needed in Section~\ref{sec-part-II}. Moreover, we believe that it helps the reader to first
see the simpler balancing procedure for string straight-line programs before going into the details of the general
balancing result. Finally, the reader who is only interested in SSLP balancing can ignore part II completely.

We start with the afore-mentioned new decomposition technique for DAGs that we call 
symmetric centroid decomposition. The technical heart of our string balancing procedure is the  
linear-size SSLP that produces all suffixes of the path 
in a weight-balanced way (Section~\ref{sec-suffixes}).
Section~\ref{sec-suffixes} concludes the proof of Theorem~\ref{thm-balance-SSLP}, and Section~\ref{sec-applications} presents applications.

\subsection{The symmetric centroid decomposition of a DAG} \label{sec-centroid}

We start with a new decomposition of a DAG (directed acyclic graph) into disjoint paths. We believe that this decomposition might have
further applications. For trees, several decompositions into disjoint paths with the additional property that every path
from the root to a leaf only intersects a logarithmic number of paths from the decomposition exist. Examples are 
the heavy path decomposition \cite{HarelT84} and centroid decomposition \cite{ColeV88}. These decompositions can be also defined
for DAGs but a technical problem is that the resulting paths are no longer disjoint and form, in general, a subforest
of the DAG, see e.g.~\cite{BilleLRSSW15}. 

\definecolor{dgreen}{rgb}{0,0.6,0}
\newcommand{\dgreen}{\color{dgreen}}

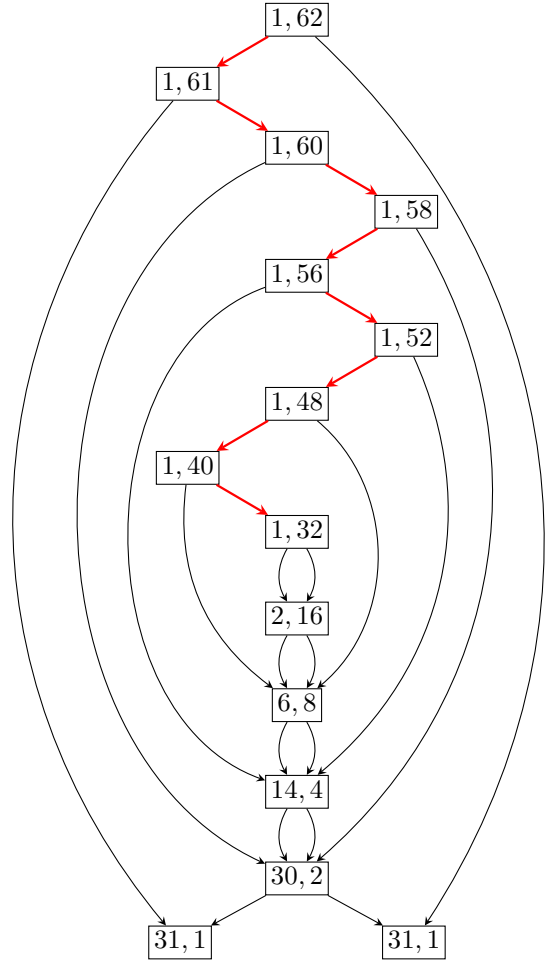
\begin{figure}[t]
		\centering
		\tikzstyle{lts} = [->, >=stealth]
		\tikzstyle{state} = [draw, inner sep = .7mm]
		
		\scalebox{1}{
			\begin{tikzpicture}[lts]
			
			\node [state] (0) {$1,62$};
			\node [state, below left = .4cm and .6cm of 0] (1) {$1,61$};
			\node [state, below right = .4cm and .6cm of 1] (2) {$1,60$};
			\node [state, below right = .4cm and .6cm of 2] (3) {$1,58$};
			\node [state, below left = .4cm and .6cm of 3] (4) {$1,56$};
			\node [state, below right = .4cm and .6cm of 4] (5) {$1,52$};
			\node [state, below left = .4cm and .6cm of 5] (6) {$1,48$};
			\node [state, below left = .4cm and .6cm of 6] (7) {$1,40$};
			\node [state, below right = .4cm and .6cm of 7] (8) {$1,32$};
			
			\node [state, below = .7cm of 8] (9) {$2,16$};
			\node [state, below = .7cm of 9] (10) {$6,8$};
			\node [state, below = .7cm of 10] (11) {$14,4$};					
			\node [state, below =  .7cm of 11] (12) {$30,2$};
			
			\node [state, below left = .4cm and .7cm of 12] (13a) {$31,1$};
			\node [state, below right = .4cm and .7cm of 12] (13b) {$31,1$};
			
			\draw (0) [line width = .3mm,red] to node[above]{} (1);
			\draw (1) [line width = .3mm,red] to node[above]{} (2);
			\draw (2) [line width = .3mm,red] to node[above]{} (3);
			\draw (3) [line width = .3mm,red] to node[above]{} (4);
			\draw (4) [line width = .3mm,red] to node[above]{} (5);
			\draw (5) [line width = .3mm,red] to node[above]{} (6);
			\draw (6) [line width = .3mm,red]  to node[above]{} (7);
			\draw (7) [line width = .3mm,red] to node[above]{} (8);
			\draw (8) to[bend right=30] node[left=-.5mm]{} (9);
			\draw (8) to[bend left=30] node[right=-.5mm]{} (9);
			\draw (9) to[bend right=30] node[left=-.5mm]{} (10);
			\draw (9) to[bend left=30] node[right=-.5mm]{} (10);
			\draw (10) to[bend right=30] node[left=-.5mm]{} (11);
			\draw (10) to[bend left=30] node[right=-.5mm]{} (11);
			\draw (11) to[bend right=30] node[left=-.5mm]{} (12);
			\draw (11) to[bend left=30] node[right=-.5mm]{} (12);
			
			\draw (12) to node[above]{} (13a);
			\draw (12) to node[above]{} (13b);
			
			\draw (0) to[bend left=40] node[right=-.5mm]{} (13b);
			\draw (1) to[bend right=40] node[left=-.5mm]{} (13a);
			\draw (2) to[bend right=65] node[left=-.5mm]{} (12);
			\draw (3) to[bend left=40] node[right=-.5mm]{} (12);
			\draw (4) to[bend right=70] node[left=-.5mm]{} (11);
			\draw (5) to[bend left=37] node[right=-.5mm]{} (11);
			\draw (6) to[bend left=50] node[right=-.5mm]{} (10);
			\draw (7) to[bend right=30] node[left=-.5mm]{} (10);
			\end{tikzpicture}
		}
		\caption{A DAG and its symmetric centroid decomposition.}
		\label{fig:scp}
	\end{figure}
	
Our new decomposition can be seen as a symmetric form of the centroid decomposition of \cite{ColeV88}.
Consider a DAG $\D = (V,E)$ with node set $V$ and the set of multi-edges $E$, i.e.,
$E$ is a finite subset of $V \times \N \times V$ 
such that $(u, d, v) \in E$ implies that for every $1 \le i<d$ there exists $v' \in V$ with $(u,i,v') \in E$.
Intuitively, $(u,d,v)$ is the $d$-th outgoing edge of $u$. We assume 
that there is a single root node $r \in V$, i.e., $r$ is the unique node with no incoming edges. Hence, all
nodes are reachable from $r$. A path from $u \in V$ to $v \in V$ is a sequence of edges
$(v_0, d_1, v_1), (v_1, d_2, v_2), \ldots, (v_{p-1}, d_p, v_p)$ where $u = v_0$ and $v = v_p$. 
We also allow the empty path from $u$ to $u$. 
With $\pi(u,v)$ we denote the number of paths from $u$ to $v$, and for $V' \subseteq V$ let
$\pi(u,V') = \sum_{v \in V'} \pi(u,v)$.
Let $W \subseteq V$ be the set of sink nodes of $\D$, i.e., those nodes without outgoing edges, 
and let $n(\D) = \pi(r,W)$. This is the number of leaves in the tree obtained by unfolding  $\D$ into a tree.
With a node $v \in V$ we assign the pair
$\lambda_{\D}(v) = (\lfloor \log_2 \pi(r,v) \rfloor,  \lfloor \log_2 \pi(v,W) \rfloor)$.
If $\lambda_{\D}(v) = (k,\ell)$, then 
$k, \ell \leq \lfloor \log_2 n(\D) \rfloor$ because $\pi(r,v)$ and $\pi(v,W)$ are both bounded by $n(\D)$.
Let us now define the edge set
$E_{\text{scd}}(\D)$ (``scd'' stands for symmetric centroid decomposition) as
$E_{\text{scd}}(\D) = \{ (u,i,v) \in E \mid \lambda_{\D}(u) = \lambda_{\D}(v) \}$.

\begin{ex}
\label{ex:scd}
Figure~\ref{fig:scp} shows the symmetric centroid decomposition of a DAG. The numbers in a node $v$
are the values $\pi(r,v)$ and $\pi(v,W)$ where $r$ is the root and $W$ consists of the two sink nodes. 
Edges that belong to a symmetric centroid path are drawn in red. Note that the 9 topmost nodes
form a symmetric centroid path since $\lfloor \log_2 \pi(r,v) \rfloor = 0$ and $\lfloor \log_2 \pi(v,W) \rfloor = 5$
for each of these nodes.
In this example the symmetric centroid decomposition consists of one 
path of length 8 (number of edges); all other nodes form 
symmetric centroid paths of length zero.
\end{ex}

\begin{lem}   \label{lem-decomposition}
Let $\D = (V,E)$ be a DAG with $n = n(\D)$.
Then every node has at most one outgoing and at most one incoming edge from $E_{\text{scd}}(\D)$.
Furthermore, every path from the root $r$ to a sink node contains at most $2  \log_2 n$ edges
that do not belong to $E_{\text{scd}}(\D)$.
\end{lem}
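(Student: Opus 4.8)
The first assertion is the easier one, so I would dispatch it first. Suppose a node $u$ had two outgoing edges $(u,i,v)$ and $(u,j,w)$ in $E_{\text{scd}}(\D)$ with $v \ne w$ (the case $v = w$ is handled by multi-edge bookkeeping — distinct $i \ne j$ but same target still forces the same $\lambda$, so one has to argue these get merged or count once). The point is that $v$ and $w$ both receive a path from $r$ through $u$, and both have $\lambda_{\D}(v) = \lambda_{\D}(u) = \lambda_{\D}(w)$. Writing $\lambda_{\D}(u) = (k,\ell)$, this means $\lfloor \log_2 \pi(u,W) \rfloor = \ell$, but $\pi(u,W) \ge \pi(v,W) + \pi(w,W) \ge 2^\ell + 2^\ell = 2^{\ell+1}$, forcing $\lfloor \log_2 \pi(u,W)\rfloor \ge \ell+1$, a contradiction. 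Symmetrically, if $u$ has two incoming scd-edges $(v,i,u)$ and $(w,j,u)$ from distinct $v,w$, then $\pi(r,u) \ge \pi(r,v) + \pi(r,w) \ge 2^{k}+2^{k} = 2^{k+1}$ where $(k,\ell) = \lambda_{\D}(u)$, again contradicting $\lfloor \log_2 \pi(r,u)\rfloor = k$. I would need to be a little careful about the multi-edge convention — a node may have several outgoing edges pointing to the \emph{same} target $v$ — but then $\lambda_{\D}(v)$ is a single value and these parallel edges are naturally identified in the decomposition, or one simply observes the claim is about targets, not edge-slots.

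For the second assertion, fix a path $(v_0,d_1,v_1),\dots,(v_p,d_{p+1},v_{p+1})$ from $r = v_0$ to a sink $v_{p+1} \in W$, and consider the sequence of pairs $\lambda_{\D}(v_0), \lambda_{\D}(v_1), \dots, \lambda_{\D}(v_{p+1})$. An edge $(v_i, d_{i+1}, v_{i+1})$ lies outside $E_{\text{scd}}(\D)$ precisely when $\lambda_{\D}(v_i) \ne \lambda_{\D}(v_{i+1})$. So I need to bound the number of indices $i$ at which the $\lambda$-value changes along the path. The key monotonicity facts are: as we move along a path \emph{away} from the root, $\pi(r,\cdot)$ is non-decreasing (a path to $v_{i+1}$ extends a path to $v_i$ via the chosen edge, so $\pi(r,v_{i+1}) \ge \pi(r,v_i)$), hence the first coordinate $\lfloor \log_2 \pi(r,v_i)\rfloor$ is non-decreasing in $i$; dually, $\pi(\cdot, W)$ is non-increasing along the path, so the second coordinate $\lfloor \log_2 \pi(v_i,W)\rfloor$ is non-increasing in $i$. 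Both coordinates start and stay in the range $[0, \lfloor \log_2 n\rfloor]$. Therefore the first coordinate can increase at most $\log_2 n$ times and the second can decrease at most $\log_2 n$ times; every index where $\lambda$ changes is an index where at least one of these two monotone quantities strictly changes, so there are at most $2\log_2 n$ such indices, i.e., at most $2\log_2 n$ edges off $E_{\text{scd}}(\D)$ on the path.

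The only genuine subtlety — and the step I'd expect to need the most care — is the monotonicity of $\pi(r,\cdot)$ and $\pi(\cdot,W)$ along a path, together with making sure this interacts correctly with the multi-edge formalism and the definition of $\pi$ as a \emph{count} of paths (not a boolean reachability). For $\pi(r,\cdot)$: given the edge $e = (v_i, d_{i+1}, v_{i+1})$ on our chosen path, the map sending a path $r \rightsquigarrow v_i$ to that path followed by $e$ is an injection into the set of paths $r \rightsquigarrow v_{i+1}$, giving $\pi(r,v_i) \le \pi(r,v_{i+1})$. The argument for $\pi(v_i,W) \ge \pi(v_{i+1},W)$ is the mirror image: prepend $e$ to each path $v_{i+1}\rightsquigarrow W$. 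Once these two inequalities are in hand, the floor-of-log monotonicity and the counting of level changes are immediate, and I would simply remark that $\pi(r,v) \le n(\D)$ and $\pi(v,W)\le n(\D)$ for every $v$ (each path $r \rightsquigarrow v$ extends to at least one path $r \rightsquigarrow W$, since every node reaches a sink in a DAG, and distinct prefixes give distinct full paths) to pin down the range $[0,\lfloor\log_2 n\rfloor]$ for both coordinates.
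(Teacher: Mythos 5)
Your proposal is correct and follows essentially the same route as the paper's proof: the first claim via the path-count inequality $\pi(u,W)\ge\pi(v,W)+\pi(w,W)$ (resp.\ $\pi(r,u)\ge\pi(r,v)+\pi(r,w)$) forcing the floor of the logarithm to jump, and the second claim by observing that $\lfloor\log_2\pi(r,\cdot)\rfloor$ is non-decreasing and $\lfloor\log_2\pi(\cdot,W)\rfloor$ is non-increasing along any root-to-sink path, each ranging in $[0,\lfloor\log_2 n\rfloor]$, so at most $2\log_2 n$ edges can witness a change of $\lambda_{\D}$. The only loose end, the parallel-edge case $v=w$ with $i\neq j$, is settled exactly as in the paper: the inequality $\pi(u,W)\ge\pi(v,W)+\pi(w,W)$ you already use remains valid because paths starting with the two distinct multi-edges are distinct, so no ``merging'' or ``counting once'' is needed and your contradiction goes through verbatim.
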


\begin{proof}
Consider a node $v\in V$ with two different outgoing edges $(u,i,v), (u,j,w) \in E_{\text{scd}}(\D)$.
Hence, $\lambda(u) = \lambda(v) = \lambda(w)$. Let $\lambda(u) = (k,\ell)$.
If $W$ is the set of sinks, we get 
$\pi(u,W) \geq \pi(v,W) + \pi(w,W)$ (since we consider paths of multi-edges, this inequality also holds for $v=w$). 
W.l.o.g.~assume that $\pi(w,W) \ge \pi(v,W)$ and thus
$\pi(u,W) \geq 2 \pi(v,W)$.
We get 
$$
\lfloor \log_2 \pi(u,W) \rfloor  \ge 1+\lfloor \log_2 \pi(v,W) \rfloor   = 1+\lfloor \log_2 \pi(u,W)\rfloor,
$$
where the last equality follows from $\lambda(u) = \lambda(v)$. This is a contradiction and proves
the claim for outgoing edges. Incoming edges are treated similarly,
this time using $\pi(r,v)$.

For the second claim of the Lemma, consider a 
path
\[
(v_0, d_1, v_1), (v_1, d_2, v_2), \ldots, (v_{p-1}, d_p, v_p),
\]
where $v_0$ is the root
and $v_p$ is a sink. Let $\lambda(v_i) = (k_i,\ell_i)$. We must have $k_i \le k_{i+1}$ and 
$\ell_i \ge \ell_{i+1}$ for all $0 \le i \le p-1$. Moreover, $k_0 = \ell_p = 0$ and $\ell_0, k_p \le \lfloor \log_2 n \rfloor$.
Consider now an edge $(v_i, d_i, v_{i+1}) \in E \setminus E_{\text{scd}}(\D)$.
Since $\lambda(v_i) \neq \lambda(v_{i+1})$, we have $k_i < k_{ji+1}$  or $\ell_i > \ell_{i+1}$.
Hence, there can be at most $2 \lfloor \log_2 n \rfloor  \le 2 \log_2 n$ edges from $E \setminus E_{\text{scd}}(\D)$
on the path.
\end{proof}
Lemma~\ref{lem-decomposition} implies that the subgraph $(V, E_{\text{scd}}(\D))$
is a disjoint union of possibly empty paths, called {\em symmetric centroid paths} of $\D$.
It is straight-forward to compute the edge set $E_{\text{scd}}(\D)$ in time $\mathcal{O}(|\D|)$,
where $|\D|$ is defined as the number of edges of the DAG: 
By traversing $\D$ in both directions (from the root to the sinks and from the sinks to the root)
one can compute all pairs $\lambda(v)$ for $v \in V$ in linear time.

One can use Lemma~\ref{lem-decomposition} in order to simplify the original proof of Theorem~\ref{cor-random-access}
from \cite{BilleLRSSW15}: in \cite{BilleLRSSW15}, the authors use the heavy-path decomposition of the derivation tree of an 
SSLP. In the SSLP (viewed
as a DAG that defines the derivation tree), these heavy paths lead to a forest, called the heavy path forest~\cite{BilleLRSSW15}. 
The important property used in \cite{BilleLRSSW15} is the fact that any path from the root of the DAG to a sink node contains only
$\mathcal{O}(\log n)$ edges that do not belong to a heavy path, where $n$ is the length of string produced by the SSLP. 
Using Lemma~\ref{lem-decomposition},
one can replace this heavy path forest by the decomposition into symmetric centroid paths. The fact that the latter is a disjoint union
of paths in the DAG simplifies the technical details in \cite{BilleLRSSW15} a lot. 
On the other hand, Theorem~\ref{cor-random-access} follows directly from 
Theorem~\ref{thm-balance-SSLP}, see Section~\ref{sec-applications}.

\subsection{Straight-line programs and suffixes of weighted strings} \label{sec-suffixes}

Given an alphabet of symbols $\Sigma$, $\Sigma^*$ denotes the set of all finite words
over the alphabet $\Sigma$, including the empty word $\varepsilon$. The set of non-empty
words is denoted by $\Sigma^+ = \Sigma^* \setminus \{\varepsilon\}$.
The length of a word $w$ is denoted with $|w|$.

 Let $\Sigma$ be a finite alphabet of terminal symbols.
A string straight-line program (SSLP for short) over the alphabet
$\Sigma$ is a triple $\calG = (\V,\rho,S)$, where $\V$ is a finite set of variables,
$S \in \V$ is the start variable, and $\rho \colon \V \to (\Sigma \cup \V)^*$ 
(the right-hand side mapping) has the property that the binary relation
$E(\calG) = \{ (X,Y) \in \V \times \V \colon Y \text{ occurs in } \rho(X) \}$ is acyclic.
This allows to define for every variable $X\in \V$ a string $\valXG{\calG}{X}$ 
as follows: if $\rho(X) = u_0 X_1 u_1 X_2 \cdots u_{n-1} X_n u_n$
with $u_0, u_1, \ldots, u_n \in \Sigma^*$ and $X_1, \ldots, X_n \in \V$ then
$\valXG{\calG}{X} = u_0 \valXG{\calG}{X_1} u_1 \valXG{\calG}{X_2} \cdots u_{n-1}
\valXG{\calG}{X_n} u_n$. We omit the subscript $\calG$ if $\calG$ is clear from the context. 
Finally, we define $\valX{\calG} = \valX{S}$. 

An SSLP $\calG$ can be seen as a context-free grammar that produces 
the single string $\valX{\calG}$. Quite often, one assumes that all right-hand
sides $\rho(X)$ are from $\Sigma \cup \V\V$. This corresponds to the Chomsky
normal form. For every SSLP $\calG$ with $\valX{\calG} \neq \varepsilon$ 
one can construct in linear time an equivalent SSLP in Chomsky normal form
by replacing every right-hand side by a balanced binary derivation tree.

Fix an SSLP $\calG = (\V,\rho,S)$. We define the size $|\calG|$ of $\calG$ as 
$\sum_{X \in \V} |\rho(X)|$. Let $d$ be the length of a longest path in the 
DAG $(\V,E(\calG))$ and $r = \max\{ |\rho(X)| \colon X \in \V\}$. We define 
the depth of $\calG$ as $\depth(\calG) = d \cdot  \lceil \log_2 r\rceil$. These definitions ensure
that depth and size only increase by fixed constants 
when an SSLP is transformed into Chomsky normal form. Note that for an SSLP in Chomsky normal form,
the definition of the depth simplifies to $\depth(\calG) = d$.

A {\em weighted string} is a string $s \in \Sigma^*$ equipped with a weight function
$\| \cdot \| \colon \Sigma \to \N \setminus \{0\}$, which is extended to a homomorphism
$\| \cdot \| \colon \Sigma^* \to \N$ by $\| a_1 a_2 \cdots a_n \| = \sum_{i=1}^n \| a_i \|$.
If $X$ is a variable in an SSLP $\calG$, we also write $\|X\|$ for the weight of the string $\valXG{\calG}{X}$
derived from $X$.
Moreover, when we speak of suffixes of a string, we always mean non-empty suffixes.

\begin{prop}
	 \label{prop-all-suffixes}
	For every non-empty weighted string $s$ of length $n$ one can construct in linear time an SSLP $\calG$ 
	with the following properties:
	\begin{itemize}
        \item $\calG$ contains at most $3n$ variables,
        \item all right-hand sides of $\calG$ have length at most 4,
        \item $\calG$ contains \emph{suffix variables} $S_1, \dots, S_n$ producing all suffixes of $s$, and
        \item every path from $S_i$ to some terminal symbol $a$ in the derivation tree of $\calG$
	has length at most $3 + 2(\log_2 \|S_i\|  - \log_2 \|a\|)$.
        \end{itemize}
\end{prop}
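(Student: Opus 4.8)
The plan is to build the SSLP by a divide-and-conquer recursion on the weighted string $s = a_1 a_2 \cdots a_n$, following the template of an optimal prefix-sum circuit but splitting at a \emph{weight-balanced} position rather than at the midpoint index. Write $\|s\| = \sum_{i=1}^n \|a_i\|$. Choose a split index $j$ so that both $\|a_1 \cdots a_{j-1}\|$ and $\|a_{j+1} \cdots a_n\|$ are at most $\tfrac12 \|s\|$ — equivalently, $j$ is the least index with $\|a_1 \cdots a_j\| > \tfrac12\|s\|$. Recurse on the left block $L = a_1 \cdots a_{j-1}$ and the right block $R = a_{j+1} \cdots a_n$, obtaining variables that produce all suffixes of $L$ and all suffixes of $R$. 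A suffix of $s$ that starts inside $R$ is already produced by the right recursion. A suffix of $s$ that starts at position $i \le j$ equals $(a_i \cdots a_j) \cdot R_{\text{whole}}$, where $R_{\text{whole}}$ is the variable for the whole right block; and $a_i \cdots a_j = (\text{suffix of } L \text{ starting at } i) \cdot a_j$ for $i < j$, while for $i = j$ it is just $a_j$. So each such suffix variable $S_i$ has a right-hand side of the form $S_i^L a_j R_{\text{whole}}$ (length $3$, or length $4$ if one wants to keep $a_j$ and the whole-block concatenation separate, but $3$ suffices) where $S_i^L$ is the corresponding suffix variable of $L$. The "whole block" variable for $s$ is then $S_1$, and similarly the whole-block variables for $L$ and $R$ come for free from the recursion (the longest suffix). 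This gives all $n$ suffix variables. I would additionally introduce, at each recursion node, a constant number of auxiliary variables to hold the whole left block and whole right block if they are not already suffix variables — but in fact the whole block is always the longest suffix, so no extra variables are needed beyond a bounded overhead, and the $3n$ bound follows by checking that the recursion creates $O(1)$ variables per original symbol (amortized): each leaf contributes one variable $a_i$-producing variable, each internal node contributes the reindexed suffix variables of the left part plus a handful of glue variables.

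For the variable count: I would argue that the recursion tree has $n$ leaves (one per symbol), hence $n-1$ internal nodes; at an internal node splitting $s$ into $L, R$ we create new variables only for the suffixes of $s$ that are \emph{not} suffixes of $R$, i.e. the suffixes starting at positions $1, \dots, j$, which is $j = |L| + 1$ many — but $|L|$ of these can be charged to the symbols of $L$ and one to $a_j$, so in total over the whole recursion each symbol is charged $O(1)$ times. Being slightly careful with the bookkeeping (and allowing right-hand sides of length up to $4$ to absorb edge cases where a block is empty or a singleton) yields at most $3n$ variables. The linear running time follows because the split index at each node can be found by a linear scan, and $\sum$ of block sizes over one level of the recursion is $\le n$, but to get truly linear total time I would precompute prefix sums of the weights once in $O(n)$ time (using the $\Omega(\log n)$-word assumption for arithmetic on numbers of order $\|s\|$) and then each split is a predecessor query; alternatively, a careful exponential/binary search from both ends of each block gives total time $O(n)$ by the usual argument that the search cost at a node is $O(\log \min(|L|,|R|))$ and these sum to $O(n)$.

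For the depth bound, consider a path in the derivation tree from a suffix variable $S_i$ down to a terminal occurrence $a$. Each step of this path either descends within the recursion (from the variable for a block to the suffix variable of its left sub-block, or to a variable inside the right sub-block) or is one of the $O(1)$ glue edges at a recursion node. The key point is the weight halving: whenever we pass from a block variable to a variable associated with a proper sub-block on either side, the weight of the string derived drops by at least a factor of $2$, because we split at a weight-balanced position. Hence along the path the quantity $\log_2(\text{weight of current string})$ decreases, and it can decrease from $\log_2\|S_i\|$ down to $\log_2\|a\|$ in at most $\log_2\|S_i\| - \log_2\|a\|$ "halving" steps; each such step is accompanied by at most a bounded number (I will arrange: two) of non-halving glue steps, and there is an additive constant (I will arrange: $3$) for the final descent into the right-block subtree or the terminal symbol itself. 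This gives the bound $3 + 2(\log_2\|S_i\| - \log_2\|a\|)$.

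The main obstacle I expect is not the halving argument — that is clean — but the simultaneous control of (i) the constant $3$ in the variable bound $3n$, (ii) the length-$4$ cap on right-hand sides, and (iii) the precise additive and multiplicative constants $3$ and $2$ in the depth bound, all while handling the boundary cases where a block has length $0$ or $1$ (so that the "suffix variable of $L$" or the "whole left block" may not exist as a separate variable, and the glue right-hand side degenerates). I would handle these by a careful case analysis at each recursion node — four or five small cases depending on whether $|L|, |R| \in \{0, 1, \ge 2\}$ — verifying in each case that the right-hand side has length $\le 4$, that the new variable is charged to a distinct symbol, and that each edge leaving the new variable either halves the weight or is one of at most two bounded glue edges. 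This is routine but must be done explicitly to nail the stated constants.
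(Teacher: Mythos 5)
Your divide-and-conquer scheme has a genuine size gap. Because every suffix of $s$ starting at a position $i \le j$ is assembled as $S_i^L\, a_j\, R_{\text{whole}}$, the recursion on $L$ must supply \emph{all} suffixes of $L$, and these are fresh variables that cannot double as suffix variables of $s$; so the variable count obeys $f(n) = f(|L|) + f(|R|) + |L| + O(1)$. Your charging argument (``each symbol is charged $O(1)$ times'') is false: a symbol is charged once at \emph{every} recursion node whose left block contains it. Already in the unweighted case $\|a_i\|=1$ your weight-balanced split is the midpoint, the leftmost symbol is charged $\Theta(\log n)$ times, and $f(n) = 2f(n/2) + n/2 + O(1) = \Theta(n\log n)$. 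This breaks both the $3n$ bound and the linear construction time (your depth analysis, in contrast, is essentially sound: along a root-to-leaf path at most one step --- the first entry into a right whole-block variable --- can fail to halve the weight).

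The paper's proof avoids exactly this trap. It also splits $s = a_1\cdots a_k\,c\,b_1\cdots b_m$ (where $cb_1\cdots b_m$ is the shortest suffix with $\roundlog{cb_1\cdots b_m} = \roundlog{s}$) and recurses on $b_1\cdots b_m$ as you do on $R$, but it does \emph{not} recurse on the prefix verbatim: it first pairs consecutive prefix letters into variables $X_i$ with $\rho(X_i)=a_{2i-1}a_{2i}$ and recurses on the weighted string $X_1\cdots X_{\lceil k/2\rceil}$ of \emph{half} the length; suffixes of the prefix that start at even positions are then recovered by prepending a single letter, giving right-hand sides of the form $a_{2i-2}U_i c V_1$ (this is why length $4$ is allowed). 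With this compression the left recursion shrinks geometrically in length, so the count satisfies $f(n) \le f(\lceil k/2\rceil) + f(n-k-1) + O(k) < 3n$ and the same recurrence gives linear time, at the modest cost of one extra constant in the depth bound. To repair your proposal you need this (or an equivalent) halving of the left part before recursing; the weight-balanced split alone does not control the number of variables.
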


\begin{proof}
First, the presented algorithm never uses the fact that some letters of $s$ may be equal.
	Thus it is more convenient to assume that letters in $s$ are pairwise different---in this way
	the path from a variable $S_i$ to a terminal symbol $a$ in the last condition
	is defined uniquely.
	
	For the sake of an inductive proof, the constructed SSLP
	will satisfy a slightly stronger and more technical variant of the last condition:
	every path from $S_i$ to some terminal symbol $a$ in the derivation tree of $\calG$
		has length at most $1 + 2(\lceil \log_2 \|S_i\| \rceil  - \log_2 \|a\|)$.
	The trivial estimation $\lceil \log_2 \|S_i\| \rceil \le 1 + \log_2 \|S_i\|$ then yields the announced variant.
 
    We first show how to construct $\calG$ with the desired properties and then prove that the 
    construction can be done in linear time.

	The case $n = 1$ is trivial. Now assume that $n \ge 2$ and
	let
	\[
		s = a_1 \cdots a_k \, c \, b_1 \cdots b_m
	\]
	where $c b_1 \cdots b_m$ is the shortest suffix of $s$ such that
	$\roundlog{cb_1 \cdots b_m} = \roundlog{s}$.
	Clearly such a suffix exists (in the extreme cases it is the entire string $s$ or a single letter).
	Note that 
         \begin{equation} \label{eq-suffix1}
         \roundlog{cb_1 \cdots b_m} = \roundlog{a_i \cdots a_k cb_1 \cdots b_m}
         \end{equation}
         for $1 \le i \le k+1$. Moreover, the following inequalities hold:
	\begin{align}
	\label{eq:weight_suff}
	\roundlog{cb_1 \cdots b_m} &\ge  \roundlog{b_1 \cdots b_m} + 1 \\
	\label{eq:weight_pref}
	\roundlog{cb_1 \cdots b_m} &\ge  \roundlog{a_1 \cdots a_k} + 1	
	\end{align}
	(here, we define $\log_2(0) = -\infty$).
	The former is clear from the definition of $cb_1 \cdots b_m$,
	as $b_1 \cdots b_m$ satisfies $\roundlog{b_1 \cdots b_m} < \roundlog{s} = \roundlog{cb_1 \cdots b_m}$.
	If \eqref{eq:weight_pref} does not hold then both 
$a_1\cdots a_k$ and $cb_1 \cdots b_m$ have weights strictly more than $2^{\roundlog{s}-1}$
and so their concatenation $s$ has weight strictly more than $2^{\roundlog{s}} \geq \|s\|$, which is a
contradiction.

	Recall that the symbols $a_1, \ldots, a_k,c,b_1, \ldots, b_m$ 
	are pairwise different by the convention from the first paragraph of the proof.

	For $b_1\cdots b_m$ we make a recursive call (if $m=0$ we do nothing at this step) and include the produced SSLP
	in the output SSLP $\calG$.
	Let  $V_{1}, V_{2}, \dots, V_{m}$ be the variables such that 
	\[
			\valXG{\calG}{V_{i}} = b_{i} \cdots b_m .
	\]
	By the inductive assumption, every path $V_i \xrightarrow{*} b_j$
	in the derivation tree has length at most
	\[
	1 + 2\lceil\log_2 \|V_i\|\rceil - 2\log_2 \|a_j\| .
	\]
	Add a variable $V_0$ with right-hand side $cV_1$ (or $c$ if $m=0$),
	which derives the suffix $cb_1\cdots b_m$.
	The path from $V_0$ to $c$ in the derivation tree has length $1$, which is fine,
	and the path $V_0 \xrightarrow{*} a_j$ is one larger than the path $V_1 \xrightarrow{*}a_j$  and hence
	has length at most 
	\[
	1 + 1 + 2\lceil\log_2 \|V_1\|\rceil - 2\log_2 \|a_j\|
	\le 2\lceil\log_2 \|V_0\|\rceil - 2\log_2 \|a_j\| ,
	\]
	as $1 + \roundlog{V_1} \le \roundlog{V_0}$ by \eqref{eq:weight_suff}.

	Next we decompose the prefix $a_1 \cdots a_k$ into $\lfloor k/2 \rfloor$
	many blocks of length two and, when $k$ is odd, one block of length $1$.
	We add to the output SSLP $\calG$ new variables $X_1, \dots, X_{\lfloor k/2 \rfloor}$ and define
	their right-hand sides by
	\[
		\rho(X_i) = a_{2i-1} a_{2i}.
	\]
	The number of variables in $\calG$ is $\lfloor k/2 \rfloor$.
	For ease of presentation, when $k$ is odd, define $X_{\lceil k/2 \rceil} = a_k$,
	this is not a new variable, rather just a notational convention to streamline the presentation.
	Note that for even $k$ we have $\lceil k/2 \rceil = \lfloor k/2 \rfloor$ and
	in this case $X_{\lceil k/2 \rceil}$ is already defined.
	Viewing $X_1\cdots X_{\lceil k/2 \rceil}$ as a weighted string of length $\lceil k/2 \rceil$
	over the alphabet $\{X_1, \ldots, X_{\lceil k/2 \rceil}\}$,
	we obtain inductively an SSLP $\calG_X$ with at most $3 \lceil k/2 \rceil$ variables 
	and right-hand sides of length at most 4 (if $k=0$ we do nothing at this step).
	Moreover, $\calG_X$ contains variables $U_{1}, U_{2}, \dots, U_{\lceil k/2 \rceil}$
	with
	\[
		\valXG{\calG_X}{U_{i}} = X_i X_{i+1} \cdots X_{\lceil k/2 \rceil}
	\]
	such that any path of the form $U_{i} \xrightarrow{*} X_j$
	in the derivation tree of $\calG_X$ has length at most
	\[
		1 + 2\lceil\log_2 \|U_i\|\rceil - 2\log_2 \|X_j\| .
	\]
	By adding all variables and right-hand side definitions from $\calG_X$ to $\calG$
	(where all symbols $X_i$ are variables, except $X_{\lceil k/2 \rceil}$ when $k$ is odd,
	in which case $X_{\lceil k/2 \rceil} = a_k$)
	we obtain
	\[
		\valXG{\calG}{U_{i}} = a_{2i-1} a_{2i} \cdots a_k
	\]
	for all $1 \le i \le {\lceil k/2 \rceil}$.
	Any path $U_{i} \xrightarrow{*} a_j$ in the derivation tree of $\calG$
	has length at most
\begin{equation}
\label{eq:U_aj_estimation}
	2 + 2\lceil\log_2 \|U_i\|\rceil - 2\log_2 \|a_j\| .
\end{equation}
	Now, every suffix of $s$ that includes some letter of $a_1 \cdots a_k$
	(note that we already have variables for all other suffixes)
	can be defined by a right-hand side of the form
	$U_{i} c  V_{1}$ or  $a_{2i-2} U_{i} c V_{1}$ ($U_{i} c$ or  $a_{2i-2} U_{i} c$ if $m=0$).
	As in the statement of the lemma, denote those variables by $S_1, \ldots, S_k$.
	Let us next verify the condition on the path lengths for derivations from those variables.
	All paths $S_i \xrightarrow{*} c$ have length one. 
	Now consider a path $S_i \xrightarrow{*} a_j$.
	If the path has length one then we are done.
	Otherwise, the path must be of the form $S_i \to U_{l} \xrightarrow{*} a_j$.
	Therefore, by~\eqref{eq:U_aj_estimation}
	the path length is at most
\begin{align*}
	3 + 2\roundlog{U_l} - 2 \plainlog{a_j} &\leq 3 + 2\roundlog{U_1} - 2 \plainlog{a_j}\\
	&\leq
	1 + 2\roundlog{cb_1 \cdots b_m} - 2 \plainlog{a_j}\\
	& = 
	1 + 2\roundlog{S_i} - 2 \plainlog{a_j},
\end{align*}	
	where the second inequality follows from~\eqref{eq:weight_pref}
	and the equality at the end follows from~\eqref{eq-suffix1}.

	Paths of the form $S_i \xrightarrow{*} b_j$ can be treated similarly:
	they are of the form $S_i \to V_1 \xrightarrow{*} b_j$,
	where the path $V_1 \xrightarrow{*} b_j$ is of length at most
	$1 +  2\roundlog{V_1} - 2 \plainlog{a_j}$ by the inductive assumption.
	Thus, the whole path is of length at most
        \begin{align*}
	2 + 2\roundlog{V_1} - 2 \plainlog{b_j}
	&\leq
	2\roundlog{cb_1 \cdots b_m} - 2 \plainlog{b_j}\\
	& = 
	2\roundlog{S_i} - 2 \plainlog{b_j},
        \end{align*}		
	which follows from~\eqref{eq:weight_suff} and~\eqref{eq-suffix1}.

	The SSLP $\calG$ consists of $\lfloor k/2 \rfloor$ variables $X_i$,
	$3(\lceil k/2 \rceil)$ variables from the recursive call for the weighted string
	$X_1 \cdots X_{\lceil k/2 \rceil}$, $3m = 3(n-k-1)$ variables from the recursive call for $b_1\cdots b_m$,
	and $1 + k$ new suffix variables for suffixes beginning at $a_1 \cdots  a_k c$
	(note that those beginning at $b_1\cdots b_m$ are taken care of by the recursive call).
	Therefore $\calG$ contains at most
	\begin{align*}
	\lfloor k/2 \rfloor + 3\lceil k/2 \rceil + 3(n-k-1) + 1 + k =
	3n + 2\lceil k/2 \rceil - k - 2 < 3n
	\end{align*}
	variables. Also note that all right-hand sides of $\calG$ have length at most four.

	It remains to show that the construction works in linear time.
	To this end we need a small trick:
	we assume that when the algorithm is called on $s$, we supply the algorithm with the value $\|s\|$.
	More formally, the main algorithm applied to a string $s$
	computes $\|s \|$ in linear time by going through $s$ and adding weights.
	Then it calls a subprocedure $\text{main}'(s, \|s\|)$, which performs the actions described above.
	To find the appropriate symbol $c$, $\text{main}'$ computes the weights of consecutive prefixes $s_1s_2\cdots s_i$,
	until it finds the first such that $\roundlog{s} > \lceil \log_2(\|s\| -\|s_1\cdots s_i\| )\rceil$.
	Then $k = i-1$ and so $a_1 \cdots a_k = s_1\cdots s_{i-1}$, $c = s_{i}$, $b_1\cdots b_m = s_{i+1}\cdots s_{|s|}$.
	Moreover, we can compute $\|a_1 \cdots a_k\|$ and $\|b_1 \cdots b_m\|$ for the recursive calls of $\text{main}'$ in constant time.

	Let $T(n)$ be the running time of $\text{main}'$ on a word of length $n$.
	Then all operations of $\text{main}'$, except the recursive calls,
	take at most $\alpha (k+1)$ time for some constant $\alpha \ge 1$,
	where $s$ is represented as $a_1 \cdots a_k c b_1\cdots b_m$.
	Thus $T(n)$ satisfies $T(1) = 1$ and 
    \[
    T(n) = T(\lceil k/2 \rceil) + T(n-k-1) + \alpha(k+1).
    \]
    We claim that $T(n) \leq 2 \alpha n$. 
	This is true for $n=1$ and inductively for $n \ge 2$ we get
\begin{align*}
	T(n)
		&\le
	2 \alpha (\lceil k/2 \rceil) + 2\alpha(n-k-1) + \alpha(k+1)\\
		&\le
	2 \alpha \frac{k+1}{2} + 2\alpha n -\alpha(k+1)\\
		&=
	2\alpha n .
\end{align*}
This concludes the proof of the lemma.
\end{proof}

\subsection{Proof of Theorem~\ref{thm-balance-SSLP}}\label{sec:balancing-of-sslps}

We now prove Theorem~\ref{thm-balance-SSLP}.
Let $\calG = (\V,\rho_\calG,S)$. W.l.o.g. we can assume that $\calG$ is in Chomsky normal form (the case that $\valX{G}=\varepsilon$
is trivial).
Note that the graph $(\V,E(\calG))$ is a directed acyclic graph (DAG). We can assume that every
variable is reachable from the start variable $S$.
Consider a~variable $X$ with $\rho_\calG(X) = YZ$. Then $X$ has the two outgoing edges
$(X,Y)$ and $(X,Z)$ in $(\V,E(\calG))$. We replace these two edges by the triples
$(X,1,Y)$ and $(X,2,Z)$. Hence, $\D := (\V,E(\calG))$ becomes a~DAG with multi-edges
(triples from $\V \times \{1,2\} \times \V$).
Figure~\ref{fig:scp2} shows the DAG $\D$ for an example SSLP; it is the same DAG as in Example~\ref{ex:scd}; see Figure~\ref{fig:scp}.
The right-hand sides for the two sink variables $X_{13}$ and $X_{14}$ are terminal
symbols (the concrete terminals are not relevant for us). The start variable $S$ is $X_0$. 

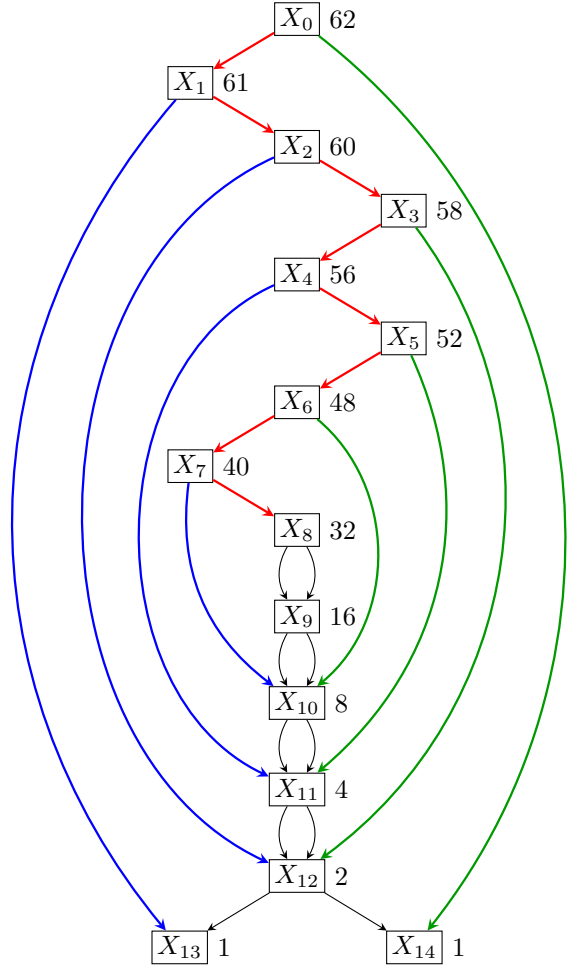
\begin{figure}[t]
		\centering
		\tikzstyle{lts} = [->, >=stealth]
		\tikzstyle{state} = [draw, inner sep = .7mm]
		
		\scalebox{1}{
			\begin{tikzpicture}[lts]
			\node [state, right = 7.25cm  of 0, label={right:62}] (0) {$X_0$};
			\node [state, below left = .4cm and .8cm of 0, label={right:61}] (1) {$X_1$};
			\node [state, below right = .4cm and .8cm of 1, label={right:60}] (2) {$X_2$};
			\node [state, below right = .4cm and .8cm of 2, label={right:58}] (3) {$X_3$};
			\node [state, below left = .4cm and .8cm of 3, label={right:56}] (4) {$X_4$};
			\node [state, below right = .4cm and .8cm of 4, label={right:52}] (5) {$X_5$};
			\node [state, below left = .4cm and .8cm of 5, label={right:48}] (6) {$X_6$};
			\node [state, below left = .4cm and .8cm of 6, label={right:40}] (7) {$X_7$};
			\node [state, below right = .4cm and .8cm of 7, label={right:32}] (8) {$X_8$};
			
			\node [state, below = .7cm of 8, label={right:16}] (9) {$X_9$};
			\node [state, below = .7cm of 9, label={right:8}] (10) {$X_{10}$};
			\node [state, below = .7cm of 10, label={right:4}] (11) {$X_{11}$};					
			\node [state, below =  .7cm of 11, label={right:2}] (12) {$X_{12}$};
			
			\node [state, below left = .5cm and .8cm of 12, label={right:1}] (13a) {$X_{13}$};
			\node [state, below right = .5cm and .8cm of 12, label={right:1}] (13b) {$X_{14}$};
			\draw (0) [red,line width = .3mm] to node[above]{} (1);
			\draw (1) [red,line width = .3mm] to node[above]{} (2);
			\draw (2) [red,line width = .3mm] to node[above,pos=.7]{} (3);
			\draw (3) [red,line width = .3mm] to node[above]{} (4);
			\draw (4) [red,line width = .3mm] to node[above,pos=.7]{} (5);
			\draw (5) [red,line width = .3mm] to node[above]{} (6);
			\draw (6) [red,line width = .3mm]  to node[above]{} (7);
			\draw (7) [red,line width = .3mm] to node[above,pos=.7]{} (8);
			\draw (8) to[bend right=30] node[left]{} (9);
			\draw (8) to[bend left=30] node[right]{} (9);
			\draw (9) to[bend right=30] node[left]{} (10);
			\draw (9) to[bend left=30] node[right]{} (10);
			\draw (10) to[bend right=30] node[left]{} (11);
			\draw (10) to[bend left=30] node[right]{} (11);
			\draw (11) to[bend right=30] node[left]{} (12);
			\draw (11) to[bend left=30] node[right]{} (12);
			
			\draw (12) to node[above,pos=.6]{} (13a);
			\draw (12) to node[above,pos=.6]{} (13b);
			
			\draw (0) [dgreen,line width = .3mm] to[bend left=45] node[right]{} (13b);
			\draw (1) [blue,line width = .3mm] to[bend right=40] node[left]{} (13a);
			\draw (2) [blue,line width = .3mm] to[bend right=65] node[left]{} (12);
			\draw (3) [dgreen,line width = .3mm] to[bend left=45] node[right]{} (12);
			\draw (4) [blue,line width = .3mm] to[bend right=65] node[left]{} (11);
			\draw (5) [dgreen,line width = .3mm] to[bend left=37] node[right]{} (11);
			\draw (6) [dgreen,line width = .3mm] to[bend left=50] node[right]{} (10);
			\draw (7) [blue,line width = .3mm] to[bend right=30] node[left]{} (10);

			\end{tikzpicture}
		}
		\caption{The DAG for an SSLP.}
		\label{fig:scp2}
	\end{figure}

We define for every $X \in \V$ the weight $\| X \|$ as the length of the string 
$\valXG{\calG}{X}$. Moreover, for a string $w = X_1 X_2 \cdots X_n$ we define 
the weight $\| w \| = \sum_{i=1}^n \| X_i \|$. Note that $\| S \|=n$ is the length of the derived string
$\valX{\calG}$ and that this also the value $n(\D)$ defined in  Section~\ref{sec-centroid}.

We compute in linear time the edges from symmetric centroid decomposition of the DAG $\D$, see~Lemma~\ref{lem-decomposition}.
In Figure~\ref{fig:scp2} these are the red edges. The weights $\| X_i \|$ of the variables
are written next to the corresponding nodes; these weights  can be found as the second components in Figure~\ref{fig:scp}.
Hence, we have $\| X_0 \|=62$, $\| X_1 \|=61$, $\| X_2 \|=60$, $\| X_3 \|=58$, etc. 
 
Consider a symmetric centroid path 
\begin{equation} \label{sym-centroid-path2}
(X_0, d_0, X_1), (X_1, d_1, X_2), \ldots, (X_{p-1}, d_{p-1}, X_p)
\end{equation}
in $\D$, where all $X_i$ belong to $\V$ and $d_i \in \{1,2\}$.
Thus, for all $0 \le i \le p-1$, the right-hand side of $X_i$ in $\calG$ has the form 
 $\rho_{\calG}(X_i) = X_{i+1}X'_{i+1}$ (if $d_i = 1$) or
 $\rho_{\calG}(X_i) = X'_{i+1}X_{i+1}$ (if $d_i = 2$)
for  some $X'_{i+1} \in \V$. Note that we can have $X'_i = X'_j$ for $i \neq j$.
The right-hand side $\rho_{\calG}(X_p)$ belongs to $\Sigma \cup \V\V$.
Note that the variables $X'_i$ ($1 \le i \le p$) and the variables in $\rho_{\calG}(X_p)$ (if they exist) 
belong to other symmetric centroid paths.  We will introduce $\mathcal{O}(p)$ many 
variables in the SSLP $\mathcal{H}$ to be constructed. Moreover, all right-hand sides
of $\mathcal{H}$ have length at most four.
By summing over all  symmetric centroid paths, this yields the size 
bound $\mathcal{O}(|\calG|)$ for $\mathcal{H}$.

We now define the right-hand sides of the variables $X_0, \ldots, X_p$ in $\mathcal{H}$.
We write $\rho_{\mathcal{H}}$ for the right-hand side mapping of $\mathcal{H}$.
For $X_p$ we set $\rho_{\mathcal{H}}(X_p) = \rho_{\calG}(X_p)$.
For the variables $X_0, \ldots, X_{p-1}$ we have to ``accelerate'' the derivation somehow
in order to get the depth bound $\mathcal{O}(\log n)$ at the end.
For this, we apply Proposition~\ref{prop-all-suffixes}.
Let $L_1 \cdots L_s$ be the subsequence obtained from $X'_1 X'_2 \cdots X'_p$
by keeping only those $X'_i$ with $d_i = 2$ and let 
$R_1 \cdots R_t$ be the subsequence obtained from the reversed sequence $X'_p X'_{p-1} \cdots X'_1$
by keeping only those $X'_i$ with $d_i = 1$.
Take for instance the red symmetric centroid path consisting of the nodes $X_0, X_2, \ldots, X_8$ (hence, $p=8$) from our
 running example in Figure~\ref{fig:scp2}.
 We have $L_1 \cdots L_s = X_{13} X_{12} X_{11} X_{10}$ (the target nodes of the blue edges) and
 $R_1 \cdots R_t = X_{10} X_{11} X_{12} X_{14}$ (the target nodes of the green edges).

Note that every string $\valX{X_i}$ ($0 \le i \le p-1$) can be derived in $\calG$ from a word
$w_\ell X_p w_r$, where $w_\ell$ is a suffix of $L_1 \cdots L_s$ and $w_r$ is a prefix of $R_1 \cdots R_t$.
For instance, $\valX{X_2}$ can be derived from $(X_{12} X_{11} X_{10}) X_8 (X_{10} X_{11} X_{12})$
in our running example, so $w_\ell = X_{12} X_{11} X_{10}$ and $w_r = X_{10} X_{11} X_{12}$.
We now apply Proposition~\ref{prop-all-suffixes} to the sequence $L_1 \cdots L_s$
in order to get an SSLP $\calG_\ell$ of size $\mathcal{O}(s) \le \mathcal{O}(p)$ that contains variables 
$S_1 \ldots, S_s$ for the non-empty suffixes of $L_1 \cdots L_s$. Moreover, 
every path from a variable $S_i$ to some $L_j$ in the derivation tree
has length at most $3 + 2\log_2 \|S_i\| - 2\log_2\|L_j\|$, 
where $\|S_i\|$ is the weight of $\valXG{\calG_\ell}{S_i}$.
Analogously, we obtain an 
SSLP $\calG_r$ of size $\mathcal{O}(t) \le \mathcal{O}(p)$ that contains variables 
$P_1 \ldots, P_t$ for the non-empty prefixes of $R_1 \cdots R_t$. Moreover, 
every path from a variable $P_i$ to some $R_j$ in the derivation tree
has length at most $3 + 2\log_2 \|P_i\| - 2\log_2\|R_j\|$.
We can then define every right-hand side $\rho_{\mathcal{H}}(X_i)$ as 
$S_j X_p P_k$, $X_p P_k$, $S_j X_p$, or $X_p$  for suitable $j$ and $k$. 
Moreover, we add all variables and right-hand side definitions of $\calG_\ell$
and  $\calG_r$ to $\mathcal{H}$.

We make the above construction for all symmetric centroid paths of the DAG $\D$.
This concludes the construction of $\mathcal{H}$. In our running example 
we set $\rho_{\mathcal{H}}(X_i) = \rho_{\calG}(X_i)$ for $8 \le i \le 14$.
Since we introduce $\mathcal{O}(p)$ 
many variables for every symmetric centroid path of length $p$ and 
all right-hand sides of $\mathcal{H}$ have length at most four,  we obtain the size 
bound $\mathcal{O}(|\calG|)$ for $\mathcal{H}$.

It remains to show that the depth of the SSLP $\mathcal{H}$ is $\mathcal{O}(\log n)$.
Let us first consider the symmetric centroid path \eqref{sym-centroid-path2} and a path in the derivation tree
of $\mathcal{H}$ from a variable $X_i$  ($0 \le i \le p$) to a variable $Y$, where
$Y$ is 
\begin{enumerate}[(a)]
\item a variable in $\rho_{\calG}(X_p) = \rho_{\mathcal{H}}(X_p)$ or
\item a variable $X'_j$ for some $i < j \le p$.
\end{enumerate}
In case (a), the path $X_i \xrightarrow{*} Y$ has length at most two.
In case (b) the path $X_i \xrightarrow{*} Y$ is of the form
$X_i \to S_k \xrightarrow{*} X'_j = Y$ or $X_i \to P_k \xrightarrow{*} X'_j = Y$.
Here, $S_k \xrightarrow{*} X'_j$ (resp., $P_k \xrightarrow{*} X'_j$) is a path 
in $\calG_\ell$ (resp., $\calG_r$) and therefore has length 
$3 + 2\log_2 \|S_k\| - 2\log_2\|Y\|$ (resp., $3 + 2\log_2 \|P_k\| - 2\log_2\|Y\|$).
In both cases, we can bound the length of the path $X_i \xrightarrow{*} Y$ by
$4 + 2\log_2 \|X_i\| - 2\log_2\|Y\|$. 

Consider a maximal path in the derivation tree of $\mathcal{H}$ that starts in the root $S$ and ends in a leaf.
We can factorize this path as
\begin{equation} \label{full-path2}
S = X_0 \xrightarrow{*} X_1 \xrightarrow{*} X_2 \xrightarrow{*} \cdots \xrightarrow{*} X_k
\end{equation}
where all variables $X_i$ belong to the original SSLP and every subpath $X_i  \xrightarrow{*} X_{i+1}$ 
is of the form $X_i \xrightarrow{*} Y$ considered in the previous paragraph.
The right-hand side of $X_k$ is a single symbol from $\Sigma$.
In the DAG $\D$ we have a corresponding path $X_i  \xrightarrow{*} X_{i+1}$, which is contained
in a single symmetric centroid path except for the last edge leading to $X_{i+1}$. 
By the above consideration, the length of the path \eqref{full-path2} is bounded by
$$
\sum_{i=0}^{k-1} (4 + 2\log_2 \| X_i\| - 2\log_2 \| X_{i+1}\|) \le 4k + 2\log_2 \|S\|
= 4k + 2\log_2 n.
$$
By the second claim of 
Lemma~\ref{lem-decomposition} we have $k \le 2\log_2 n$ which shows that the length of the path 
\eqref{full-path2} is bounded by $6\log_2 n$. \qed

\subsection{Applications of Theorem~\ref{thm-balance-SSLP}} \label{sec-applications}

There are several algorithmic applications of Theorem~\ref{thm-balance-SSLP}
with always the same idea:
let $\calG$ be an SSLP of size $m$ for a string $s$ of length $n$.
In many algorithms for SSLP-compressed strings the running time or space consumption depends
on $\depth(\calG)$, which can be $m$ in the worst case.
Theorem~\ref{thm-balance-SSLP} shows that we can replace $\depth(\calG)$ by $\mathcal{O}(\log n)$.
This is the best we can hope for since $\depth(\calG) \ge \Omega(\log n)$ for every SSLP $\calG$.
Moreover, SSLPs that are produced by practical grammar-based compressors (e.g., LZ78 or RePair) are 
in general unbalanced in the sense that $\depth(\calG) \ge \omega(\log n)$.

The time bounds in the following results refer to the RAM model, where arithmetic operations on numbers 
from the interval $[0,n]$ need time $\mathcal{O}(1)$. The size of a data structure is measured in the number
of words of bit length $\log_2 n$.

As a first application of Theorem~\ref{thm-balance-SSLP} we can present a very simple new proof of Theorem~\ref{cor-random-access}
(random access for grammar-compressed strings) based on the folklore random access algorithm that works in time $\mathcal{O}(\depth(\calG))$.

\begin{proof}[Proof of Theorem~\ref{cor-random-access}]
Using Theorem~\ref{thm-balance-SSLP} we compute in time $\mathcal{O}(m)$ 
an equivalent SSLP $\mathcal{H}$ for $s$ of size $\mathcal{O}(m)$ and depth $\mathcal{O}(\log n)$. 
By a single pass over $\mathcal{H}$ we compute for every variable $X$ of $\mathcal{H}$ 
the length of the word $\valX{X}$.
Using these lengths one can descend in the derivation tree $\valX{\mathcal{H}}$ from the root to the $i$-th leaf node (which is labelled
with the $i$-th symbol of $s$) in time 
$\mathcal{O}(\depth(\mathcal{H})) \le \mathcal{O}(\log n)$.
\end{proof}

\begin{rem}
It is easy to see that the balancing algorithm from 
Theorem~\ref{thm-balance-SSLP} can be implemented on a pointer machine, see
\cite{Tarjan83} for a discussion of the pointer machine model. This yields a pointer machine implementation
of the random access data structure from Theorem~\ref{cor-random-access}.  
In contrast, the random access data structure from \cite{BilleLRSSW15} needs the RAM model
(for the pointer machine model only preprocessing time and size $\mathcal{O}(m \cdot \alpha_k(m))$ for any fixed $k$, where 
$\alpha_k$ is the $k$-th inverse Ackermann function, is shown in \cite{BilleLRSSW15}).
On the other hand, recently, in \cite{BiGoGaLaWe19}, the $\mathcal{O}(m)$-space data structure from \cite{BilleLRSSW15}
has been modified so that it can be implemented on a pointer machine as well.
\end{rem}

Using fusion trees \cite{FredmanW93} one can improve the time bound in
Theorem~\ref{cor-random-access} to $\mathcal{O}(\log n/\log \log n)$ at the cost of an additional
factor of $\mathcal{O}(\log^{\epsilon} n)$ in the size bound.
The following result has been shown in \cite[Theorem~2]{BelazzouguiCPT15} under the assumption 
that the input SSLP has depth $\mathcal{O}(\log n)$. We can enforce this bound with 
Theorem~\ref{thm-balance-SSLP}.

\begin{cor} \label{cor-fusion}
Fix an arbitrary constant $\epsilon>0$.
From a given SSLP $\calG$ of size $m$ such that the string $s=\valX{\calG}$ has length $n$, 
one can construct in time $\mathcal{O}(m \cdot \log^{\epsilon}n)$ a 
data structure of size $\mathcal{O}(m \cdot \log^{\epsilon}n)$  that allows to answer random access 
queries in time $\mathcal{O}(\log n/\log \log n)$.
\end{cor}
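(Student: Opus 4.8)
The plan is to reduce the problem to \cite[Theorem~2]{BelazzouguiCPT15}, whose data structure achieves query time $\mathcal{O}(\log n/\log\log n)$ together with size and preprocessing time $\mathcal{O}(m\cdot\log^{\epsilon} n)$ \emph{under the hypothesis} that the input SSLP has depth $\mathcal{O}(\log n)$. The one new ingredient is Theorem~\ref{thm-balance-SSLP}: applied to the given $\calG$ it produces, in time $\mathcal{O}(m)$, an equivalent SSLP $\mathcal{H}$ with $|\mathcal{H}|\in\mathcal{O}(m)$ and $\depth(\mathcal{H})\in\mathcal{O}(\log n)$, and a further linear-time transformation brings $\mathcal{H}$ into Chomsky normal form at the cost of only constant factors in size and depth. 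Feeding this $\mathcal{H}$ into the construction of \cite{BelazzouguiCPT15} and noting $|\mathcal{H}|=\mathcal{O}(m)$ gives the stated bounds.

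For completeness I would recall why that construction meets these bounds, as the argument is short once depth $\mathcal{O}(\log n)$ is available. First, a single pass over $\mathcal{H}$ computes the length $\|X\|$ of $\valX{X}$ for every variable $X$. Put $j=\lfloor\epsilon\log_2\log_2 n\rfloor$, so that $2^{j}\le\log^{\epsilon} n$, and for each variable $X$ with $\|X\|\ge 2$ form the \emph{block} of $X$: unfold the derivation tree of $\mathcal{H}$ for $j$ levels below $X$, pruning at terminals. Its frontier $f_1,\dots,f_k$ (a list of variables and terminals) has length $k\le 2^{j}\le\log^{\epsilon} n$; store the prefix sums $\sum_{i\le\ell}\|f_i\|$ for $X$ in a fusion tree \cite{FredmanW93}. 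A random access query for position $p$ then descends block by block: a predecessor query in the current block locates the frontier element containing the sought position and the offset within it; if that element is a terminal we output it, otherwise we recurse into it as the next variable, carrying the updated offset.

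Each block descent advances $j$ levels in the derivation tree of $\mathcal{H}$, so there are $\mathcal{O}(\depth(\mathcal{H})/j)=\mathcal{O}(\log n/\log\log n)$ descents. On a RAM with word length $w=\Theta(\log n)$ every frontier holds $k\le\log^{\epsilon} n=w^{\mathcal{O}(1)}$ keys of at most $w$ bits, so \cite{FredmanW93} answers each predecessor query in time $\mathcal{O}(1)$ after $\mathcal{O}(k)$ preprocessing time and space; summing $k$ over the $\mathcal{O}(m)$ variables gives total size and preprocessing $\mathcal{O}(m\cdot\log^{\epsilon} n)$, which with the $\mathcal{O}(m)$-time balancing step yields the corollary. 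The only places needing care are the fusion-tree word-size bookkeeping (exactly the static-predecessor regime of \cite{FredmanW93}) and the observation that unfolding $j$ levels inflates the SSLP size by a factor $2^{j}=\mathcal{O}(\log^{\epsilon} n)$; there is no deeper obstacle, since all the difficulty is absorbed by Theorem~\ref{thm-balance-SSLP}.
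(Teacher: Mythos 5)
Your proposal is correct and follows the paper's own route: the paper likewise just applies Theorem~\ref{thm-balance-SSLP} to enforce depth $\mathcal{O}(\log n)$ and then invokes the construction of \cite[Theorem~2]{BelazzouguiCPT15} verbatim (expanding right-hand sides to length $\mathcal{O}(\log^{\epsilon} n)$ and building a fusion tree per right-hand side), which is exactly the block-unfolding argument you spell out. The extra detail you give about fusion-tree bookkeeping is consistent with that cited construction and introduces no new ideas beyond it.
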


\begin{proof}
The proof is exactly the same as for \cite[Theorem~2]{BelazzouguiCPT15}.  There, the authors have to assume 
that the input SSLP has depth $\mathcal{O}(\log n)$, which we can enforce by 
Theorem~\ref{thm-balance-SSLP}. Roughly speaking, the idea in \cite{BelazzouguiCPT15} is to reduce
the depth of the SSLP to $\mathcal{O}(\log n/\log \log n)$ by expanding right-hand sides to length
$\mathcal{O}(\log^{\epsilon} n)$. Then for each right-hand side a fusion tree is constructed, which allows
to spend constant time at each variable during the navigation to the $i$-th symbol. 

Let us also remark that the size bound for the computed data structure in \cite{BelazzouguiCPT15} is given in bits,
which yields $\mathcal{O}(m \cdot \log^{1+\epsilon}n)$ bits since numbers from $[0,n]$ have to 
be encoded with $\log_2 n$ bits.
\end{proof}

Given a string $s \in \Sigma^*$, a rank query gets a position $1 \le i \le |s|$ and a symbol $a\in\Sigma$ and 
returns the number of $a$'s in the prefix of $s$ of length $i$. A select query gets a symbol
$a\in \Sigma$ and returns the position of the $i$-th $a$ in $s$ (if it exists).

\begin{cor} \label{cor-rank/select}
Fix an arbitrary constant $\epsilon>0$.
From a given SSLP $\calG$ of size $m$ such that the string $s=\valX{\calG}$ has length $n$, 
one can construct in time $\mathcal{O}(m \cdot |\Sigma| \cdot \log^{\epsilon}n)$ a 
data structure of size $\mathcal{O}(m \cdot |\Sigma| \cdot \log^{\epsilon}n)$  that allows to answer rank and select
queries in time $\mathcal{O}(\log n/\log \log n)$.
\end{cor}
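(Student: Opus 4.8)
The plan is to reduce both query types to guided descents in the derivation tree of a \emph{balanced} SSLP, and then to reuse the fusion-tree acceleration of \cite{BelazzouguiCPT15} exactly as in the proof of Corollary~\ref{cor-fusion}. First I would apply Theorem~\ref{thm-balance-SSLP} to turn $\calG$ in time $\mathcal{O}(m)$ into an equivalent SSLP $\mathcal{H}$ of size $\mathcal{O}(m)$ and depth $\mathcal{O}(\log n)$. By a single bottom-up pass over $\mathcal{H}$ I would annotate every variable $X$ with the length $|\valX{X}|$ and with all the counts $|\valX{X}|_a$ for $a \in \Sigma$; this costs time and space $\mathcal{O}(m\cdot|\Sigma|)$, and is the origin of the $|\Sigma|$ factor in the statement.

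With these annotations a query $\mathrm{rank}(i,a)$ is answered by descending from the root of $\valX{\mathcal{H}}$ to the $i$-th leaf, accumulating along the way the $a$-counts of all subtrees lying entirely to the left of the descent path; a query $\mathrm{select}(a,i)$ is answered by an analogous descent in which the branching step at a variable $X$ with $\rho_{\mathcal{H}}(X)=YZ$ compares the running $a$-count plus $|\valX{Y}|_a$ against $i$, and simultaneously accumulates the lengths $|\valX{Y}|$ of the subtrees branched away from, so as to report the position at the end. Both descents take time $\mathcal{O}(\depth(\mathcal{H})) \le \mathcal{O}(\log n)$; this already yields a $\mathcal{O}(\log n)$-time, $\mathcal{O}(m\cdot|\Sigma|)$-space version of the corollary, and the role of balancing is precisely to make $\depth(\mathcal{H})$ logarithmic.

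To push the query time down to $\mathcal{O}(\log n/\log\log n)$ I would apply the same expansion as in \cite[Theorem~2]{BelazzouguiCPT15}: repeatedly expand right-hand sides so that each has length $\mathcal{O}(\log^{\epsilon} n)$, which shrinks the depth to $\mathcal{O}(\log n/\log\log n)$ while keeping the variable set of size $\mathcal{O}(m)$ and raising the total size to $\mathcal{O}(m\cdot\log^{\epsilon}n)$. For each expanded right-hand side $X_1\cdots X_k$ I would store a fusion tree over the prefix sums $\sum_{l\le j}|\valX{X_l}|$, so that the child of the descent can be located in $\mathcal{O}(1)$ time; for rank I would additionally store, for each $a\in\Sigma$, the array of prefix sums $\sum_{l<j}|\valX{X_l}|_a$ for $\mathcal{O}(1)$ retrieval of the partial count, and for select I would store, for each $a\in\Sigma$, one more fusion tree over the prefix sums $\sum_{l\le j}|\valX{X_l}|_a$ so that the select branching step is also $\mathcal{O}(1)$. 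Summing over the $\mathcal{O}(m\cdot\log^{\epsilon}n)$ expanded right-hand sides and the $|\Sigma|$ symbols gives total size and preprocessing time $\mathcal{O}(m\cdot|\Sigma|\cdot\log^{\epsilon}n)$, and each of the $\mathcal{O}(\log n/\log\log n)$ descent levels costs $\mathcal{O}(1)$.

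The main obstacle is bookkeeping rather than a new idea: one must check that the per-level work in the fusion-tree navigation really is $\mathcal{O}(1)$ for \emph{both} kinds of query — that locating the correct child and reading off the appropriate partial sum amounts to a single fusion-tree predecessor query plus a word-size table lookup — and that replicating the count arrays (for rank) and the auxiliary fusion trees (for select) over all of $\Sigma$ does not break the size bound, which it does not since this replication is applied only to the $\mathcal{O}(m\cdot\log^{\epsilon}n)$ variables of the expanded SSLP. The rest is verbatim the argument of \cite{BelazzouguiCPT15}, with its standing assumption of a depth-$\mathcal{O}(\log n)$ input SSLP discharged by Theorem~\ref{thm-balance-SSLP}.
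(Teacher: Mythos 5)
Your proposal is correct and follows essentially the same route as the paper, whose proof simply applies Theorem~\ref{thm-balance-SSLP} to reduce the depth to $\mathcal{O}(\log n)$ and then invokes the construction of \cite[Theorem~2]{BelazzouguiCPT15} verbatim. You merely spell out the details of that cited construction (per-symbol prefix-sum annotations, expansion of right-hand sides to length $\mathcal{O}(\log^{\epsilon} n)$, and fusion trees for the descent), which is consistent with the intended argument.
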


\begin{proof}
Again we follow the proof \cite[Theorem~2]{BelazzouguiCPT15} but first apply
Theorem~\ref{thm-balance-SSLP} in order to reduce the depth of the SSLP to
$\mathcal{O}(\log n)$.
\end{proof}

Corollary~\ref{cor-rank/select} improves \cite[Theorem~2]{BelazzouguiCPT15}, where
the query time is $\mathcal{O}(\log n)$ and the space is $\mathcal{O}(m \cdot |\Sigma| \cdot \log n)$.

Our balancing result also yields an improvement for the {\em compressed subsequence problem} \cite{BilleCG17}.
Bille et al. \cite{BilleCG17} present an algorithm based on a {\em labelled successor} data structure.
Given a string $s = a_1 \cdots a_n \in \Sigma^*$,
a labelled successor query gets a position $1 \le i \le n$ and a symbol $a\in\Sigma$ and 
returns the minimal position $j > i$ with $a_j = a$ (or rejects if it does not exist).
The following result is an improvement over \cite{BilleCG17}, where the authors present
two algorithms for the compressed subsequence problem: one with 
$\mathcal{O}(m + m \cdot |\Sigma|/w)$ preprocessing time and $\mathcal{O}(\log n \cdot \log w)$ query time,
and another algorithm with $\mathcal{O}(m + m \cdot |\Sigma| \cdot \log w / w)$ preprocessing time and $\mathcal{O}(\log n)$ query time.

\begin{cor}
	\label{cor:lab-suc}
	There is a data structure supporting labelled successor (and predecessor) queries
	on a string $s \in \Sigma^*$ of length $n$ compressed by an SSLP of size $m$
	in the word RAM model with word size $w \ge \log_2 n$ using $\mathcal{O}(m + m \cdot |\Sigma|/w)$ space,
	$\mathcal{O}(m + m \cdot |\Sigma|/w)$ preprocessing time, and $\mathcal{O}(\log n)$ query time.
\end{cor}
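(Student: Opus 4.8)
The plan is to follow the strategy of Bille et al.~\cite{BilleCG17} for the compressed subsequence problem, but to first balance the input SSLP with Theorem~\ref{thm-balance-SSLP}. Once the grammar has depth $\mathcal{O}(\log n)$, a labelled successor query reduces to plain top-down navigation in the derivation tree, so none of the auxiliary structures that \cite{BilleCG17} introduces to handle unbalanced grammars are needed; the only extra information to store is, for each variable, the set of terminal symbols it generates, represented as a bit vector.

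First I would apply Theorem~\ref{thm-balance-SSLP} to the given SSLP of size $m$, obtaining in time $\mathcal{O}(m)$ an equivalent SSLP $\mathcal{H}$ of size $\mathcal{O}(m)$ and depth $\mathcal{O}(\log n)$, which we may take to be in Chomsky normal form (the assumption $w \ge \log_2 n$ is exactly what is needed for the balancing to run in linear time). In one pass over $\mathcal{H}$ I would compute the length $|\valX{X}|$ for every variable $X$; these numbers are bounded by $n$ and hence fit into a single machine word. Next, for every variable $X$ I would compute the set $\chi(X) \subseteq \Sigma$ of terminal symbols occurring in $\valX{X}$, stored as a bit vector of $\lceil |\Sigma|/w \rceil$ words: if $\rho_{\mathcal{H}}(X)$ is a terminal $a$ then $\chi(X) = \{a\}$, and if $\rho_{\mathcal{H}}(X) = YZ$ then $\chi(X) = \chi(Y) \cup \chi(Z)$, which is a word-wise bitwise or of two bit vectors. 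Processing the variables in reverse topological order, this costs $\mathcal{O}(m \cdot \lceil |\Sigma|/w \rceil) = \mathcal{O}(m + m|\Sigma|/w)$ time and space, so the whole data structure --- $\mathcal{H}$ together with the lengths and the bit vectors --- has the claimed size and preprocessing time.

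For a labelled successor query with position $i$ and symbol $a$ I would first descend in the derivation tree of $\mathcal{H}$ from the root to the leaf generating $a_i$, using the precomputed lengths; this produces a root-to-leaf path $v_0, v_1, \dots, v_d$ with $d = \mathcal{O}(\log n)$. Walking this path upwards, I would search for the lowest node $v_k$ from which the path descends into its left child and such that $a \in \chi(Z_k)$, where $Z_k$ is the right child of $v_k$: every position generated below $Z_k$ lies strictly to the right of $i$, and taking $v_k$ as low as possible yields the minimal such position. If no such $v_k$ exists the query is rejected; otherwise I would descend from $Z_k$ to the leftmost occurrence of $a$ by going, at a variable with right-hand side $YZ$, into $Y$ if $a \in \chi(Y)$ and into $Z$ otherwise, until reaching the terminal $a$. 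The two descents and the single upward walk each take $\mathcal{O}(\depth(\mathcal{H})) = \mathcal{O}(\log n)$ steps and every test $a \in \chi(\cdot)$ is one bit lookup, so the query runs in time $\mathcal{O}(\log n)$. Labelled predecessor queries are handled symmetrically, exchanging the roles of left and right children and descending to the rightmost occurrence of $a$.

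I do not expect a genuine obstacle: once the SSLP is balanced, the construction is essentially the textbook top-down navigation argument, and the only thing that needs care is the bookkeeping in the upward walk --- verifying that the lowest ancestor branching left with $a$ in its right child's symbol set really is the labelled successor, and that the queried position $i$ itself, which must never be counted, causes no off-by-one error. The content of the corollary is precisely that Theorem~\ref{thm-balance-SSLP} makes the $\mathcal{O}(\log n)$-query-time algorithm of \cite{BilleCG17} work while retaining the cheaper $\mathcal{O}(m + m|\Sigma|/w)$ preprocessing of their alternative algorithm.
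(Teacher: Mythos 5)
Your proposal is correct and follows essentially the same route as the paper's proof: balance the SSLP via Theorem~\ref{thm-balance-SSLP}, precompute lengths and per-variable $\Sigma$-bitvectors by word-wise OR, then answer a query by descending to position $i$, walking the root-to-leaf path upwards to the deepest node branching left whose right sibling contains $a$, and descending to the leftmost occurrence of $a$. No substantive differences to report.
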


\begin{proof}
	In the preprocessing phase we first reduce the depth of the given SSLP to
	$\mathcal{O}(\log n)$ using Theorem~\ref{thm-balance-SSLP}.
	We compute for every variable $X$ the length of $\valX{X}$
	in time and space $\mathcal{O}(m)$
	as in the proof of Theorem~\ref{cor-random-access}.
	Additionally for every variable $X$ we compute a bitvector of length $|\Sigma|$
	which encodes the set of symbols $a \in \Sigma$ that occur in $\valX{X}$.
	Notice that this information takes $\mathcal{O}(m \cdot |\Sigma|)$ bits and fits into $\mathcal{O}(m \cdot |\Sigma|/w)$ memory words.
	If $\rho(X) = YZ$ then the bitvector of $X$ can be computed
	from the bitvectors of $Y$ and $Z$ by
	$\mathcal{O}(|\Sigma|/w)$ many bitwise OR operations.
	Hence in total all bitvectors can be computed in time $\mathcal{O}(m \cdot |\Sigma|/w)$.
	
	A labelled successor query (for position $i$ and symbol $a)$ can now be answered in $\mathcal{O}(\log n)$ time in a straightforward way:
	First we compute the path $(X_0,X_1,\dots,X_\ell)$ in the derivation tree from the root $X_0$ to the symbol at the $i$-th position.
	Then we follow the path starting from the leaf upwards to find the maximal $k$ such that
	$\rho(X_k) = X_{k+1} Y$ and $\valX{Y}$ contains the symbol $a$, or reject if no such $k$ exists.
	Finally, starting from $Y$ we navigate in time $\mathcal{O}(\log n)$ to the leftmost leaf in the derivation tree
	which produces the symbol $a$.
\end{proof}

A {\em minimal subsequence occurrence} of a string $p = a_1 a_2 \cdots a_k$ in a string $s = b_1b_2 \cdots b_l$ is given by 
two positions $i,j$ with $1 \le i \le j \le l$ such that $p$ is a subsequence of $b_i b_{i+1} \cdots b_j$ (i.e., $b_i b_{i+1} \cdots b_j$ belongs
to the language $\Sigma^* a_1 \Sigma^* a_2 \cdots \Sigma^* a_k \Sigma^*$) but $p$ is neither a subsequence of 
$b_{i+1} \cdots b_j$ nor of $b_i \cdots b_{j-1}$.
Following the proof of \cite[Theorem~1]{BilleCG17} we obtain:

\begin{cor} \label{cor-subsequence}
	Given an SSLP $\calG$ of size $m$ producing a string $s \in \Sigma^*$ of length $n$
	and a pattern $p \in \Sigma^*$
	one can compute all minimal subsequence occurrences of $p$ in $s$
	in space $\mathcal{O}(m + m \cdot |\Sigma|/w)$
	and time $\mathcal{O}(m + m \cdot |\Sigma|/w + |p| \cdot \log n \cdot \text{occ})$
	where $w \ge \log n$ is the word size and $\text{occ}$ is the number of minimal subsequence occurrences of $p$ in $s$.
\end{cor}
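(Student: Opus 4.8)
The plan is to follow the proof of \cite[Theorem~1]{BilleCG17} as promised, using Corollary~\ref{cor:lab-suc} as the underlying navigation primitive, but paying attention that every cost that previously depended on $\depth(\calG)$ now becomes $\mathcal{O}(\log n)$. First I would preprocess $\calG$ exactly as in the proof of Corollary~\ref{cor:lab-suc}: apply Theorem~\ref{thm-balance-SSLP} to obtain an equivalent SSLP of size $\mathcal{O}(m)$ and depth $\mathcal{O}(\log n)$, compute for each variable $X$ the length $|\valX{X}|$ and the $|\Sigma|$-bit vector of symbols occurring in $\valX{X}$. This takes $\mathcal{O}(m + m\cdot|\Sigma|/w)$ time and space and, crucially, supports labelled successor (and predecessor) queries in $\mathcal{O}(\log n)$ time each.

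\smallskip

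The core of the algorithm is the standard greedy sweep for minimal subsequence occurrences. To enumerate all minimal occurrences of $p = a_1 a_2 \cdots a_k$ one iterates: for a fixed left endpoint obtained by scanning for $a_1$, one greedily computes $j_1 < j_2 < \cdots < j_k$ where $j_1$ is the first occurrence of $a_1$ at or after the current position, $j_{l+1}$ is the labelled successor of $j_l$ with symbol $a_{l+1}$; this yields the matching end position $j = j_k$, and the matching start position $i$ is recovered by walking backwards with labelled predecessor queries from $j_k$ for $a_k, a_{k-1}, \ldots, a_1$, which gives the \emph{minimal} interval ending at $j$. Each reported occurrence costs $\mathcal{O}(k) = \mathcal{O}(|p|)$ labelled successor/predecessor queries plus $\mathcal{O}(|p|)$ more to extract the minimal window, hence $\mathcal{O}(|p|\cdot\log n)$ time per occurrence. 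After reporting an occurrence one advances the left pointer to just past $i$ (or uses the classical argument that the next minimal occurrence's end point is strictly larger) and repeats; the total number of iterations is bounded by $\mathrm{occ}$ plus one final failed search, so the enumeration contributes $\mathcal{O}(|p|\cdot\log n\cdot\mathrm{occ})$ to the running time, on top of the $\mathcal{O}(m + m\cdot|\Sigma|/w)$ preprocessing. I would cite \cite{BilleCG17} for correctness of this greedy scheme, since it is unchanged.

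\smallskip

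\textbf{The main obstacle} is not any single hard step but rather making sure the bookkeeping is clean: one must verify that the greedy forward scan followed by the backward-minimization genuinely enumerates \emph{each} minimal occurrence exactly once and in increasing order of, say, the right endpoint, and that the cost of ``finding the next seed'' does not secretly hide an extra factor of $n$ or $\log n$ — this is handled by the standard observation that consecutive minimal occurrences have strictly increasing right endpoints and that the labelled successor data structure lets us jump directly, so the scanning cost is amortized into the per-occurrence $\mathcal{O}(|p|\log n)$ term. The only genuinely new element compared to \cite{BilleCG17} is the substitution of their depth-$\mathcal{O}(\log n)$ assumption (or their $\depth(\calG)$-dependent bounds) by the unconditional $\mathcal{O}(\log n)$ guaranteed by Theorem~\ref{thm-balance-SSLP}; everything else is quotation. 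Summing the preprocessing and enumeration costs gives the claimed bounds $\mathcal{O}(m + m\cdot|\Sigma|/w)$ space and $\mathcal{O}(m + m\cdot|\Sigma|/w + |p|\cdot\log n\cdot\mathrm{occ})$ time.
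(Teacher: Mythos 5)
Your proposal matches the paper's approach: the paper likewise obtains the corollary by following the proof of \cite[Theorem~1]{BilleCG17} verbatim, with the only new ingredient being the labelled successor/predecessor structure of Corollary~\ref{cor:lab-suc} (built on the balanced SSLP from Theorem~\ref{thm-balance-SSLP}), which replaces the depth-dependent query cost by $\mathcal{O}(\log n)$. Your cost accounting — $\mathcal{O}(m + m\cdot|\Sigma|/w)$ preprocessing plus $\mathcal{O}(|p|\cdot\log n)$ per reported occurrence via the greedy forward/backward sweep of \cite{BilleCG17} — is exactly the intended argument.
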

Corollary~\ref{cor-subsequence} improves \cite[Theorem~1]{BilleCG17}, which states the existence of two algorithms for the computation of all
minimal subsequence occurrences with the following running times (the space bounds are the same as in Corollary~\ref{cor-subsequence}):
\begin{itemize}
\item $\mathcal{O}(m + m \cdot |\Sigma|/w + |p| \cdot \log n \cdot \log w \cdot \text{occ})$,
\item $\mathcal{O}(m + m \cdot |\Sigma| \cdot \log w / w + |p| \cdot \log n \cdot \text{occ})$.
\end{itemize}
Let us briefly mention some other application of Theorem~\ref{thm-balance-SSLP}. As before 
let  $\calG$ be an SSLP of size $m$ for a string $s$ of length $n$.

\paragraph{Computing Karp-Rabin fingerprints for compressed strings.} This problem has been studied in 
\cite{BilleGCSVV17}, where the reader can also finde the definition of finger prints).
Given two positions $i \leq j$ in $s$ one 
wants to compute the Karp-Rabin fingerprint of the factor of $s$ that starts at position $i$ and ends at
position $j$. In \cite{BilleGCSVV17} it was shown that one can compute from $\calG$ a data structure of size $\mathcal{O}(m)$ 
that allows to compute fingerprints in time $\mathcal{O}(\log n)$. First, the authors of \cite{BilleGCSVV17} present a very simple 
data structure of size $\mathcal{O}(m)$ that allows to compute fingerprints in time $\mathcal{O}(\depth(\calG))$.
With Theorem~\ref{thm-balance-SSLP}, we can use this data structure to obtain a 
$\mathcal{O}(\log n)$-time solution. This simplifies the proof in \cite{BilleGCSVV17} 
considerably.

\paragraph{Computing runs, squares, and palindromes in SSLP-compressed strings.} It is shown in 
\cite{IMSIBTNS15} that certain compact representations of the set of all runs, squares and palindromes in $s$ (see
\cite{IMSIBTNS15} for precise definitions) 
can be computed in time $\mathcal{O}(m^3 \cdot \depth(\calG))$. 
With Theorem~\ref{thm-balance-SSLP} we can improve the time bound 
to $\mathcal{O}(m^3 \cdot \log n)$.

\paragraph{Real time traversal for SSLP-compressed strings.} One wants to output the
symbols of $s$ from left to right and thereby spend constant time per symbol. A solution can be found in 
\cite{GasieniecKPS05}; a two-way version (where one can navigate in each step to the left or right neighboring position 
in $s$) can be found in \cite{LohreyMR18}. The drawback of these solutions is that they need space 
$\mathcal{O}(\depth(\calG))$. With Theorem~\ref{thm-balance-SSLP} we can reduce this to space
$\mathcal{O}(\log n)$.

\paragraph{Compressed range minimum queries.}
Range minimum data structure preprocesses a given string $s$ of integers
so that the following queries can be efficiently answered:
given $i \le j$, what is the minimum element in $s_i, \ldots, s_j$
(the substring of $s$ from position $i$ to $j$).
We are interested in the variant of the problem, in which the input is given as an SSLP $\mathcal G$.
It is known, that after a preprocessing taking $\mathcal O(|\mathcal G|)$ time,
one can answer range minimum queries in time $\mathcal O (\log n)$~\cite[Theorem~1.1]{{GaJoMoWe19}}.
This implementation extends the data structure for random access for SSLP~\cite{BilleLRSSW15}
with some additional information,
which includes in particular adding standard range minimum data structures for subtrees leaving the heavy path
and extending the original analysis.
Using the balanced SSLP the same running time can be easily obtained,
without the need of hacking into the construction of the balanced SSLP.
To this end for each variable $X$ we store the length $\ell_X$ of the derived word $\valX{X}$
as well the minimum value in $\valX{X}$.
In the following, let $\rmq(X,i,j)$ be the range minimum query called on $\valX{X}$ for interval $[i, j]$.
Given $\rmq(X,i,j)$, with the rule for $X$ being $X \to YZ$ we proceed as follows:
\begin{itemize}
	\item If the query asks about the minimum in the whole $\valX{X}$,
	i.e., $i = 1$ and $j = \ell_X$,
	then we return the minimum of $\valX{X}$; we call this case trivial in the following.
	\item If the whole range is within the substring generated by the first variable in the rule,
	i.e., $j \le \ell_Y$,
	then we call $\rmq(Y,i,j)$.
	\item If the whole range is within the substring generated by the second nonterminal in a rule,
	i.e., $i > \ell_Y$,
	then we call $\rmq(Z,i-\ell_Y, j-\ell_Y)$.
	\item Otherwise,
	i.e., when $i \leq \ell_Y$ and $j > \ell_Y$
	and $(i, j) \neq (1, \ell_X)$,
	the range spans over the substrings generated by both nonterminals.
	Thus we compute the queries for two substrings and take their minimum,
	i.e., we return the minimum of 
	$\rmq(Y,i,\ell_Y)$ and $\rmq(Z,1,j - \ell_Y)$.
\end{itemize}
To see that the running time is $\mathcal O(\depth (\mathcal G)) = \mathcal O(\log n)$
observe first that the cost of trivial cases can be charged to the
function that called them.
Thus it is enough to estimate the number of nontrivial recursive calls.
In the second and third case there is only one recursive call for a variable
that is deeper in the derivation tree of the SSLP.
In the fourth case there are two calls, 
but two nontrivial calls are made at most once
during the whole computation:
if two nontrivial calls are made in the fourth case
then one of them asks for the $\rmq$ of a suffix of $\valX{Y}$
and the other call asks for the $\rmq$ of a prefix of $\valX{Z}$.
Moreover, every recursive call on a prefix of some string $\valX{X'}$
leads to at most one nontrivial call, which is again on a prefix of
some string $\valX{X''}$; and analogously for suffixes.

\paragraph{Lifshits' algorithm for compressed pattern matching~\cite{lifshitsmatching}.}
The input consists of an SSLP $\mathcal{P}$ for a~pattern $p$ and an SSLP $\mathcal{T}$ for a 
text $t$ and the question is whether $p$ occurs in $t$. Lifshits' algorithm has a 
running time of $\mathcal{O}(|\mathcal P|\cdot |\mathcal T|^2)$.
It was conjectured by the author that the running time could be improved to
$\mathcal{O}(|\mathcal P|\cdot |\mathcal T| \cdot \log |t|)$.
This follows easily from Theorem~\ref{thm-balance-SSLP}:
the algorithm fills a table of size $|\mathcal P| \cdot |\mathcal T|$
and on each entry it calls a recursive subprocedure, whose running time is at most $\depth(\mathcal T)$.
By Theorem~\ref{thm-balance-SSLP} we can bound the running time by $\mathcal{O}(\log |t|)$,
which proves Lifshits' conjecture.
Note, that in the meantime a faster algorithm with running time
$\mathcal{O}(|\mathcal T| \cdot \log |p|)$~\cite{FCPM}
was found.

\paragraph{Smallest grammar problem,}
We conclude Part I of the paper with a remark on the so-called {\em smallest grammar problem} for strings.
In this problem one wants to compute for a given string $w$ a smallest SSLP defining $w$.
The decision variant of this problem is NP-hard, the best known approximation lower bound is $\frac{8569}{8568}$~\cite{CLLLPPSS05},
and the best known approximation algorithms have an approximation ratio of $\mathcal{O}(\log n)$, where $n$ is the 
length of the input string \cite{CLLLPPSS05,Ryt03,Jez15,simplegrammar}. Except for \cite{Jez15},
all these algorithms produce SSLPs of depth $\mathcal{O}(\log n)$. It was discussed in \cite{Jez15}
that the reason for the lack of constant-factor approximation algorithms
might be the fact that smallest SSLPs can have larger than logarithmic depth.
Theorem~\ref{thm-balance-SSLP} refutes this approach.

\section{Part II: Balancing circuits over algebras} \label{sec-part-II}

In the second part of the paper we prove our general balancing result Theorem~\ref{thm-general-balancing}.
This requires some technical machinery concerning (multi-sorted) algebra, terms, and straight-line programs over
algebras. This machinery is introduced in Sections~\ref{sec-trees}--\ref{sec-TSLP->SLP}.
In Sections~\ref{sec-main-result-balancing} and~\ref{sec-proof} we prove (a reformulation of) Theorem~\ref{thm-general-balancing}.
Finally in Section~\ref{sec-FSLP} and \ref{sec-cluster}  we apply Theorem~\ref{thm-general-balancing} to forest straight-line programs
and top dags, which yields Theorem~\ref{thm-balancing-TSLP} from the introduction.

\subsection{Algebras and their straight line programs} \label{sec-algebras+SLPs}

\subsubsection{Ranked trees} \label{sec-trees}

Let us fix a finite set $\So$ of {\em sorts}. Later, we will assign to each sort $i \in \So$ a set $A_i$ (of elements of sort $i$).
An {\em $\So$-sorted signature} is a set of symbols $\Gamma$ and a mapping
$\type \colon \Gamma \to \So^+$ that assigns to each symbol from $\Gamma$ a non-empty word
over the alphabet $\So$.
The number $|\type(f)|-1 \ge 0$ is also called the {\em rank} of $f$.
Let $\Gamma_i \subseteq \Gamma$ ($i \ge 0$) be the set of all symbols in $\Gamma$ of rank $i$.

Let us also fix a second (infinite) $\So$-sorted signature $\X$, where every $x \in \X$ has rank zero.
Elements of $\X$ are called {\em variables}. Since $x \in \X$ has rank zero, $\type(x)$ is an element of
$\So$. For $p \in \So$ let $\X_p = \{x \mid \type(x)=p\}$. 
We assume that every set $\X_p$ is infinite.
We will always work with a finite subset $\Y$ of $\X$. 
Take such a set $\Y$.
For each sort $p \in \So$ we define the set of {\em terms} 
$\T_p(\Gamma,\Y)$ of sort $p$ by simultaneous induction  as 
the smallest set such that the following holds:
\begin{itemize}
\item Every $x \in \X_p \cap \Y$ belongs to $\T_p(\Gamma,\Y)$.
\item If $f \in \Gamma_n$ with $\type(f) = p_1 \cdots p_n q$ 
and $t_i \in \T_{p_i}(\Gamma,\Y)$ for $1 \leq i \leq n$, 
then $f(t_1, t_2, \ldots, t_n) \in \T_q(\Gamma,\Y)$. 
\end{itemize}
We write $\T_p(\Gamma)$ for $\T_p(\Gamma,\emptyset)$, and call its elements {\em ground terms} (of sort $p$).
Note that if $a \in \Gamma_0$ and 
$\type(a) = p \in \So$
then $a() \in \T_p(\Gamma)$. In this case, we write $a$ for $a()$ and call $a$ a constant of sort $p$.
Let $\T(\Gamma,\Y) = \bigcup_{p \in \So} \T_p(\Gamma,\Y)$.

Elements of $\T(\Gamma,\Y)$ can be viewed as node labeled trees, where leaves are labeled with symbols form $\Gamma_0 \cup \Y$
and every internal node is labeled with a symbol from some $\Gamma_n$ with $n \geq 1$:
The root of the tree corresponding to the term $f(t_1, t_2, \ldots, t_n)$ is labeled with $f$ and its direct subtrees
are the trees corresponding to $t_1, \ldots, t_n$. 
Note that the composition of two functions $f \colon A \to B$ and $g \colon B \to C$ is denoted
by $g \circ f$,
in particular we first apply $f$ followed by $g$.

For a term $t$ we define the size $|t|$ of $t$ as the  number of edges of the corresponding tree.
Equivalently, $|t|$ is inductively defined as follows: If $t = x$ is a variable, then $|t| = 0$.
If $t = f(t_1, t_2, \ldots, t_n)$ for $f \in \Gamma$, then $|t| = n + \sum_{i=1}^n |t_i|$.
The {\em depth} of a term $t$ is denoted by $\depth(t)$ and defined inductively as usual:
If $t = x$ is a variable, then $\depth(t) = 0$.
If $t = f(t_1, t_2, \ldots, t_n)$ for $f \in \Gamma$, 
then $\depth(t) = \max \{1+\depth(t_i) \mid 1 \le i \le n\}$ with $\max \emptyset = 0$.

\begin{defi}[substitutions] \label{def-sub}
A {\em substitution} is a mapping $\eta \colon \Y \to \T(\Gamma,\Z)$ for finite (not necessarily disjoint) subsets
$\Y, \Z \subseteq \X$ such that $y \in \Y \cap \X_p$ implies $\eta(y) \in \T_p(\Gamma,\Z)$.
If $\Z = \emptyset$, we speak of a {\em ground substitution}.
For $t \in \T(\Gamma,\Y)$ we define the term $\eta(t)$ by replacing simultaneously
all occurrences of variables in $t$ by their images under $\eta$. Formally we extend 
$\eta \colon \Y \to \T(\Gamma,\Z)$ to a mapping $\eta \colon \T(\Gamma,\Y) \to \T(\Gamma,\Z)$
by $\eta(f(t_1,\ldots,t_n)) = f(\eta(t_1), \ldots, \eta(t_n))$ (in particular, $\eta(a)=a$ for $a \in \Gamma_0$).
A {\em variable renaming} is a bijective substitution $\eta \colon \Y \to \Z$ for finite variable sets $\Y$ and $\Z$ of
the same size.
\end{defi}

\begin{defi}[contexts] \label{def-contexts}
Let $p,q \in \So$. We define the set of {\em contexts}
$\calC_{pq}(\Gamma,\Y)$ as the set of all terms $t \in \T_q(\Gamma, \Y \cup \{x\})$,
where $x \in \X_p \setminus \Y$ is a fresh variable such that 
(i) $t \neq x$, (ii) and $x$ occurs exactly once in $t$. We call $x$ the {\em main variable} of $t$
and $\Y$ the set of auxiliary variables of $t$.\footnote{Since also $\Y$ may contain a variable $y$
that occurs exactly once in $t$, we explicitly have to declare a variable as the main variable. Most of the
times, the main variable will be denoted with $x$.}
We write $\calC_{pq}(\Gamma)$ for $\calC_{pq}(\Gamma,\emptyset)$. Elements of 
$\calC_{pq}(\Gamma)$ are called {\em ground contexts}. Let $\calC(\Gamma,\Y) = 
\bigcup_{p,q\in\So} \calC_{pq}(\Gamma,\Y)$ and $\calC(\Gamma) = \calC(\Gamma,\emptyset)$.
For $s \in \calC_{qr}(\Gamma,\Y)$ 
and $t \in \T_q(\Gamma,\Z)$ (or $t \in \calC_{pq}(\Gamma,\Z))$
we define $s[t] \in \T_r(\Gamma, \Y \cup \Z)$ ($s[t] \in \calC_{pr}(\Gamma, \Y \cup \Z)$) 
as the result of replacing the unique occurrence of the main variable in $s$ by $t$.
Formally, we can define $s[t]$ as $\eta(s)$ where $\eta$ is the substitution with domain
$\{x\}$ and $\eta(x)=t$, where $x$ is the main variable of $s$.
An {\em atomic context} is a context of the form
$f(y_1,\ldots,y_{k-1},x,y_{k+1},\ldots,y_k)$ where $x$ is the main variable and
the $y_i$ are the auxiliary variables (we can have $y_i = y_j$ for $i \neq j$).
Note that there are only finitely many atomic contexts up to renaming of variables.
\end{defi}

\subsubsection{Algebras} \label{sec-alg}

We will produce strings, trees and forests by ground terms (also called algebraic expressions in this context) over certain (multi-sorted) algebras. These expressions
will be compressed by directed acyclic graphs. In this section, we introduce the generic framework, which will
be reinstantiated several times later on.

Fix a finite $\So$-sorted signature $\Gamma$.
A {\em $\Gamma$-algebra} is a tuple $\A = ( (A_p)_{p \in \So}, (f^{\A})_{f \in \Gamma})$ where
every $A_p$ is a non-empty set (the universe of sort $p$ or the set of elements of sort $p$) and
for every $f \in \Gamma_n$ with $\type(f) = p_1 p_2 \cdots p_n q$, 
$f^{\A} \colon \prod_{1 \le j \le n}  A_{p_j} \to A_q$ is an $n$-ary function.
We also say that $\Gamma$ is the {\em signature} of $\A$. In our settings, the sets $A_p$ will be always
pairwise disjoint, but formally we do not need this.
Quite often, we will identify the function $f^{\A}$ with the symbol $f$. 
Functions of arity zero are elements of some $A_p$. A ground term $t \in \T_p(\Gamma)$ can be viewed as 
algebraic expressions over $\A$ that evaluates to an element $t^{\A} \in A_p$ in the natural way.
For $x \in \bigcup_{p \in \So} A_p$ we also write $x \in \A$ and for $A_p$ we also write $\A_p$.

When we define a $\Gamma$-algebra, we usually will not specify the types of the symbols
in $\Gamma$. Instead, we just list the sets $A_p$ ($p \in \So$) and the functions $f^{\A}$ ($f \in \Gamma$) including
their domains. The latter implicitly determine the types of the symbols in $\Gamma$.

\begin{ex} \label{ex-vector-space}
A~well known example of a~multi-sorted algebra is a~vector space. More precisely, it can be formalized
as a~$\Gamma$-algebra, where $\Gamma = \{ \overline{0}, 0, 1, \oplus, \odot, +, \cdot \}$ is 
a $\So$-sorted signature for $\So = \{ v, s \}$. Here $v$ stands for ``vectors'' and $s$ stands
for ``scalars''. The types of the symbols in $\Gamma$ are defined as follows:
\begin{itemize}
\item $\type(\overline{0}) = v$ (the zero vector), 
\item $\type(0) = s$ (the $0$-element of the scalar field), 
\item $\type(1) = s$ (the $1$-element of the scalar field),
\item $\type(\oplus) = vvv$ (vector addition), 
\item $\type(\odot) = svv$ (multiplication of a~scalar by a~vector),
\item $\type(+) = sss$ (addition in the field of scalars),
\item $\type(\cdot) = sss$ (multiplication in the field of scalars).
\end{itemize} 
Note that we cannot define non-trivial vectors by ground terms. For this, we should add some constants of type $v$ 
to the signature. For the vector space $F^n$ for a field $F$ we might for instance add the constants $e_1, \ldots, e_n$,
where $e_i$ denotes the $i$-th unit vector.
\end{ex}
From the sets $\T_p(\Gamma)$ one can construct the {\em free
term algebra} $$\T(\Gamma) = ( (\T_p(\Gamma))_{p \in \So}, (f)_{f \in \Gamma}),$$ where every ground term
evaluates to itself. For every $\Gamma$-algebra $\A$, the mapping $t \mapsto t^{\A}$
($t \in  \T(\Gamma)$) is a homomorphism from the free term algebra to $\A$.
We need the technical assumption that this homomorphism is surjective, i.e., for 
every $a \in \A$ there exists a ground term $t \in \T(\Gamma)$ with $a = t^{\A}$. 
In our concrete applications this assumption will be satisfied. Moreover, one can always replace
$\A$ by the subalgebra induced by the elements $t^{\A}$ (we will say more about this later).

For a $\Gamma$-algebra $\A = ( (A_p)_{p \in \So}, (f^{\A})_{f \in \Gamma})$,
a variable $x \in \X_p$ and $a \in A_p$,
we define the $(\Gamma \cup \{x\})$-algebra $\A[x/a] = ( (A_p)_{p \in \So}, (f^{\A[x/a]})_{f \in \Gamma\cup\{x\}})$
by $f^{\A[x/a]} = f^{\A}$ for $f \in \Gamma$ and $x^{\A[x/a]} = a$.

\begin{defi}[unary linear term functions] \label{def-term-func}
Given a $\Gamma$-algebra $\A$ and a ground context
$t \in \calC_{pq}(\Gamma)$ with main variable $x$, 
we define the function $t^{\A} \colon A_p \to A_q$ by
$t^{\A}(a) = t^{\A[x/a]}$ for all $a \in A_p$.
We call $t^{\A}$ a {\em unary linear term function}, ULTF for short.
We write $\lin_{pq}(\A)$ for the set of all ULTFs $t^{\A}$ with
$t \in \calC_{pq}(\Gamma)$.
\end{defi}

\begin{ex}
Consider the the vector space $\mathbb{R}^2$ in the context of Example~\ref{ex-vector-space} and let
$t = e_1 \oplus ( (1+1) \odot (x_v \oplus e_2)) \in \mathcal{C}_{vv}$ (recall that $v$ is the sort of vectors). 
The~corresponding ULTF is the affine mapping $x \mapsto 2 x \oplus (1,2)^{\operatorname{T}}$ on $\mathbb{R}^2$.
\end{ex}
As another example note that a~ULTF, where the underlying algebra is a~ring $\mathcal{R}$ (this is a~one-sorted algebra), is nothing else
than a~linear polynomial over $\mathcal{R}$ in a~single variable $x$.

\subsubsection{Straight-line programs} \label{sec-SLP}

Let $\Gamma$ be any $\So$-sorted signature.
A {\em straight-line program} over $\Gamma$ ($\Gamma$-SLP for short)
is a tuple $\calG = (\V, \rho, S)$, where $\V \subseteq \mathcal{X}$ is a finite set of variables,
$S \in \V$ is the {\em start variable} and $\rho \colon \V \to \T(\Gamma,\V)$ is a substitution (the so called
{\em right-hand side mapping})
such that the edge relation $E(\calG) =\{ (y, z) \in \V \times \V \mid z \text{ occurs in } \rho(y) \}$ is acyclic.
This implies that there exists an $n \geq 1$ such that $\rho^n \colon \T(\Gamma,\V) \to \T(\Gamma)$
(the $n$-fold composition of $\rho$) is a
ground substitution (we can choose $n = |\V|$). For this $n$, we write $\rho^*$ for $\rho^n$.
Note that $\rho^* \circ \rho = \rho \circ \rho^* = \rho^*$. 
The term defined by $\calG$ is $\valX{\calG} := \rho^*(S)$; it is also called the {\em derivation tree} of $\calG$.

In many papers on straight-line programs, the variables of a $\Gamma$-SLP are denoted 
by capital letters $X,Y,Z, X'$, etc. We follow this tradition.
For a variable $X \in \V$ we 
also write $\valXG{\calG}{X}$ (or $\valX{X}$ if $\calG$ is clear from the context) for the ground term $\rho^*(X)$. 

Let $\A$ be a $\Gamma$-algebra. A  $\Gamma$-SLP $\calG = (\V, \rho, S)$
is also called an SLP over the algebra $\A$.
We can evaluate every variable $X \in \V$ 
to its value $\rho^*(X)^{\A} = \valX{X}^{\A} \in \A$ in $\A$. 
It is important to distinguish this value from the 
syntactically computed ground term $\rho^*(X)$ (which is the evaluation of $X$ in the free term algebra). 
Also note that in part I of the paper, we used
the notation $\valX{X}$ for variables of string straight-line programs, which are obtained from the above general
definition by taking a free monoid $\Sigma^*$ for the structure $\A$. In other words: a string $\valX{X}$ from the first
part of the paper would be denoted with $\valX{X}^{\Sigma^*}$ in the second part of the paper. The reason for this
change in notation is two-fold. First, we did not want to overload the notation in Part I (especially for readers that are only
interested in the balancing result for strings); hence we decided to omit the 
superscripts $\Sigma^*$ there. Second, in the following sections the ground terms $\valX{X}$ are the important objects,
which justifies a short notation for them.

The term $\rho(X)$ is also called the {\em right-hand side} of the variable $X \in \V$.
By adding fresh variables,
we can transform every $\Gamma$-SLP in linear time into a so-called {\em standard $\Gamma$-SLP},
where all right-hand sides have the form $f(X_1, \ldots, X_{n})$ for variables
$X_1, \ldots, X_n$ (we can have $X_i = X_j$ for $i \neq j$). 
A~standard $\Gamma$-SLP $\calG$ is the same object as a DAG (directed acyclic graph) with $\Gamma$-labelled
nodes: the DAG is $(\V,E(\calG))$ and if $\rho(X) = f(X_1, \ldots, X_{n})$ then node $X$ is labelled with $f$.
Since the order of the edges $(X,X_i)$ ($1 \le i \le n$) is important and we may have $X_i = X_j$ for $i \neq j$
we formally replace the edge $(X,X_i)$ by the triple $(X,i,X_i)$. A $\Gamma$-SLP interpreted over a 
$\Gamma$-algebra $\A$ is also called an {\em algebraic circuit} over $\A$.

Consider a (possibly non-standard) $\Gamma$-SLP $\calG = (\V,\rho,S)$.
We define the {\em size} of $|\calG|$ as $\sum_{X \in \V} |\rho(X)|$.
For a standard $\Gamma$-SLP this is the number of edges of the corresponding DAG $(\V,E(\calG))$.
The {\em depth} of $\calG$ is defined as $\depth(\calG) = \depth(\valX{\calG})$,
i.e.~the depth of the derivation tree of $\calG$.
For a standard $\Gamma$-SLP $\calG$ this is the
maximum length of a directed path in the DAG $(\V,E(\calG))$.
Our definitions of size and depth ensure that both measures do not increase when one transforms 
a given $\Gamma$-SLP into a standard $\Gamma$-SLP.
In this paper, the sizes of the right-hand sides will be always bounded
by a constant that only depends on the underlying algebra $\A$.

\subsubsection{Functional extensions} 

An important concept in this paper is a functional extension $\hat\T(\Gamma)$ of the free term algebra $\T(\Gamma)$. 
We define an algebra $\hat\T(\Gamma)$ over an $\So \cup \So^2$-sorted signature $\hat{\Gamma}$.

\begin{defi}[Signature $\hat{\Gamma}$] \label{def-hatGamma}
Let $\Gamma$ be a $\So$-sorted signature.
The $\So \cup \So^2$-sorted signature $\hat{\Gamma}$ is
\begin{equation} \label{gamma-hat}
\hat\Gamma = \Gamma \uplus \bigcup_{n \ge 1} \{ \hat{f}_i \mid f \in \Gamma_n, 1 \le i \le n\} 
\uplus \{ \gamma_{pqr} \mid p,q,r \in \So \} \uplus \{ \alpha_{pq}  \mid p,q \in \So \}
\end{equation}
where the type function is defined as follows:
\begin{itemize}
\item Symbols from $\Gamma$ have the same types in $\hat \Gamma$.
\item If $\type(f) = p_1 \cdots p_n q$ then $\type(\hat{f}_i) = p_1 \cdots p_{i-1} p_{i+1} \cdots p_n q$.
\item For all $p,q,r \in \So$ we set $\type(\gamma_{pqr}) = (p,q)(q,r)(p,r)$.
\item For all $p,q \in \So$ we set $\type(\alpha_{pq}) = p(p,q)q$.
\end{itemize}
\end{defi}

\begin{defi}[$\hat\Gamma$-algebra  $\hat\T(\Gamma)$] \label{def-hatT}
The $\hat\Gamma$-algebra
$\hat\T(\Gamma) = ( (A_s)_{s \in \So \cup \So^2}, (f^{\hat\T(\Gamma)})_{f \in \hat{\Gamma}})$
is defined
as follows: the sets $A_p$ and $A_{pq}$ for $p,q \in \So$
are defined as
\begin{itemize}
\item $A_p = \T_p(\Gamma)$ and
\item $A_{pq} = \calC_{pq}(\Gamma)$.
\end{itemize}
The operations $g^{\hat\T(\Gamma)}$ ($g \in \hat\Gamma$) are defined as follows, where we write 
$g$ instead of  $g^{\hat\T(\Gamma)}$:
\begin{itemize}
\item For every symbol $f\in\Gamma_n$
the algebra $\hat\T(\Gamma)$ inherits the function $f^{\T(\Gamma)}$ from $\T(\Gamma)$.

\item For every symbol $f\in\Gamma_n$ with $\type(f)=p_1\cdots p_n q$ ($n \geq 1$) and every
$1 \le k \le n$ we define the $(n-1)$-ary operation
\[
\hat{f}_k \colon \prod_{1 \le i \le n \atop i \neq k} \T_{p_i}(\Gamma) \to \calC_{p_kq}(\Gamma)
\]
by $\hat{f}_k(t_1, \ldots, t_{k-1},t_{k+1},\ldots,t_n) = f(t_1, \ldots, t_{k-1},x,t_{k+1},\ldots,t_n)$ for all
$t_i \in \T_{p_i}(\Gamma)$ ($1 \le i \le n$, $i \neq k$).
\item For all $p,q,r \in \So$ the binary operation $\gamma_{pqr} \colon \calC_{pq}(\Gamma) \times \calC_{qr}(\Gamma) \to \calC_{pr}(\Gamma)$ is defined
by $\gamma_{pqr}(t,s) = s[t]$. 
\item For all $p,q \in \So$ the binary operation $\alpha_{pq} \colon \T_{p}(\Gamma) \times \calC_{pq}(\Gamma)  \to \T_{q}(\Gamma)$ is defined
by $\alpha_{pq}(t,s) = s[t]$. 
\end{itemize}
\end{defi}
The definition of the operations $\alpha_{pq}$ and $\gamma_{pqr}$ suggests to write 
$s[t]$ instead of $\alpha_{pq}(t,s)$ or $\gamma_{pq}(t,s)$, which we will do most of the times.

Recall the definition of unary linear term functions (ULTFs) from Definition~\ref{def-term-func}.
An {\em atomic ULTF} is of the form $z \mapsto f^{\A}(a_1,\ldots,a_{k-1},z,a_{k+1},\ldots,a_n)$
for $f \in \Gamma_n$ with $\type(f) = p_1 \cdots p_nq$ and $a_i \in A_{p_i}$ for 
($1 \le i \le n$, $i \neq k$). We denote this function with $f^{\A}(a_1,\ldots,a_{k-1},\cdot,a_{k+1},\ldots,a_n)$
in the following. At this point, we use the assumption that every element of $\A$ can be written as $t^{\A}$
for a ground term $t$. Hence, the elements $a_i$ are defined by terms, which ensures that 
$f^{\A}(a_1,\ldots,a_{k-1},\cdot,a_{k+1},\ldots,a_n)$ is indeed a ULTF.
It is easy to see that every ULTF is the composition of finitely many atomic ULTFs.

\begin{defi}[$\hat\Gamma$-algebra  $\hat\A$] \label{def-hatA}
Given a $\Gamma$-algebra $\A = ((A_p)_{p \in \So}, (f^{\A})_{f \in \Gamma})$
we define the $\hat\Gamma$-algebra $\hat\A = 
((B_s)_{s \in \So \cup \So^2}, (f^{\hat\A})_{f \in \hat{\Gamma}})$ as follows:
The sets $B_p$ and $B_{pq}$ for $p,q \in \So$
are defined as:
\begin{itemize}
\item $B_p = A_p$  and
\item $B_{pq} = \lin_{pq}(\A)$.
\end{itemize}
The operations $g^{\hat\A}$ ($g \in \hat\Gamma$) are defined as follows, where we write 
$g$ instead of  $g^{\hat\A}$.
\begin{itemize}
\item Every $f \in \Gamma$ is interpreted as $f^{\hat \A} = f^{\A}$.
\item For every symbol $f\in\Gamma_n$ with $\type(f)=p_1\cdots p_n q$ ($n \geq 1$) and every
$1 \le k \le n$ we define the $(n-1)$-ary operation
$$
\hat{f}_k : \prod_{1 \le i \le n \atop i \neq k} A_{p_i} \to \lin_{p_kq}(\A)
$$
by $\hat{f}_k(a_1, \ldots, a_{k-1},a_{k+1},\ldots,a_n) = f^{\A}(a_1, \ldots, a_{k-1},\cdot,a_{k+1},\ldots,a_n)$ for all
$a_i \in A_{p_i}$ ($1 \le i \le n$, $i \neq k$).
\item For all $p,q,r \in \So$ the binary operation $\gamma_{pqr} \colon \lin_{pq}(\A) \times \lin_{qr}(\A) \to 
\lin_{pr}(\A)$ is defined as function composition: $\gamma_{pqr}(g, h) = h \circ g$.
\item For all $p,q \in \So$ the binary operation $\alpha_{pq} \colon A_p \times \lin_{pq}(\A)  \to A_q$ is defined
as function application: $\alpha_{pq}(a,g) = g(a)$. 
\end{itemize}
\end{defi}
Note that Definitions~\ref{def-hatT} and~\ref{def-hatA} are consistent in the following sense:
If we apply the construction from Definition~\ref{def-hatA} for $\A = \T(\Gamma)$ (the free term algebra)
then we obtain an isomorphic copy of the algebra $\hat\T(\Gamma)$ from Definition~\ref{def-hatT}, i.e., 
$\widehat{\T(\Gamma)} \cong \hat\T(\Gamma)$. Moreover, the mappings $t \mapsto t^{\A}$ (for ground terms $t$)
and $c \mapsto c^{\A}$ (for ground contexts $c$) yield a canonical surjective morphism from $\hat\T(\Gamma)$ to $\hat\A$
that extends the canonical morphism from the free term algebra $\T(\Gamma)$ to $\A$.

\subsubsection{Tree straight-line programs} 

Recall the definition of 
the $\So \cup \So^2$-sorted signature $\hat\Gamma$ in \eqref{gamma-hat}.
A $\hat\Gamma$-SLP $\calG$ which evaluates in the $\hat\Gamma$-algebra 
$\hat\T(\Gamma)$ to a ground term (i.e., $\valX{\calG}^{\hat\T(\Gamma)} \in \T(\Gamma)$) is also called
a {\em tree straight-line program} over $\Gamma$ ($\Gamma$-TSLP for short)~\cite{GHJLN17,GL18,Lohrey15dlt}.

Recall that $\hat\Gamma$ contains for every $f \in \Gamma_n$ with $n \geq 1$ the unary symbols
$\hat{f}_k$ ($1 \le k \le n$). Right-hand sides of the form $\hat{f}_k(X_1,\ldots,X_{k-1},X_{k+1},\ldots,X_n)$ 
in a $\Gamma$-TSLP are written for better readability as $f(X_1,\ldots,X_{k-1},x,X_{k+1},\ldots,X_n)$. 
This is also the notation used in \cite{GHJLN17,GL18,Lohrey15dlt}.
For right-hand sides of the form $\alpha_{pq}(X,Y)$ or $\gamma_{pqr}(X,Y)$ we write $X[Y]$.

\begin{ex}
Let us assume that $\So$ consists of a single sort.
Consider the $\Gamma$-TSLP 
$$\calG = (\{S,X_1,\ldots, X_7\},\rho,S)$$ with 
$\Gamma_2 = \{f,g\}$, $\Gamma_0 = \{a,b\}$ and 
$\rho(S) = X_1[X_2]$,  
$\rho(X_1) = X_3[X_3]$, 
$\rho(X_2) = X_4[X_5]$,
$\rho(X_3) = f(x,X_7)$, 
$\rho(X_4) = X_6[X_6]$,  
$\rho(X_5) = a$,
$\rho(X_6) = g(X_7,x)$,
$\rho(X_7) = b$.
We get 
\begin{itemize}
\item $\valX{X_6}^{\hat\T(\Gamma)} = \rho^*(X_6)^{\hat\T(\Gamma)} = g(b,x)$,
\item $\valX{X_4}^{\hat\T(\Gamma)} = \rho^*(X_4)^{\hat\T(\Gamma)} = g(b,x)[g(b,x)] =
 g(b,g(b,x))$,
 \item $\valX{X_3}^{\hat\T(\Gamma)} = \rho^*(X_3)^{\hat\T(\Gamma)} = f(x,b)$,
\item $\valX{X_2}^{\hat\T(\Gamma)} = \rho^*(X_2)^{\hat\T(\Gamma)} = 
g(b,g(b,x))[a] = g(b,g(b,a))$,
\item $\valX{X_1}^{\hat\T(\Gamma)} =  \rho^*(X_1)^{\hat\T(\Gamma)} = 
f(x,b) [f(x,b)] =  f(f(x,b),b)$, and
\item $\valX{\calG}^{\hat\T(\Gamma)} = \rho^*(S)^{\hat\T(\Gamma)} =
f(f(x,b),b)[g(b,g(b,a))] =
f(f(g(b,g(b,a)),b),b)$.
\end{itemize}
\end{ex}

\subsubsection{From TSLPs to SLPs}  \label{sec-TSLP->SLP}

Fix a $\Gamma$-algebra $\A$.
Our first goal is to transform a $\Gamma$-TSLP $\calG$ into a $\Gamma$-SLP 
$\mathcal{H}$ of size $\mathcal{O}(|\calG|)$ and depth $\mathcal{O}(\depth(\calG))$ such that
$\valX{\mathcal{H}}^{\A} = \valX{\mathcal{G}}^{\hat\A}$. 
For this, we have to restrict the class of $\Gamma$-algebras. For instance, for
the free term algebra the above transformation cannot be achieved in general:
the chain tree $t_n = f(f(f(\cdots f(a)\cdots)))$ with 
$2^n$ occurrences of $f$ can be easily produced by a $\{a,f\}$-TSLP of size $\mathcal{O}(n)$
but the only DAG (= SLP over the free term algebra $\T(\{a,f\})$) for $t_n$ is $t_n$ itself.
We restrict ourselves to algebras with a finite subsumption base,
as defined below.
Such algebras have been implicitly used in our recent papers \cite{GHJLN17,GL18}. 

\begin{defi}[equivalence and subsumption preorder in $\A$] \label{def-subsumption}
For contexts $s,t \in \calC_{pq}(\Gamma,\Y)$ we say that $s$ and $t$ are {\em equivalent} in $\A$ 
if for every ground substitution $\eta \colon \Y \to \T(\Gamma)$ we have
$\eta(s)^{\A} = \eta(t)^{\A}$ (which is an ULTF).

For contexts $s \in \calC_{pq}(\Gamma,\Y)$ and
$t \in \calC_{pq}(\Gamma,\Z)$ we say that $t$ {\em subsumes} $s$ in $\A$ or that $s$ is subsumed by $t$ in $\A$
($t \leq^{\A} s$ for short) 
if there exists a substitution $\zeta \colon \Z \to \T(\Gamma, \Y)$ such that 
$s$ and $\zeta(t)$ are equivalent in $\A$.

A {\em subsumption base} of $\A$ is a set of (not necessarily ground) contexts $C$ such that
for every context $s$ there exists a context $t \in C$ with $t \le^{\A} s$.
\end{defi}
It is easy to see that $\le^{\A}$ is reflexive and transitive but 
in general not antisymmetric. 
Moreover, the relation $\le^{\A}$ satisfies the following monotonicity property:
\begin{lem}
	\label{lem:sub-com}
	Let $s \in \calC_{qr}(\Gamma,\Y)$, $t_1 \in \calC_{pq}(\Gamma,\Z_1)$ and $t_2 \in \calC_{pq}(\Gamma,\Z_2)$ be contexts
	such that $\Y \cap \Z_1 = \emptyset$ and $\Y \cup \Z_1 \cup \Z_2$ contains none of the main variables of $s$, $t_1$, $t_2$.
	If $t_1 \le^\A t_2$ then $s[t_1] \le^\A s[t_2]$.
\end{lem}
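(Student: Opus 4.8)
The plan is to unfold the definition of $\le^\A$ at both ends and to build an explicit witnessing substitution for $s[t_1] \le^\A s[t_2]$ from the one provided for $t_1 \le^\A t_2$. Write $x_s$ for the main variable of $s$ and $x$ for the main variable common to $t_1$ and $t_2$ (which is then also the main variable of $s[t_1]$ and of $s[t_2]$; recall that main variables are only fixed up to renaming, so there is no loss in assuming $t_1$ and $t_2$ share one). By Definition~\ref{def-subsumption}, $t_1 \le^\A t_2$ yields a substitution $\zeta \colon \Z_1 \to \T(\Gamma,\Z_2)$ such that $t_2$ and $\zeta(t_1)$ are equivalent in $\A$. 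Using the hypothesis $\Y \cap \Z_1 = \emptyset$, I define $\zeta' \colon \Y \cup \Z_1 \to \T(\Gamma,\Y \cup \Z_2)$ by $\zeta'(y) = y$ for $y \in \Y$ and $\zeta'(z) = \zeta(z)$ for $z \in \Z_1$; disjointness of $\Y$ and $\Z_1$ is exactly what makes these two clauses compatible, and the codomains are as required since $y \in \T(\Gamma,\Y)$ and $\zeta(z) \in \T(\Gamma,\Z_2)$. By the definition of $\le^\A$ it now suffices to prove that $s[t_2]$ and $\zeta'(s[t_1])$ are equivalent in $\A$.

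Next comes the bookkeeping with substitutions. Since, by hypothesis, neither $x_s$ nor $x$ lies in $\mathrm{dom}(\zeta') = \Y \cup \Z_1$, applying $\zeta'$ commutes with plugging $t_1$ into the hole of $s$: $\zeta'$ fixes the auxiliary variables $\Y$ of $s$ and rewrites the auxiliary variables $\Z_1$ of $t_1$ through $\zeta$, so $\zeta'(s[t_1]) = s[\zeta(t_1)]$. Here $\zeta(t_1)$ is still a context with main variable $x$ lying in $\calC_{pq}(\Gamma,\Z_2)$, because the codomain of $\zeta$ only involves the variables $\Z_2$ and $x \notin \Z_2$, so no new occurrence of $x$ is introduced. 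Thus the goal reduces to showing that $s[t_2]$ and $s[\zeta(t_1)]$ are equivalent in $\A$.

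For this, fix an arbitrary ground substitution $\eta \colon \Y \cup \Z_2 \to \T(\Gamma)$. Splitting off the plug-in once more (which is legitimate because $x_s, x \notin \mathrm{dom}(\eta)$), one obtains the two ground contexts $\eta(s[t_2]) = \eta|_\Y(s)[\eta|_{\Z_2}(t_2)]$ and $\eta(s[\zeta(t_1)]) = \eta|_\Y(s)[\eta|_{\Z_2}(\zeta(t_1))]$. Evaluating in $\A$ and using $(c_1[c_2])^\A = c_1^\A \circ c_2^\A$, it is enough to prove $(\eta|_{\Z_2}(t_2))^\A = (\eta|_{\Z_2}(\zeta(t_1)))^\A$. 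But $\eta|_{\Z_2}$ is a ground substitution on $\Z_2$, so this is precisely an instance of the equivalence of $t_2$ and $\zeta(t_1)$ in $\A$ granted by the hypothesis, and the proof is complete.

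I do not expect a real obstacle: the whole argument is formal manipulation of terms, contexts and substitutions. The two points needing care are (i) that $\zeta'$ is well defined, which is exactly where $\Y \cap \Z_1 = \emptyset$ enters, and (ii) that applying a substitution commutes with plugging a context into the hole of $s$, and that the intermediate term $s[\zeta(t_1)]$ is again a genuine context; both of these rely on the main variables $x_s$ and $x$ staying outside every substitution domain, i.e.\ on the hypothesis that $\Y \cup \Z_1 \cup \Z_2$ contains none of the main variables of $s$, $t_1$, $t_2$. The harmless fact that the main variable of a context is only determined up to renaming should be noted but requires no work.
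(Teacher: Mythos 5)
Your proof is correct and follows essentially the same route as the paper's: you build the same extended substitution $\zeta'$ (identity on $\Y$, $\zeta$ on $\Z_1$), using $\Y \cap \Z_1 = \emptyset$ for well-definedness, and verify the equivalence of $s[t_2]$ and $\zeta'(s[t_1])$ via an arbitrary ground substitution and the identity $(c_1[c_2])^\A = c_1^\A \circ c_2^\A$. The only cosmetic difference is that you split the ground substitution into its restrictions to $\Y$ and $\Z_2$, whereas the paper carries out the same computation in one chain of equalities.
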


\begin{proof}
	Since $t_1$ subsumes $t_2$ in $\A$ there exists a substitution $\zeta \colon \Z_1 \to \T(\Gamma,\Z_2)$
	such that for every ground substitution $\eta \colon \Z_2 \to \T(\Gamma)$ we have
	\[
		\eta(t_2)^{\A} =   \eta(\zeta(t_1))^{\A}.
	\]
	Define the substitution $\zeta' \colon \Y \cup \Z_1 \to \T(\Gamma,\Y \cup \Z_2)$ by
	\[
		\zeta'(y) = \begin{cases}
		\zeta(y) & \text{if } y \in \Z_1, \\
		y & \text{if } y \in \Y.
		\end{cases}
	\]
	As $\Z_1 \cap \Y = \emptyset$ by the assumption, $\zeta'$ is well defined.
	It satisfies $\zeta'(t_1) = \zeta(t_1)$ and $\zeta'(s) = s$.
	For any ground substitution $\eta \colon \Y \cup \Z_2 \to \T(\Gamma)$ we have:
	\begin{eqnarray*}
		\eta(s[t_2])^\A &=& (\eta(s)[\eta(t_2)])^\A \\
		&=& \eta(s)^\A \circ \eta(t_2)^\A \\
		&=& \eta(s)^\A \circ \eta(\zeta(t_1))^\A \\
		&=& \eta(\zeta'(s))^\A \circ \eta(\zeta'(t_1))^\A \\
		&=& (\eta(\zeta'(s))[\eta(\zeta'(t_1))])^\A \\
		&=& \eta(\zeta'(s[t_1]))^\A .
	\end{eqnarray*}
	This implies $s[t_1] \le^\A s[t_2]$.
\end{proof}
We will be interested in algebras that have a finite subsumption base. 
In order to show that a set $C$ is a finite subsumption base
 we will use the following lemma.

\begin{lem}
	\label{lem:tame-atomic}
	Let $\A$ be a $\Gamma$-algebra and let $C$ be a finite set of contexts with the following properties:
	\begin{itemize}
         \item For every atomic context $s$ there exists $t \in C$ with $t \le^{\A} s$.
         \item For every atomic context $s$
	and every $t \in C$ such that $s[t]$ is defined and 
	$s$ and $t$ do not share auxiliary variables, there exists $t' \in C$ with 
	$t' \leq^{\A} s[t]$.
        \end{itemize}
     	Then $C$ is a subsumption base.
\end{lem}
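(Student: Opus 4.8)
The plan is to show, by induction on the structure of an arbitrary context $s \in \calC_{pq}(\Gamma,\Y)$, that there exists $t \in C$ with $t \le^{\A} s$. Recall from Definition~\ref{def-contexts} that a context is a term $s \in \T_q(\Gamma, \Y \cup \{x\})$ in which the main variable $x$ occurs exactly once and $s \neq x$. Since $x$ occurs exactly once, $s$ has a unique root symbol $f \in \Gamma_n$ with $n \ge 1$, and writing $s = f(s_1,\ldots,s_n)$, exactly one argument $s_k$ contains $x$. That argument $s_k$ is either equal to $x$ itself (the base case) or is again a context (the inductive case). I will exploit this ``spine'' structure: reading from the root down to $x$, a context decomposes as $s = c_1[c_2[\cdots[c_\ell]\cdots]]$ where each $c_j$ is an atomic context, i.e.\ $c_j = f_j(y_1,\ldots,y_{k_j-1},x,y_{k_j+1},\ldots)$ with the auxiliary variables instantiated by ground-or-$\Y$ terms (more precisely, $c_j$ is obtained from an atomic context by substituting the non-spine subterms for its auxiliary variables; since subsumption only cares about the function computed, one may first replace those subterms by fresh auxiliary variables, apply the hypothesis, and substitute back using the monotonicity Lemma~\ref{lem:sub-com}). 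The first bullet of the hypothesis handles $\ell = 1$. For the inductive step I would peel off the topmost atomic context: write $s = s'[s'']$ where $s'$ is atomic and $s'' \in \calC$ is the remaining, shorter context; by induction there is $t'' \in C$ with $t'' \le^{\A} s''$; by Lemma~\ref{lem:sub-com} (after a variable renaming to make the auxiliary-variable sets disjoint and to keep the main variables out of the way) we get $s'[t''] \le^{\A} s'[s''] = s$; and then the \emph{second} bullet of the hypothesis, applied to the atomic context $s'$ and the element $t'' \in C$, yields $t' \in C$ with $t' \le^{\A} s'[t'']$; transitivity of $\le^{\A}$ then gives $t' \le^{\A} s$.

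In more detail, the base case splits in two: if $s$ itself is atomic (i.e.\ $s_k = x$), the first bullet applies directly. Otherwise $s = f(s_1,\ldots,s_n)$ with $s_k$ a proper context; introduce a fresh variable $x'$ of the appropriate sort and set $s' = f(s_1,\ldots,s_{k-1},x',s_{k+1},\ldots,s_n)$, so that $s = s'[\,s_k\,]$ with $s'$ essentially atomic (it is atomic once we also abstract the subterms $s_i$, $i \neq k$, into fresh auxiliary variables). One has to be a little careful that $s'$ in its abstracted form is a \emph{context}, which it is: $x'$ occurs once, and $s' \neq x'$ because $f$ is the root. Then proceed as above. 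The only genuinely non-routine points are the bookkeeping of variable sets needed to invoke Lemma~\ref{lem:sub-com} — its hypotheses demand $\Y \cap \Z_1 = \emptyset$ and that the combined variable set avoid all three main variables — which is arranged by renaming the auxiliary variables of $t''$ away from those of $s'$ (legitimate since $\le^{\A}$ is invariant under variable renaming, as it is defined via quantification over all ground substitutions) and by choosing the main variables fresh; and the observation that abstracting the ``off-spine'' subterms into fresh variables and substituting them back preserves $\le^{\A}$, again by Lemma~\ref{lem:sub-com} together with the fact that $\le^{\A}$ is reflexive, so each such subterm is trivially subsumed by itself.

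The step I expect to be the main obstacle is the \emph{interaction between the two bullets and the off-spine subterms}: the second bullet is stated only for an atomic context $s$ and $t \in C$ that ``do not share auxiliary variables'', so before applying it I must have massaged the topmost layer of the general context $s$ into a genuinely atomic context — meaning its auxiliary arguments are plain variables, not arbitrary terms — and correspondingly must re-absorb those arguments afterwards without breaking the subsumption chain. Getting the order of operations right (abstract off-spine subterms $\to$ apply the induction hypothesis to the tail context $\to$ push through with Lemma~\ref{lem:sub-com} $\to$ apply the second bullet $\to$ substitute the subterms back, again via Lemma~\ref{lem:sub-com} $\to$ conclude by transitivity) and checking the disjointness side-conditions at each use of Lemma~\ref{lem:sub-com} is where the real work lies; everything else is a straightforward structural induction.
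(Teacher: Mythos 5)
Your plan is correct and is essentially the paper's own proof: a structural induction on the context in which the off-spine arguments of the top symbol are abstracted into fresh auxiliary variables, the induction hypothesis is applied to the tail context, Lemma~\ref{lem:sub-com} pushes it through the resulting atomic layer, the second bullet of the hypothesis is then invoked, and transitivity of $\le^{\A}$ concludes (the base case being the abstracted atomic context, the first bullet, and transitivity). The only small repair concerns your justification of the re-substitution step: $f(y_1,\dots,y_{i-1},s',y_{i+1},\dots,y_n) \le^{\A} f(s_1,\dots,s_{i-1},s',s_{i+1},\dots,s_n)$ follows directly from Definition~\ref{def-subsumption} via the substitution $y_j \mapsto s_j$, not from Lemma~\ref{lem:sub-com} together with ``reflexivity on subterms'' (subsumption is defined only for contexts, and Lemma~\ref{lem:sub-com} gives monotonicity in the context plugged into the hole, not in the auxiliary arguments), which is exactly the one-line observation the paper uses at this point.
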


\begin{proof}
	Assume that the two conditions from the lemma hold. We show by induction on $s$ that for every context $s$ 
	there exists a context $t \in C$ with $t \leq^{\A} s$.
	
	If $s = f(s_1, \dots, s_{i-1},x,s_{i+1}, \dots, s_n)$ for some terms $s_1, \dots, s_{i-1},s_{i+1}, \dots, s_n$
	then $s$ is subsumed in $\A$ by the atomic context $f(y_1, \dots, y_{i-1},x,y_{i+1}, \dots, y_n)$,
	which in turn is subsumed in $\A$ by some $t \in C$.
	If $s = f(s_1, \dots, s_{i-1},s',s_{i+1}, \dots, s_n)$ for some terms $s_1, \dots, s_n$ and some context $s'$
	then $f(y_1, \dots, y_{i-1},s',y_{i+1}, \dots, y_n) \le^{\A} s$ 
	for fresh auxiliary variables $y_1, \dots, y_{i-1},y_{i+1}, \dots, y_n$ (that neither occur in $s'$ nor any context from $C$). 
	By induction there exists $t' \in C$ with $t' \le^{\A} s'$. 
	By Lemma~\ref{lem:sub-com} we have $f(y_1, \dots, y_{i-1},t',y_{i+1}, \dots, y_n) \le^{\A} f(y_1, \dots, y_{i-1},s',y_{i+1}, \dots, y_n)$.
	By the second assumption from the lemma, we have that 
	$t'' \le^{\A} f(y_1, \dots, y_{i-1},t',y_{i+1}, \dots, y_n)$ for some $t'' \in C$.
	We get $t'' \le^{\A} s$ by transitivity of $\le^{\A}$.
\end{proof}

\begin{rem} \label{rem-tame-2}
Recall that we made the technical assumption that every element $a$ of $\A$ can be written as 
$t^{\A}$ for a ground term $\A$. Let $\B$ be the subalgebra of $\A$ that is induced by all elements $t^{\A}$ for $t \in \T(\A)$. It is 
obvious that every subsumption base of $\A$ is also a subsumption base of $\B$. 
\end{rem}

\begin{ex} \label{ex-semirings}
Every semiring $\A = (A,+,\times, a_1, \ldots, a_n)$, where $a_1, \ldots, a_n \in A$ are arbitrary
constants, has a finite subsumption base. Here we do not assume that $\times$ is commutative,
nor do we assume that identity elements with respect to $+$ or $\times$ exist. In other words:
$(A,+)$ is a commutative semigroup, $(A,\times)$ is a semigroup and the left and right distributive
law holds. The finite subsumption base $C(\A)$ consists of the following contexts 
$axb+c, ax+c, xb+c, x+c, axb, ax, xb$, and $x$, where $x$ is the main variable and $a,b,c$ are auxiliary
variables. We write $ab$ instead of $a \times b$ and omit in $axb$ brackets that
are not needed due to the associativity of multiplication. To see that every context $s$ is subsumed in $\A$
by one of the contexts from $C(\A)$, observe that a context defines a linear polynomial in the main variable $x$.
Hence, every context is equivalent in $\A$ to a context of the form 
$sxt+u, sx+u, xt+u, x+u, sxt, sx, xt$ or $x$, where $s,t,u$ are terms that contain the auxiliary
parameters. Each of these contexts is subsumed by a context from $C(\A)$ by the substitution
$\zeta$ with $\zeta(a)=s$, $\zeta(b)=t$, and $\zeta(c)=u$.

Let us remark that the above proof can be adapted to the situation that also $+$ is not commutative.
In that case, we have include the terms $c' + axb + c$,
$c'+ax+c$, $c'+xb+c$, $c'+x+c$, $c'+axb$, $c'+ax$, $c'+xb$ and $c'+x$ to the set $C(\A)$.

On the other hand, if $(A,+)$ has a neutral element, called $0$ in the following,
then $axb+c, ax+c, xb+c, x+c$ is s subsumption base.
To see this observe that, for instance, $axb$ is subsumed by $axb + c$,
which is shown by the substitution $c \mapsto 0$.
\end{ex}

\begin{ex} \label{ex-free}
If $\Gamma$ contains a symbol of rank at least one, then
the free term algebra $\T(\Gamma)$ has no finite subsumption base:
If $C$ were a finite subsumption base of $\T(\Gamma)$, then every ground context could be obtained from some
$t \in C$ by replacing the auxiliary parameters in $t$ by ground terms.
But this replacement does not change the length of the path from the root of the context to its main variable.
Hence, we would obtain a bound for the length of the path from the root to the main variable in a ground context,
which clearly does not exist.
\end{ex}

\begin{lem} \label{lemma-tslp-to-dag}
Assume that the $\Gamma$-algebra $\A$ has a finite subsumption base. Then from a given $\Gamma$-TSLP $\calG$ one can
compute in time $\mathcal{O}(|\calG|)$ a $\Gamma$-SLP $\mathcal{H}$ of size 
$\mathcal{O}(|\calG|)$ and depth $\mathcal{O}(\depth(\calG))$
such that $\valX{\calG}^{\hat\A} = \valX{\mathcal{H}}^{\A}$.
\end{lem}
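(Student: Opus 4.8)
\emph{Proof plan.} The plan is to traverse $\calG$ bottom-up along the acyclic relation $E(\calG)$ and to replace each variable by a constant-size gadget of $\Gamma$-SLP variables. First I would put $\calG$ into standard form, which by Section~\ref{sec-SLP} costs linear time and does not increase size or depth; so every right-hand side of $\calG$ has one of the forms $f(X_1,\dots,X_n)$ with $f\in\Gamma_n$, $\hat{f}_k(X_1,\dots,X_{k-1},X_{k+1},\dots,X_n)$ with $f\in\Gamma_n$, $n\ge 1$, $\gamma_{pqr}(X_1,X_2)$, or $\alpha_{pq}(X_1,X_2)$. Fix a finite subsumption base $C$ of $\A$ (Definition~\ref{def-subsumption}): for \emph{every} context $s$ there is $t\in C$ with $t\le^{\A}s$, i.e.\ a substitution $\xi$ of the auxiliary variables of $t$ by terms over the auxiliary variables of $s$ such that $\xi(t)$ and $s$ are equivalent in $\A$. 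Up to renaming there are only finitely many atomic contexts, and $C\times C$ is finite, so I can precompute a finite table: for each atomic context $s$ a pair $(t,\xi)$ witnessing $t\le^{\A}s$, and for each $t_1,t_2\in C$ with disjoint auxiliary variables a pair $(t,\xi)$ witnessing $t\le^{\A}t_2[t_1]$. This table depends only on $\A$, $\Gamma$, $C$ and is a constant of the construction.

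The $\Gamma$-SLP $\mathcal{H}$ is built so as to maintain, by induction along $E(\calG)$: (i) each variable $X$ of $\calG$ of sort $p\in\So$ gets a $\Gamma$-SLP variable $\bar X$ with $\valXG{\mathcal{H}}{\bar X}^{\A}=\valXG{\calG}{X}^{\hat\A}$; (ii) each variable $X$ of sort $(p,q)$ gets a context $t_X\in C$ (with private auxiliary variables) together with a $\Gamma$-SLP variable $\bar X_z$ for every auxiliary variable $z$ of $t_X$, such that evaluating $t_X$ in $\A$ with $z\mapsto\valXG{\mathcal{H}}{\bar X_z}^{\A}$ yields the ULTF $\valXG{\calG}{X}^{\hat\A}\in\lin_{pq}(\A)$. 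The two ``term cases'' are immediate: for $f(X_1,\dots,X_n)$ set $\rho_{\mathcal{H}}(\bar X)=f(\bar X_1,\dots,\bar X_n)$; for $\alpha_{pq}(X_1,X_2)$ observe that $\valXG{\calG}{X}^{\hat\A}=\valXG{\calG}{X_2}^{\hat\A}(\valXG{\calG}{X_1}^{\hat\A})$ equals the bounded term $t_{X_2}$ evaluated in $\A$ with main variable $\mapsto\valXG{\mathcal{H}}{\bar X_1}^{\A}$ and $z\mapsto\valXG{\mathcal{H}}{(\bar X_2)_z}^{\A}$, so expand $t_{X_2}$ into a constant-size gadget of new $\Gamma$-SLP variables referencing only $\bar X_1$ and the $(\bar X_2)_z$, and call its output $\bar X$.

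The ``context cases'' are where $C$ is used. For a right-hand side $\hat{f}_k(X_1,\dots)$ the ULTF $\valXG{\calG}{X}^{\hat\A}$ is the evaluation in $\A$ of the atomic context $s=f(y_1,\dots,x,\dots,y_n)$ under $y_i\mapsto\valXG{\calG}{X_i}^{\hat\A}$; take the table entry $(t,\xi)$ for $s$, set $t_X:=t$, and for each auxiliary variable $z$ of $t$ let $\bar X_z$ be a constant-size gadget computing the bounded term $\xi(z)$ with each $y_i$ replaced by $\bar X_i$. For $\gamma_{pqr}(X_1,X_2)$ we have $\valXG{\calG}{X}^{\hat\A}=\valXG{\calG}{X_2}^{\hat\A}\circ\valXG{\calG}{X_1}^{\hat\A}$, which is $t_{X_2}[t_{X_1}]$ evaluated in $\A$ under the assignments recorded for $X_1$ and $X_2$; take the table entry $(t,\xi)$ for $t_{X_2}[t_{X_1}]$, set $t_X:=t$, and let $\bar X_z$ be a constant-size gadget computing $\xi(z)$ with the auxiliary variables of $t_{X_1}$, $t_{X_2}$ replaced by the recorded SLP variables $(\bar X_1)_\cdot$, $(\bar X_2)_\cdot$. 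In every case the invariant is preserved because the canonical morphism $\hat\T(\Gamma)\to\hat\A$ (i.e.\ compatibility of $t\mapsto t^{\A}$ with substitution) turns these syntactic identities between ground terms and ground contexts into the claimed equalities in $\A$, and because $t\le^{\A}s$ says precisely that the auxiliary instantiations of $s$ may be replaced by those prescribed by $\xi$ without changing the $\A$-value. Taking $\bar S$ as the start variable of $\mathcal{H}$ then gives $\valX{\mathcal{H}}^{\A}=\valX{\calG}^{\hat\A}$.

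Each variable of $\calG$ spawns $\mathcal{O}(1)$ variables of $\mathcal{H}$, so $|\mathcal{H}|=\mathcal{O}(|\calG|)$, and the construction is a single topological-order pass with $\mathcal{O}(1)$ work per variable, hence runs in time $\mathcal{O}(|\calG|)$. Each gadget has depth $\mathcal{O}(1)$ and refers only to $\Gamma$-SLP variables of its own gadget or of the gadgets of the children of $X$ in $(\V,E(\calG))$, so a root-to-leaf path in the derivation tree of $\mathcal{H}$ projects onto a path of $(\V,E(\calG))$ and is longer by at most a constant factor; hence $\depth(\mathcal{H})=\mathcal{O}(\depth(\calG))$. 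I expect the main obstacle to be the case $\gamma_{pqr}$: composing contexts lengthens the path from the root to the main variable and, handled naively, leads out of every fixed finite set of contexts --- this is exactly why no such transformation exists over the free term algebra (Example~\ref{ex-free}). What rescues the argument is that $C$ subsumes \emph{all} contexts, in particular each of the finitely many $t_2[t_1]$ with $t_1,t_2\in C$, so the composed context can be collapsed back into $C$ via a precomputed subsumption at the cost of only a constant amount of SLP size. The remaining work --- keeping auxiliary variable sets disjoint, verifying the ``children only'' property of gadget references, and the morphism bookkeeping --- is routine.
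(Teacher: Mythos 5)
Your proposal is correct and follows essentially the same route as the paper's proof: a single bottom-up pass maintaining exactly the invariants (a) for term-sorted variables and (b) a base context $t_X\in C(\A)$ with fresh auxiliary variables for context-sorted variables, with the four right-hand-side cases ($f$-application, $\alpha$, atomic $\hat f_k$-context, $\gamma$-composition) handled via constant-time subsumption witnesses, which the paper phrases as the operation $s\cdot t$ and associated substitution $\zeta$ rather than as a precomputed table. The only cosmetic difference is that you spell out the depth argument slightly more explicitly than the paper does; the substance is identical.
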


\begin{proof}
Let $C(\A)$ be a finite subsumption base for $\A$.
We say that a context $s \in \calC(\Gamma,\Y)$  belongs to $C(\A)$ 
{\em up to variable renaming} if there is a variable renaming $\theta : \Y \to \Z$
such that $\theta(s) \in C(\Gamma)$.
Since the algebra $\A$ is fixed, the set $C(\A)$ has size $\mathcal{O}(1)$.
Assume that $s$ and $t$ are contexts with the following properties:
(i) $s[t]$ is defined,  (ii) $s$ and $t$ have no common auxiliary variable, and
(iii) $s$ and $t$ belong to $C(\A)$ up to variable renaming.
We denote with $s \cdot t$ a context from $C(\A)$ with $s \cdot t \le^{\A} s[t]$.
Since $s$ and $t$ have size $\mathcal{O}(1)$ ($C(\A)$ is a fixed set of contexts), we can
compute from $s,t$ in constant time the context 
$s \cdot t$ and a substitution $\zeta$ such that $s[t]$ and $\zeta(s \cdot t)$ 
are equivalent in $\A$.
Similarly, one can compute from a given atomic context $s$ in constant time a context $t \in C(\A)$
and a ground substitution $\zeta$ such that $s$ and $\zeta(t)$ are equivalent in $\A$. 

Let $\calG = (\V,\rho,S)$. 
We define $\V_0 = \{ X \in \V \mid \rho^*(X)^{\hat\T(\Gamma)} \in \T(\Gamma)\}$ and 
$\V_1= \{ X \in \V \mid \rho^*(X)^{\hat\T(\Gamma)} \in \calC(\Gamma)\} = \V \setminus \V_0$. 
The $\Gamma$-SLP  $\mathcal{H}$ to be constructed
will be denoted with $\mathcal{H} = (\V', \tau, S)$. 
We will have $\V_0 \subseteq \V'$.
A variable $X \in \V_1$ is replaced in $\mathcal{H}$ by a finite set $\Y_X$ of variables.
Moreover, we will compute a context $t_X \in \calC(\Gamma,\Y_X)$ that belongs to $C(\A)$
up to variable renaming. We can assume that
$\Y_X \cap \Y_{X'} = \emptyset = \Y_X \cap \V_0$ for all $X,X' \in \V_1$ with
$X \neq X'$. The set of variables of $\mathcal{H}$ is then $\V' = \V_0 \cup \bigcup_{X \in \V_1} \Y_X$.
Moreover, $\mathcal{H}$ will satisfy the following conditions:
\begin{enumerate}[(a)]
\item If $X \in \V_0$ then $\rho^*(X)^{\hat\A} = \tau^*(X)^{\A}$ (which is an element of $\A$).
\item If $X \in \V_1$ then $\rho^*(X)^{\hat\A} = \tau^*(t_X)^{\A}$ (which is a ULTF on $\A$).
\end{enumerate}
We construct $\mathcal{H}$ bottom-up. That means that we process all variables
in $\V$ in a single pass over $\calG$. When we process a variable $X \in \V$ we have already
processed all variables $X'$ that appear in $\rho(X)$. In particular, the set $\Y_{X'}$ and the context 
$t_{X'} \in \calC(\Gamma,\Y_{X'})$ (in case $X' \in \V_1$) are defined. In addition, $X'$
satisfies the above conditions (a) and (b).

We proceed by a case distinction according to the right-hand side $\rho(X)$ of  $X \in \V$.
This right-hand side has one of the following four forms:

\medskip
\noindent
{\em Case 1.} $X \in \V_0$ and $\rho(X) = f(X_1, \dots, X_n)$ for $f \in \Gamma_n$ ($n \ge 0$)
and $X_1,\dots,X_n \in \V_0$.
Then we set $\tau(X) = \rho(X)$. Clearly, the above condition (a) holds.

\medskip
\noindent
{\em Case 2.} $X \in \V_0$ and $\rho(X) = X'[X'']$ with $X' \in \V_1$, $X'' \in \V_0$.
By induction we have $\rho^*(X'')^{\hat\A} = \tau^*(X'')^{\A}$.
Moreover, we have computed a context $t_{X'} \in \calC(\Gamma,\Y_{X'})$ that belongs to $C(\A)$
up to variable renaming and such that $\rho^*(X')^{\hat\A} = \tau^*(t_{X'})^{\A}$.
We define $\tau(X) = t_{X'}[X''] \in \T(\Gamma, \Y_{X'} \cup \{X''\})$ (that is,
we replace the main variable in $t_{X'}$ by $X''$)
and get
\begin{eqnarray*}
\rho^*(X)^{\hat\A} = \rho^*(X')^{\hat\A}( \rho^*(X'')^{\hat\A}) &=&  \tau^*(t_{X'})^{\A}(\tau^*(X'')^{\A}) \\
&=& \tau^*(t_{X'}[X''])^{\A} =  \tau^*(\tau(X))^{\A} = \tau^*(X)^{\A} .
\end{eqnarray*}
\noindent
{\em Case 3.} $X \in \V_1$ and $\rho(X) = f(X_1,\ldots,X_{k-1},x,X_{k+1},\ldots,X_n)$ for $f \in \Gamma_n$ ($n \ge 1$)
and $X_1,\ldots,X_{k-1}$, $X_{k+1}, \ldots,X_n \in \V_0$.
By induction  we have $\rho^*(X_i)^{\hat\A} = \tau^*(X_i)^{\A}$ for $1 \leq i \leq n$, $i \neq k$.
We can view $\rho(X)$ as an atomic context with main variable $x$ and auxiliary variables
$X_1,\ldots,X_{k-1},X_{k+1},\ldots,X_n$.
Hence, we can compute $t_X \in C(\A)$ with $t_X \le^{\A} \rho(X)$. We rename the auxiliary
variables of $t_X$ such that they do not already belong to $\mathcal{H}$. Let $\Y_X$ be the set of 
auxiliary variables of $t_X$. We then add all variables in $\Y_X$ to $\mathcal{H}$.
By the definition of $\le^{\A}$ there is a substitution $\zeta \colon \Y_X \to
 \T(\Gamma,\{X_1,\ldots,X_{k-1},X_{k+1},\ldots,X_n\})$ such that
$$
\rho^*(X)^{\hat\A} = \rho^*(\rho(X))^{\hat\A} = \tau^*(\rho(X))^{\A} = \tau^*(\zeta(t_X))^{\A} .
$$
We define the right-hand side for every new variable $Y \in \Y_X$ by
$\tau(Y) = \zeta(Y)$ and get $\rho^*(X)^{\hat\A} = \tau^*(\zeta(t_X))^{\A} = \tau^*(\tau(t_X))^{\A} = \tau^*(t_X)^{\A}$, which 
is point (b).

\medskip
\noindent
{\em Case 4.} $X \in \V_1$ with $\rho_{\calG}(X) = X'[X'']$ and $X', X'' \in \V_1$.
We have already defined the terms $t_{X'}, t_{X''}$ that belong to $C(\A)$ up to variable renaming. 
The set of auxiliary variables of $t_{X'}$ (resp., $t_{X''}$) is $\Y_{X'}$ (resp., $\Y_{X''}$)
and we have $\Y_{X'} \cap \Y_{X''} = \emptyset$. 
Moreover, by the induction hypothesis for $X'$ and $X''$ we have 
$\rho^*(X')^{\hat\A} = \tau^*(t_{X'})^{\A}$  and $\rho^*(X'')^{\hat\A} = \tau^*(t_{X''})^{\A}$.
We set $t_X := t_{X'} \cdot t_{X''} \in C(\A)$. We rename the auxiliary variables of $t_X$
such that  they do not already belong to $\mathcal{H}$. Let $\Y_X$ be the set of auxiliary 
variables of $t_X$. We then add every $Y \in \Y_X$ to $\mathcal{H}$.
By definition of $t_X$ we have $t_X \le^{\A} t_{X'}[t_{X''}]$, which implies that there is a 
substitution $\zeta \colon \Y_X \to \T(\Gamma,\Y_{X'} \cup \Y_{X''})$ with
\begin{eqnarray*}
\rho^*(X)^{\hat\A} = \rho^*(X')^{\hat\A} \circ \rho^*(X'')^{\hat\A} &=& 
\tau^*(t_{X'})^{\A} \circ \tau^*(t_{X''})^{\A} \\
&=& \tau^*(t_{X'}[t_{X''}])^{\A} = \tau^*(\zeta(t_X))^{\A} .
\end{eqnarray*}
We define the right-hand side for every new variable $Y \in \Y_X$ by
$\tau(Y) = \zeta(Y)$ and get
$\rho^*(X)^{\hat\A} = \tau^*(\zeta(t_X))^{\A} = \tau^*(\tau(t_X))^{\A} = \tau^*(t_X)^{\A}$, which is point (b).

The running time for the construction of $\mathcal{H}$ is $\mathcal{O}(|\calG|)$,
since for each variable $X \in \V$ we only spend constant time 
(see the remark from the first paragraph of the proof). In each step we have to take a constant number
of fresh auxiliary variables. We can take them from a list $Y_1, Y_1, Y_3, \ldots$ and store a pointer 
to the next free variable.
\end{proof}

It is known~\cite{GHJLN17,GL18}
that a ranked tree $t$ of size $n$ can be transformed in linear time into a tree
straight-line program of size $\mathcal{O}(n / \log_\sigma n)$ and depth $\mathcal{O}(\log n)$,
where $\sigma$ is the number of different node labels that appear in $t$.
With Lemma~\ref{lemma-tslp-to-dag} it follows that for 
every algebra $\A$ having a finite subsumption base one can compute in linear time from a~given expression tree of size $n$ 
an equivalent circuit of size $\mathcal{O}(n / \log_\sigma n)$ and depth $\mathcal{O}(\log n)$
($\sigma$ is a constant here, namely the number of operations of the algebra $\A$). 

\subsubsection{Main result for $\Gamma$-straight line programs} \label{sec-main-result-balancing}

We now state the main technical result for $\Gamma$-straight line programs.
Note that for some applications we need a signature $\Gamma$ that is part of the input.

\begin{thm} \label{thm-balance-TSLP}
From a given signature $\Gamma$ and a 
$\Gamma$-SLP $\calG$, which defines the tree $t = \valX{\calG} \in \T_0(\Gamma)$,
one can compute in time $\mathcal{O}(|\calG|)$ a  $\Gamma$-TSLP
$\mathcal{H}$ such that $\valX{\mathcal{H}}^{\hat\T(\Gamma)} = t$, 
$|\mathcal{H}| \in \mathcal{O}(|\calG|)$ and $\depth(\mathcal{H}) \in \mathcal{O}(\log |t|)$.
\end{thm}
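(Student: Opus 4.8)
The plan is to mirror the proof of Theorem~\ref{thm-balance-SSLP} from Section~\ref{sec:balancing-of-sslps} almost line for line. The one structural change is that a symmetric centroid path now carries a sequence of \emph{atomic contexts} rather than single symbols branching off a string, and the associative operation that glues such a path together is context composition (the operation $\gamma$ of $\hat\T(\Gamma)$) in place of string concatenation. Since concatenation and context composition are both associative, Proposition~\ref{prop-all-suffixes} can be reused with what amounts to a change of notation.

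First I would bring $\calG$ into standard form in linear time, obtaining a DAG $\D = (\V, E(\calG))$ with $\Gamma$-labelled nodes whose root $S$ has no incoming edge, and assign to every variable $X$ the weight $\|X\| = |\valX{X}|$. So that the symmetric centroid decomposition is governed by $|t|$ rather than merely by the number of leaves of $t$ (a deep chain tree has a single leaf but size $|t|$), I would attach a fresh dummy sink below every internal node of $\D$ before applying Lemma~\ref{lem-decomposition}, and discard these dummy sinks afterwards: now the unfolding has $|t|+1$ leaves and $\pi(X, W) = \|X\|+1$ for every variable $X$. This yields in time $\mathcal{O}(|\calG|)$ a partition of $E(\calG)$ into symmetric centroid paths such that $\lfloor \log_2(\|X\|+1) \rfloor \le \log_2(|t|+1)$ for all $X$ and every root-to-sink path of $\D$ leaves a centroid path at most $2\log_2(|t|+1)$ times.

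Next, fix a symmetric centroid path $(X_0, d_0, X_1), \ldots, (X_{p-1}, d_{p-1}, X_p)$. For $0 \le i < p$ the rule of $X_i$ has the form $\rho(X_i) = f(\ldots)$ with the $d_i$-th argument equal to $X_{i+1}$ and all remaining arguments being variables that lie on other centroid paths; let $c_i$ be the corresponding atomic context, namely the $\hat\Gamma$-symbol $\hat{f}_{d_i}$ applied to those remaining variables, and put $\|c_i\| = |c_i| \ge 1$. Then $\|c_i\| \ge \|Y\|$ for every branching-off variable $Y$ of $c_i$, $\|X_i\| = \|c_i\| + \|c_{i+1}\| + \cdots + \|c_{p-1}\| + \|X_p\|$, and $\valX{X_i} = c_i[c_{i+1}[\cdots c_{p-1}[\valX{X_p}]\cdots]]$. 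Now I would apply Proposition~\ref{prop-all-suffixes} to the weighted string $c_0 c_1 \cdots c_{p-1}$ of length $p$ and reinterpret the produced SSLP as a $\Gamma$-TSLP by turning every concatenation in a right-hand side into context composition (so a right-hand side of length at most $4$ becomes $\mathcal{O}(1)$ many rules of the form $U[V]$) and every letter $c_i$ into the atomic context $\hat{f}_{d_i}(\ldots)$. This introduces $\mathcal{O}(p)$ new variables, among them suffix variables $S_0, \ldots, S_{p-1}$ with $\valX{S_i}^{\hat\T(\Gamma)} = c_i[c_{i+1}[\cdots c_{p-1}[x]\cdots]]$, and, by Proposition~\ref{prop-all-suffixes}, every path from $S_i$ to an occurrence of some $c_j$ in the derivation tree has length at most $3 + 2(\log_2 \|S_i\| - \log_2 \|c_j\|)$, where $\|S_i\| = \|c_i\| + \cdots + \|c_{p-1}\| \le \|X_i\|$. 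Finally I would set $\tau(X_p) = \rho(X_p)$ and $\tau(X_i) = S_i[X_p]$ for $0 \le i < p$. Performing this for all centroid paths produces the $\Gamma$-TSLP $\mathcal{H}$; since $\sum p \le |E(\calG)| = \mathcal{O}(|\calG|)$ and all right-hand sides have bounded size, $|\mathcal{H}| \in \mathcal{O}(|\calG|)$, and as in Sections~\ref{sec-centroid} and~\ref{sec:balancing-of-sslps} everything can be computed in time $\mathcal{O}(|\calG|)$. The identity $\valX{\mathcal{H}}^{\hat\T(\Gamma)} = t$ is immediate from the displayed equalities (the $S_i$ evaluate to ground contexts and each original variable $X$ to $\valXG{\calG}{X}$).

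For the depth bound I would argue exactly as in Section~\ref{sec:balancing-of-sslps}: factor a root-to-leaf path in the derivation tree of $\mathcal{H}$ as $S = v_0 \xrightarrow{*} v_1 \xrightarrow{*} \cdots \xrightarrow{*} v_L$ with the $v_j$ original variables, where each subpath either runs $v_j = X_i \to S_i \xrightarrow{*} c_a \to v_{j+1}$ with $v_{j+1}$ a branching-off variable of $c_a$, or runs $v_j = X_i \to X_p \to v_{j+1}$ with $v_{j+1}$ a child of $X_p$. Using $\|S_i\| \le \|X_i\|$ and $\|c_a\| \ge \|v_{j+1}\|$, each subpath has length at most $\mathcal{O}(1) + 2(\log_2 \|v_j\| - \log_2 \|v_{j+1}\|)$ and $\|v_{j+1}\| < \|v_j\|$; the logarithmic terms telescope to $\mathcal{O}(\log_2 \|S\|) = \mathcal{O}(\log |t|)$, while $L$ is bounded by the number of non-centroid edges on the associated root-to-sink path of $\D$ and hence is $\mathcal{O}(\log |t|)$ by Lemma~\ref{lem-decomposition}, so that $\depth(\mathcal{H}) = \mathcal{O}(\log |t|)$. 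The only genuinely delicate point is this weight bookkeeping: one must make sure the centroid decomposition is run with the size weight $\|X\| = |\valX{X}|$ and the dummy-sink padding, so that the governing quantity is $\Theta(|t|)$, and verify the inequalities $\|c_i\| \le \|S_i\| \le \|X_i\|$ and $\|c_i\| \ge \|Y\|$ on which the telescoping rests. The passage from Proposition~\ref{prop-all-suffixes} to its context version is essentially syntactic, because context composition is associative, and poses no real obstacle.
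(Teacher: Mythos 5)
Your construction is essentially the paper's own proof: the paper also runs the symmetric centroid decomposition on the DAG of the standard $\Gamma$-SLP, applies Proposition~\ref{prop-all-suffixes} to the sequence of atomic contexts $f_i(X_{i,1},\ldots,x,\ldots,X_{i,n_i})$ along each centroid path (with concatenation reinterpreted as context composition), defines $\tau(X_i)$ as (suffix variable)$[X_p]$ and $\tau(X_p)=\rho(X_p)$, and finishes with exactly your telescoping argument combined with Lemma~\ref{lem-decomposition}. Two remarks. First, your weight ``$\|c_i\| = |c_i|$'' must be the size of the \emph{derived} ground context, i.e.\ $\|c_i\| = |\valXG{\calG}{X_i}| - |\valXG{\calG}{X_{i+1}}| = n_i + \sum_{j \neq d_i} |\valXG{\calG}{X_{i,j}}|$ (the paper's $\|Z_i\| = |t_i| - |t_{i+1}|$), not the syntactic size of the atomic context; with the syntactic size the two facts your telescoping rests on ($\|c_i\| \ge \|Y\|$ for branching-off $Y$, and $\|X_i\| = \|c_i\| + \cdots + \|c_{p-1}\| + \|X_p\|$) are false, while with the derived size they are exactly the inequalities the paper checks, so your stated identities show this is the intended reading. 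Second, the dummy-sink padding is unnecessary (though harmless, since no centroid edge can enter a dummy sink): the jump count from Lemma~\ref{lem-decomposition} only needs $n(\D) \le |t|$, where $n(\D)$ is the number of leaves of $t$, and the telescoping is carried entirely by the size weights, so the two measures need not be forced to coincide; relatedly, the size bound should be argued as in the paper via $\sum_i |\rho(X_i)|$ summed over all centroid paths plus $\mathcal{O}(p)$ per path, rather than by claiming all right-hand sides have bounded size, which fails when $\Gamma$ contains symbols of large rank.
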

We will prove Theorem~\ref{thm-balance-TSLP} in Section~\ref{sec-proof}.
Together with Lemma~\ref{lemma-tslp-to-dag}, Theorem~\ref{thm-balance-TSLP} yields the following 
result: 

\begin{thm} \label{cor-balance-dag}
Take a fixed signature $\Gamma$ and a fixed $\Gamma$-algebra $\A$ that has a finite subsumption base.
From a given $\Gamma$-SLP $\calG$,
which defines the derivation tree $t = \valX{\calG} \in \T_0(\Gamma)$,
one can compute in time $\mathcal{O}(|\calG|)$ a $\Gamma$-SLP 
$\mathcal{H}$ such that $\valX{\mathcal{H}}^{\A} = \valX{\calG}^{\A}$, 
$|\mathcal{H}| \in \mathcal{O}(|\calG|)$ and $\depth(\mathcal{H}) \in \mathcal{O}(\log |t|)$.
\end{thm}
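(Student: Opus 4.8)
The plan is to obtain Theorem~\ref{cor-balance-dag} by composing the two main building blocks of this subsection: Theorem~\ref{thm-balance-TSLP}, which balances a $\Gamma$-SLP into a $\Gamma$-TSLP of logarithmic depth, and Lemma~\ref{lemma-tslp-to-dag}, which converts a $\Gamma$-TSLP back into a $\Gamma$-SLP over $\A$ whenever $\A$ has a finite subsumption base, at the cost of only a constant multiplicative blow-up in size and depth. Both statements are already available, so the proof is essentially an application of the two in sequence together with a short semantic bookkeeping step.

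First I would apply Theorem~\ref{thm-balance-TSLP} to the given $\Gamma$-SLP $\calG$, with the fixed signature $\Gamma$ in the role of the input signature. Writing $t = \valX{\calG} \in \T_0(\Gamma)$, this yields in time $\mathcal{O}(|\calG|)$ a $\Gamma$-TSLP $\calG'$ with $\valX{\calG'}^{\hat\T(\Gamma)} = t$, $|\calG'| \in \mathcal{O}(|\calG|)$ and $\depth(\calG') \in \mathcal{O}(\log |t|)$. Next I would apply Lemma~\ref{lemma-tslp-to-dag} to $\calG'$; this is legitimate precisely because $\A$ has a finite subsumption base. It produces in time $\mathcal{O}(|\calG'|) \subseteq \mathcal{O}(|\calG|)$ a $\Gamma$-SLP $\mathcal{H}$ with $\valX{\mathcal{H}}^{\A} = \valX{\calG'}^{\hat\A}$, $|\mathcal{H}| \in \mathcal{O}(|\calG'|) \subseteq \mathcal{O}(|\calG|)$ and $\depth(\mathcal{H}) \in \mathcal{O}(\depth(\calG')) \subseteq \mathcal{O}(\log |t|)$.

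It remains to identify $\valX{\calG'}^{\hat\A}$ with $\valX{\calG}^{\A}$. For this I would invoke the canonical surjective morphism $\hat\T(\Gamma) \to \hat\A$ recorded just after Definition~\ref{def-hatA}, which extends the evaluation morphism $s \mapsto s^{\A}$ from the free term algebra $\T(\Gamma)$ to $\A$. Since evaluating the $\hat\Gamma$-SLP $\calG'$ commutes with any $\hat\Gamma$-algebra morphism, $\valX{\calG'}^{\hat\A}$ is the image of $\valX{\calG'}^{\hat\T(\Gamma)} = t$ under this morphism, hence equal to $t^{\A} = \valX{\calG}^{\A}$. Chaining the equalities gives $\valX{\mathcal{H}}^{\A} = \valX{\calG}^{\A}$, which finishes the argument.

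There is no real obstacle here, as the substantive work is carried by Theorem~\ref{thm-balance-TSLP} and Lemma~\ref{lemma-tslp-to-dag}. The only points requiring care are bookkeeping: verifying that the two linear-time bounds compose to $\mathcal{O}(|\calG|)$ (which holds because $|\calG'| \in \mathcal{O}(|\calG|)$ and all hidden constants depend only on the fixed $\Gamma$ and on the fixed finite subsumption base of $\A$), and ensuring that the semantic equality $\valX{\mathcal{H}}^{\A} = \valX{\calG}^{\A}$ is transported correctly through the functional extensions $\hat\T(\Gamma)$ and $\hat\A$ via the canonical morphism rather than asserted directly.
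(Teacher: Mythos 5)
Your proposal is correct and follows essentially the same route as the paper: apply Theorem~\ref{thm-balance-TSLP} to get a balanced $\Gamma$-TSLP $\calG'$, then Lemma~\ref{lemma-tslp-to-dag} to convert it back into a $\Gamma$-SLP over $\A$, concluding via $\valX{\calG'}^{\hat\T(\Gamma)} = t = \valX{\calG}$ that $\valX{\mathcal{H}}^{\A} = \valX{\calG'}^{\hat\A} = \valX{\calG}^{\A}$. Your explicit appeal to the canonical morphism $\hat\T(\Gamma) \to \hat\A$ merely spells out the semantic step the paper states in one line.
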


\begin{proof}
Using Theorem~\ref{thm-balance-TSLP} we obtain from $\calG$ in time $\mathcal{O}(|\calG|)$ a $\Gamma$-TSLP
$\calG'$ such that $\valX{\calG'}^{\hat\T(\Gamma)} = t$, 
$|\calG'| \in \mathcal{O}(|\calG|)$ and $\depth(\calG') \in \mathcal{O}(\log |t|)$.
From $\valX{\calG'}^{\hat\T(\Gamma)} = t = \valX{\calG}$ we get $\valX{\calG'}^{\hat\A} = \valX{\calG}^{\A}$.
By Lemma~\ref{lemma-tslp-to-dag} we can compute from $\calG'$ in time 
$\mathcal{O}(|\calG'|) = \mathcal{O}(|\calG|)$ a $\Gamma$-SLP $\mathcal{H}$ of size 
$\mathcal{O}(|\calG'|) = \mathcal{O}(|\calG|)$ and depth $\mathcal{O}(\depth(\calG')) = \mathcal{O}(\log |t|)$
such that $\valX{\mathcal{H}}^{\A} = \valX{\calG'}^{\hat\A} = \valX{\calG}^{\A}$.
\end{proof}
Note that Theorem~\ref{cor-balance-dag} is exactly the same statement as Theorem~\ref{thm-general-balancing}
from the introduction (which is formulated via circuits instead of straight-line programs).

\begin{rem}
Recall that we made the technical assumption that every element $a$ of $\A$ can be written as 
$t^{\A}$ for a ground term $\A$. We can still prove Corollary~\ref{cor-balance-dag} in case
$\A$ does not satisfy this assumption: let $\B$ be the subalgebra of $\A$ that is induced by all elements $t^{\A}$ for $t \in \T(\A)$. 
By Remark~\ref{rem-tame-2}, $\B$ has a finite subsumption base as well. Moreover, for every 
$\Gamma$-SLP $\calG$ we obviously have $\valX{\calG}^{\A} = \valX{\calG}^{\B}$. Hence, Corollary~\ref{cor-balance-dag} applied
to the algebra $\B$ yields the statement for $\A$.
\end{rem}

\begin{rem} \label{rem-variable-algebra}
Theorem~\ref{cor-balance-dag} only holds for a fixed $\Gamma$-algebra because Lemma~\ref{lemma-tslp-to-dag} 
assumes a fixed $\Gamma$-algebra. Nevertheless there are settings,
where we consider a family $\{ \A_i \mid i \in I \}$ with the $\A_i$ being $\Gamma_i$-algebras.
An example is the family of all free monoids 
$\Sigma^*$ for a finite alphabet $\Sigma$ that is part of the input. 
Under certain assumptions, the statement of Theorem~\ref{cor-balance-dag} can be extended
to the uniform setting, where the signature $\Gamma_i$ ($i \in I$) is part of the input and SLPs are evaluated in the algebra $\A_i$. First of all
we have to assume that every symbol $f \in \Gamma_i$ fits into a machine word of the underlying RAM 
model, which is a natural assumption if the signature $\Gamma_i$ is part of the input. For the $\Gamma_i$-algebras
$\A_i$ we need the following assumptions:
\begin{enumerate}[(i)]
\item There is a constant $r$ such that the rank of every symbol $f \in \bigcup_{i \in I} \Gamma_i$
is bounded by $r$.
\item There is a constant $c$ and a finite subsumption base $C(\A_i)$ for every $i \in I$
 such that the size of every context $s \in \bigcup_{i \in I} C(\A_i)$ is bounded by $c$.
With the above assumption on the word size of the RAM this ensures that a context $s \in \bigcup_{i \in I} C(\A_i)$
fits into $\mathcal{O}(1)$ many machine words.
\item There is a constant time algorithm that computes from a given atomic context $s$ over the signature $\Gamma_i$ a context $t \in C(\A_i)$
and a substitution $\zeta$ such that $\zeta(t)$ and $s$ are equivalent in $\A_i$.
\item There is a constant time algorithm that takes two contexts $s$ and $t$ over the signature $\Gamma_i$ such that
$s[t]$ is defined, $s$ and $t$ have no common auxiliary variable, and
$s$ and $t$ belong to $C(\A_i)$ up to variable renaming, and computes the context
$s \cdot t$ (see the first paragraph in the 
proof of Lemma~\ref{lemma-tslp-to-dag}) and a substitution $\zeta$ 
such that $\zeta(s \cdot t)$ and $s[t]$ are equivalent in $\A_i$.
\end{enumerate}
Under these assumptions the construction
from the proof of Lemma~\ref{lemma-tslp-to-dag} can still be carried out in linear time.
Since the statement of Theorem~\ref{thm-balance-TSLP} holds for a signature $\Gamma$ that is part
of the input, this allows to extend Theorem~\ref{cor-balance-dag} to the setting where 
the signature $\Gamma_i$ ($i \in I$) is part of the input.
This situation will be encountered for  
forest algebras (Section~\ref{sec:fslps:forests}) and top dags (Section~\ref{sec-cluster}).
\end{rem}
Before we go into the proof of Theorem~\ref{thm-balance-TSLP}, we first discuss a simple applications
of Theorem~\ref{cor-balance-dag} (further applications for certain tree algebras that yield Theorem~\ref{thm-balancing-TSLP}
can be found in Sections~\ref{sec:fslps} and \ref{sec-cluster}).
Consider straight-line programs over a semiring $\A$.
Such straight-line programs are also known as arithmetic circuits in the literature. 
We view addition and multiplication in $\A$ as binary operations. In other words, 
we consider bounded fan-in arithmetic circuits. We also include arbitrary constants in the algebra $\A$
(this is necessary in order to build expressions).
The following result follows directly from Theorem~\ref{thm-balance-TSLP} and the fact that every 
semiring has a finite subsumption base; see Example~\ref{ex-semirings}.

\begin{cor}
Let $\A$ be an arbitrary semiring with constants (we neither assume that $\A$ is commutative nor that identity elements
with respect to $+$ or $\times$ exist).
Given an arithmetic circuit $\mathcal{G}$ over $\A$ such that the corresponding derivation tree $t$ has $n$ nodes,
one can compute in time $\mathcal{O}(|\calG|)$ an arithmetic circuit
$\mathcal{H}$ over $\A$ such that $\valX{\mathcal{H}}^{\A} = \valX{\calG}^{\A}$, 
$|\mathcal{H}| \in \mathcal{O}(|\calG|)$ and $\depth(\mathcal{H}) \in \mathcal{O}(\log n)$.
\end{cor}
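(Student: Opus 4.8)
The plan is to observe that an arithmetic circuit over $\A$ is literally a $\Gamma$-SLP for the single-sorted signature $\Gamma$ that contains one binary symbol for $+$, one binary symbol for $\times$, and one rank-zero symbol for each constant of $\A$; since $\A$ is fixed, so is $\Gamma$. The derivation tree $t = \valX{\calG}$ is exactly the unfolding of the circuit, so it has $n$ nodes and hence $|t| = n-1$ edges in the size convention of Section~\ref{sec-trees}; in particular $\log |t| = \Theta(\log n)$, so the depth bound $\mathcal{O}(\log|t|)$ is the same as $\mathcal{O}(\log n)$.

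First I would invoke Example~\ref{ex-semirings}: an arbitrary semiring $\A$ with constants has the finite subsumption base $C(\A) = \{axb+c,\, ax+c,\, xb+c,\, x+c,\, axb,\, ax,\, xb,\, x\}$. This is the only point at which the semiring axioms enter the argument — commutativity of $+$ together with associativity of both operations and distributivity guarantee that every context is equivalent in $\A$ to a linear polynomial $s\,x\,t + u$ in its main variable, which is then subsumed by one of these eight contexts via the substitution $a \mapsto s$, $b \mapsto t$, $c \mapsto u$.

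With the finite subsumption base in hand I would simply apply Theorem~\ref{cor-balance-dag} to $\calG$ and $\A$: it yields, in time $\mathcal{O}(|\calG|)$, a $\Gamma$-SLP $\mathcal{H}$ with $\valX{\mathcal{H}}^{\A} = \valX{\calG}^{\A}$, $|\mathcal{H}| \in \mathcal{O}(|\calG|)$ and $\depth(\mathcal{H}) \in \mathcal{O}(\log |t|) = \mathcal{O}(\log n)$. Concretely this is the two-stage pipeline inside Theorem~\ref{cor-balance-dag}: Theorem~\ref{thm-balance-TSLP} restructures the DAG $\calG$ into a $\Gamma$-TSLP for the unfolded tree $t$ of depth $\mathcal{O}(\log |t|)$ and linear size, and Lemma~\ref{lemma-tslp-to-dag} then collapses that TSLP back into a circuit over $\A$ using $C(\A)$, with only a constant multiplicative blow-up in size and depth. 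Rereading the right-hand sides of $\mathcal{H}$ as gates ($+$-gates, $\times$-gates, and constant gates) gives the desired balanced arithmetic circuit.

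The only point that needs care is the standing assumption that every element of $\A$ is $t^{\A}$ for some ground term, which can fail if the chosen constants do not generate $\A$; I would dispose of it exactly as in the remark following Theorem~\ref{cor-balance-dag}, by passing to the subalgebra $\B$ generated by the constants (still finitely subsumable by Remark~\ref{rem-tame-2}), noting that $\valX{\calG}^{\A} = \valX{\calG}^{\B}$ for every circuit $\calG$, and transferring the conclusion back to $\A$. Beyond this bookkeeping there is no genuine obstacle: all of the combinatorial work has already been carried out in Theorem~\ref{thm-balance-TSLP} and Lemma~\ref{lemma-tslp-to-dag}, and the corollary is merely their specialization to the semiring signature.
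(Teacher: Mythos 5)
Your proposal is correct and follows exactly the paper's route: the paper derives this corollary in one line by combining Theorem~\ref{cor-balance-dag} (via Theorem~\ref{thm-balance-TSLP} and Lemma~\ref{lemma-tslp-to-dag}) with the finite subsumption base for semirings from Example~\ref{ex-semirings}. Your additional care about the ground-term surjectivity assumption matches the remark the paper itself gives after Theorem~\ref{cor-balance-dag}, so there is nothing missing.
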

Theorem~\ref{thm-balance-SSLP} (balancing of string straight-line programs) can be deduced in the same way from 
Theorem~\ref{thm-balance-TSLP} by noting that every free monoid has a finite subsumption base.

\subsection{Proof of Theorem~\ref{thm-balance-TSLP}} \label{sec-proof}

Those readers that worked through part I of the paper (Section~\ref{sec-part-I}) 
will notice that our proof of Theorem~\ref{thm-balance-TSLP} is very similar to the proof
of Theorem~\ref{thm-balance-SSLP} in Section~\ref{sec:balancing-of-sslps}.
For the following proof we will use the part I results from Sections~\ref{sec-centroid} and \ref{sec-suffixes}.

Let us fix a signature $\Gamma$ and a standard $\Gamma$-SLP $\calG = (\V, \rho, S)$. Let $t = \valX{\calG}$ be its derivation tree and
$n = |t|$. We view $\calG$ also as a  DAG $\D:=(\V,E)$ with node labels from $\Gamma$. 
The edge relation $E$ contains all edges $(X,i,X_i)$ where $\rho(X)$ is of the form 
$f(X_1, \dots, X_n)$ and $1 \le i \le n$. We can assume that all nodes of the DAG are reachable from 
the start variable $S$. 
All variables from $\V$ also belong to the TSLP $\mathcal{H}$ and produce the same trees in $\calG$ and $\mathcal{H}$.
The right-hand side mapping of $\mathcal{H}$ will be denoted by $\tau$.

We start with the symmetric
centroid decomposition of the DAG $\D$, 
which can be computed in linear time as remarked in Section~\ref{sec-centroid}. Note that the number
$n(\D)$ defined in  Section~\ref{sec-centroid} is the number of leaves of $t$. Hence, we have $n(\D) \le n$.
Consider a symmetric centroid path 
\begin{equation} \label{sym-centroid-path}
(X_0, d_0, X_1), (X_1, d_1, X_2), \ldots, (X_{p-1}, d_{p-1}, X_p)
\end{equation}
in $\D$, where all $X_i$ belong to $\V$ and $d_i \geq 1$.
Thus, for all $0 \le i \le p-1$, the right-hand side of $X_i$ in $\calG$ has the form 
\begin{equation} \label{rho(X_i)}
\rho(X_i) = f_i(X_{i,1}, \ldots, X_{i,d_i-1},X_{i+1}, X_{i,d_i+1}, \ldots, X_{i,n_i})
\end{equation}
for $f_i \in \Gamma_{n_i}$, $X_{i,j} \in \V$ for $1 \le j \le n_i$, $j \neq d_i$.
Figure~\ref{fig-ex-spine} shows such a path. Note that the variables $X_{i,j}$ do not have to be pairwise
different (as Figure~\ref{fig-ex-spine} might suggest). Also note that the variables $X_{i,j}$ from \eqref{rho(X_i)} 
and all variables in $\rho(X_p)$ 
belong to other symmetric centroid paths.

We will introduce $\mathcal{O}(p)$ many 
variables in the TSLP $\mathcal{H}$ to be constructed and the sizes of the corresponding right-hand sides
will sum up to $\sum_{i=0}^p |\rho(X_i)| + \mathcal{O}(p)$. By summing over all  symmetric centroid paths of $\D$, this yields the size 
bound $\mathcal{O}(|\calG|)$ for $\mathcal{H}$. 

\begin{figure}[t]
\pgfkeys{/pgf/inner sep=.07em}  
  \centering
   \tikzset{
  rect/.append  style={label={[font=\scriptsize]0:#1}}}
  \begin{forest} 
    [$X_0$, s sep=3mm, for tree={parent anchor=south, child anchor=north, l=1cm}
      [$X_{0,1}$, edge label={node[midway,left=1mm,font=\scriptsize]{}}]
      [$X_{0,2}$, edge label={node[midway,left,font=\scriptsize]{}}]
      [$X_1$,  edge label={node[midway,right=1mm,font=\scriptsize]{\textcolor{black}{}}}, s sep=3mm
        [$X_{1,1}$, edge label={node[midway,left=1mm,font=\scriptsize]{}}]
        [$X_2$,  edge label={node[midway,left,font=\scriptsize]{\textcolor{black}{}}}, s sep=3mm
          [$X_3$,  edge label={node[midway,left=.5mm,font=\scriptsize]{\textcolor{black}{}}}, s sep=3mm
            [$X_{3,1}$, edge label={node[midway,left=1.5mm,font=\scriptsize]{}}]
            [$X_{3,2}$, edge label={node[midway,left=.5mm,font=\scriptsize]{}}]
                [\textcolor{red}{$X_4$},  edge label={node[midway,left,font=\scriptsize]{\textcolor{black}{}}}, s sep=3mm
                  [\textcolor{red}{$X_5$},  edge = {red}, edge label={node[midway,left=2mm,font=\scriptsize]{\textcolor{black}{}}}, s sep=3mm
                     [\textcolor{red}{$X_{5,1}$}, name = {X51}, edge = {red},  edge label={node[midway,left=1mm,font=\scriptsize]{}}]
                        [$X_6$,  edge = {red}, edge label={node[midway,left=.5mm,font=\scriptsize]{}}, s sep=3mm, before typesetting nodes={for current and ancestors={edge+={thick}}}
                            [,phantom]
                            [$X_{6,1}$, name = {X61}, edge label={node[pos=.4,left=.5mm,font=\scriptsize]{}}]
                            [$X_{6,2}$, name = {X62}, edge label={node[midway,left,font=\scriptsize]{}}]
                            [$X_{6,3}$, name = {X63}, edge label={node[midway,right=1mm,font=\scriptsize]{}}]
                        ]
                    ]
                  [\textcolor{red}{$X_{4,2}$}, edge = {red}, name = {X42}, edge label={node[midway,right,font=\scriptsize]{}}]
                  [\textcolor{red}{$X_{4,3}$}, edge = {red}, name = {X43}, edge label={node[midway,right=1mm,font=\scriptsize]{}}]
                ]
              [$X_{3,4}$, edge label={node[midway,right=1.5mm,font=\scriptsize]{}}]
            ]
          [$X_{2,2}$, edge label={node[midway,right=.5mm,font=\scriptsize]{}}]
          ]
        [$X_{1,3}$, edge label={node[midway,right=1mm,font=\scriptsize]{}}]
       ]
      ]
      \path[draw, red!60, fill=red!60] (X42) -- ($(X42)+(-.35,-.6)$) -- ($(X42)+(.35,-.6)$) -- (X42) ; 
      \path[draw, red!60, fill=red!60] (X43) -- ($(X43)+(-.35,-.6)$) -- ($(X43)+(.35,-.6)$) -- (X43) ;     
      \path[draw, red!60, fill=red!60] (X51) -- ($(X51)+(-.35,-.6)$) -- ($(X51)+(.35,-.6)$) -- (X51) ; 
\end{forest}
\caption{\label{fig-ex-spine}A symmetric centroid path in the proof of Theorem~\ref{thm-balance-TSLP}.}
\end{figure}

Define the ground terms
$t_i = \valXG{\calG}{X_i}$  for $0 \le i \le p$ and $t_{i,j} = \valXG{\calG}{X_{i,j}}$ for $0 \le i \le p-1$
and $1 \le j \le n_i$, $j \neq d_i$.
Recall that every variable $X_i$ ($0 \le i \le p$) of  $\calG$ also belongs to $\mathcal{H}$. 
For every $0 \le i \le p-1$ we introduce a fresh variable $Y_i$ which will evaluate in $\mathcal{H}$ to the context 
obtained by taking the tree $t_i$ and cutting out the occurrence of the subtree  $t_p$ that 
is reached via the directions $d_i, d_{i+1}, \ldots, d_{p-1}$ from the root of $t_i$. 
In Figure~\ref{fig-ex-spine} this context is visualized for $i=4$ by the red part.
Hence, we set
\begin{equation} \label{rho:X_j}
\tau(X_i) = Y_i[X_p] 
\end{equation}
for $0 \le i \le p$.
For $X_p$ we define 
\begin{equation} \label{rho:X_p}
\tau(X_p) = \rho(X_p) .
\end{equation}
It remains to come up with right-hand sides such that every $Y_i$ derives to the intended context.
For this, we introduce variables $Z_i$ ($0 \le i \le p-1$) and define 
\begin{equation} \label{rho:C_j}
\tau(Z_i) = f_i(X_{i,1}, \ldots, X_{i,d_i-1},x, X_{i,d_i+1}, \ldots, X_{i,n_i})
\end{equation}
for  $0 \le i \le p-1$. It remains to add variables and right-hand sides such that 
every $Y_i$ derives in $\mathcal{H}$ to $Z_i[Z_{i+1}[ \cdots [Z_{p-1}] \cdots]]$.
This is basically a string problem: we want to produce an SSLP for all suffixes
of $Z_0 Z_1 \cdots Z_{p-1}$. This SSLP should have small depth in order to 
keep the total depth of the final TSLP bounded by $\mathcal{O}(\log n)$. Here we use 
Proposition~\ref{prop-all-suffixes}.
For this we have to define the weights of the variables $Z_i$. We set 
$\|Z_i\| = |t_i| - |t_{i+1}|$. We additively extend the weight function to strings 
over the symbols $Z_0, \ldots, Z_{p-1}$.

Using Proposition~\ref{prop-all-suffixes} we can construct
in time $\mathcal{O}(p)$ a single SSLP $\mathcal{I}$ with the following properties:
\begin{itemize}
\item $\mathcal{I}$ has $\mathcal{O}(p)$ many variables and all right-hand sides
have length at most four,
\item $\mathcal{I}$ contains the variables $Y_0, \ldots, Y_{p-1}$, where  
$Y_i$ produces $Z_i Z_{i+1} \cdots Z_{p-1}$ for $0 \le i \le p-1$ and
\item every path from a variable $Y_i$ to a variable $Z_k$ in the derivation tree of $\mathcal{I}$
has length at most $3 + 2\log_2 \|Y_i\| - 2\log_2 \|Z_k\|$ for $i \le k \le p-1$.
\end{itemize}
Note that $\|Y_i\| = |t_i| - |t_p|$.
We finally add to the TSLP $\mathcal{H}$ all right-hand side definitions \eqref{rho:X_j}, \eqref{rho:X_p}, 
\eqref{rho:C_j}, and all right-hand side definitions from the SSLP $\mathcal{I}$. 
Here, we have to replace a concatenation $YZ$ in a right-hand side of $\mathcal{I}$ by $Y[Z]$.

Concerning the number of introduced variables:
for each $X_i$ we introduce $Y_i, Z_i$, so $2p$ in total,
and the $\mathcal{I}$ is guaranteed to have $\mathcal O (p)$ variables as well.
Summed over all paths this yields $\mathcal O (n)$.
For the size of the rules,
each rule introduced in~\eqref{rho:X_p}
is exactly the rule for $X_p$ (i.e., in~\eqref{rho(X_i)})
and similarly a rule for $Z_i$, where $0 \leq i < p$,
corresponds to a rule for $X_i$, in particular,
$|\tau(Z_i)| = |\rho(X_i)|$.
And so the sum of those productions' sizes is $\sum_{i=0}^p |\rho(X_i)|$.
Rules in \eqref{rho:X_j} have size $2$ and there are $p$ of them,
so their productions' size is $2p$.
Lastly, rules introduced as a translation of rules from $\mathcal{I}$
have the same size as those in $\mathcal{I}$,
which is guaranteed to be $\mathcal{O}(p)$.
Thus the sum of rules' sizes is at most
$\sum_{i=0}^p |\rho(X_i)| + \mathcal{O}(p)$.
We make the above construction 
for every symmetric centroid path of $\calG$.
Hence, the total size of the TSLP $\mathcal{H}$ is indeed $\mathcal{O}(|\calG|)$. 
Moreover, the construction of $\mathcal{H}$ needs linear time.
It remains to show that the depth of $\mathcal{H}$ is $\mathcal{O}(\log n)$.

First, we consider the symmetric centroid path \eqref{sym-centroid-path} and a path in $\mathcal{H}$
from a variable $X_i$ ($0 \le i \le p$) to a variable 
$X_{j,k}$ ($i \le j \le p-1$, $1 \le k \le n_j$, $k \neq d_j$)
or a variable from $\rho(X_p)$.
Let us define the weight $\| X \|$ for a variable $X \in \V$ of $\calG$ as the size of the tree $\valXG{\calG}{X}$.
A path from $X_i$ to a variable $Y$ in $\rho(X_p)$
has the form $X_i \to  Y$ or $X_i \to X_p \to Y$ (since $\tau(X_p) = \rho(X_p)$)
and hence has length at most two. Now consider a path from $X_i$ to 
a variable $X_{j,k}$ with $i \le j \le p-1$.
We claim that the length of this path is bounded by $5 + 2\log_2 \| X_i\| - 2\log_2\| X_{j,k} \|$.
The path $X_i \xrightarrow{*} X_{j,k}$ has the form
\begin{equation*}
X_i \to Y_i \xrightarrow{*} Z_{j} \to X_{j,k},
\end{equation*}
where $Y_i \xrightarrow{*} Z_{j}$ is a path in $\mathcal{I}$ and hence has length 
at most $3 + 2\log_2 \|Y_i\| - 2\log_2 \|Z_{j}\|$.
Hence, the length of the path is bounded by
$$
5 + 2\log_2 \|Y_i\| - 2\log_2 \|Z_{j}\| \le 5 + 2\log_2 \|X_i\| - 2\log_2 \|X_{j,k}\|
$$
since $\|Y_i\| = |t_i| - |t_p| \le |t_i| = \|X_i\|$ and 
$\|Z_{j}\| = |t_j| - |t_{j+1}| \ge |t_{j,k}| = \|X_{j,k}\|$.

Finally, we consider a maximal path in the derivation tree of $\mathcal{H}$ that starts in the root $S$ and ends in a leaf.
We can factorize this path as
\begin{equation} \label{full-path}
S = X_0 \xrightarrow{*} X_1 \xrightarrow{*} X_2 \xrightarrow{*} \cdots \xrightarrow{*} X_k
\end{equation}
where all variables $X_i$ belong to the original $\Gamma$-SLP $\calG$, and every 
subpath $X_i \xrightarrow{*} X_{i+1}$ has the form  considered in the last paragraph.
The right-hand side of $X_k$ is a single symbol from $\Gamma_0$ (such a right-hand side can appear in  \eqref{rho:X_p}).
In the $\Gamma$-SLP $\calG$ we have a corresponding path $X_i  \xrightarrow{*} X_{i+1}$ that is contained
in a single symmetric centroid path except for the last edge leading to $X_{i+1}$. 
By the above consideration, the length of the path \eqref{full-path} is bounded by
$$
\sum_{i=0}^{k-1} (5 + 2\log_2 \| X_i\| - 2\log_2 \| X_{i+1}\|) \le 5k + 2\log_2 \| S\|
= 5k+ 2\log_2 n.
$$
By the second claim of 
Lemma~\ref{lem-decomposition} we have $k \le 2 \log_2 n$ which shows that the length of the path 
\eqref{full-path} is bounded by $7\log_2 n$.
This concludes the proof of Theorem~\ref{thm-balance-TSLP}. \qed

\subsection{Forest algebras and forest straight-line programs} \label{sec-FSLP}
\label{sec:fslps}
\newcounter{fslpdef}
\renewcommand{\thefslpdef}{\roman{fslpdef}}
\newcommand{\nextstepdef}[1]{\refstepcounter{fslpdef}\thefslpdef\label{#1}}

\subsubsection{Forest algebra}
\label{sec:fslps:forests} 

Let us fix a finite set $\Sigma$ of node labels.
In this section, we consider $\Sigma$-labelled rooted ordered trees, where ``ordered''
means that the children of a node are totally ordered.  Every node has a label from $\Sigma$.
In contrast to the trees from Section~\ref{sec-trees} we make no rank assumption: the number 
of children of a node (also called its degree)
\emph{is not} determined by its node label.
A {\em forest} is a (possibly empty) sequence of such trees.
The size $|v|$ of a forest is the total number of nodes in $v$.
The set of all $\Sigma$-labelled forests is denoted by $\F_0(\Sigma)$. Formally, $\F_0(\Sigma)$ 
can be inductively defined as the smallest set of strings over the alphabet $\Sigma \cup \{ (, \, )\}$
such that 
\begin{itemize}
\item $\varepsilon \in \F_0(\Sigma)$ (the empty forest),
\item if $u, v \in \F_0(\Sigma)$ then $u v \in \F_0(\Sigma)$, and
\item if $u \in \F_0(\Sigma)$ then $a(u) \in \F_0(\Sigma)$ (this is the forest consisting of a single tree 
whose root is labelled with $a$).
\end{itemize}
Let us fix a distinguished
symbol $\ast \not\in \Sigma$.
The set of forests $u \in \F_0(\Sigma \cup \{\ast\})$ such that $\ast$ has a unique
occurrence in $u$ and this occurrence is at a leaf node is denoted by $\F_1(\Sigma)$.
Elements of $\F_1(\Sigma)$ are called {\em forest contexts}.
Following~\cite{DBLP:conf/birthday/BojanczykW08}, we define the {\em forest algebra}
as the 2-sorted algebra
$$
\mathsf{F}(\Sigma) = ( \F_0(\Sigma), \F_1(\Sigma), \conch_{00},
\conch_{01}, \conch_{10}, \concv_{0}, \concv_{1}, (a(\ast))_{a \in \Sigma}, \varepsilon, \ast)
$$
as follows:
\begin{itemize}
\item $\conch_{ij} \colon \F_i(\Sigma) \times \F_j(\Sigma) \to \F_{i+j}(\Sigma)$ ($ij\in \{00,01,10\}$) 
is a horizontal concatenation operator: for $u \in \F_i(\Sigma)$, $v \in \F_j(\Sigma)$
we set $u \conch_{ij} v = uv$ (i.e., we concatenate the corresponding sequences of trees).
\item $\concv_{i} \colon \F_1(\Sigma) \times \F_i(\Sigma) \to \F_{i}(\Sigma)$ is a vertical concatenation operator: for  
$u \in \F_1(\Sigma)$ and $v \in \F_i(\Sigma)$, $u \concv_{i} v$ is obtained by replacing in $u$ the unique
occurrence of $\ast$ by  $v$.
\item $\varepsilon \in \F_0(\Sigma)$ and $\ast, a(\ast) \in  \F_1(\Sigma)$ ($a \in\Sigma$) are constants of the forest algebra.
\end{itemize}
Note that $(\F_0(\Sigma), \conch_{00}, \varepsilon)$ and $(\F_1(\Sigma), \concv_{1}, \ast)$ are monoids.
In the following we will omit the subscripts $i,j$ in $\conch_{ij}$ and $\concv_i$, since they will be always clear
from the context.
Most of the time, we simply write $uv$ instead of $u \conch v$, $a(u)$ instead of $a(\ast) \concv u$, and 
$a$ instead of $a(\varepsilon)$. With these abbreviations, a forest $u \in \F(\Sigma)$ can be also 
viewed as an algebraic expression over
the algebra $\mathsf{F}(\Sigma)$, which evaluates to $u$ itself (analogously to the free term algebra).

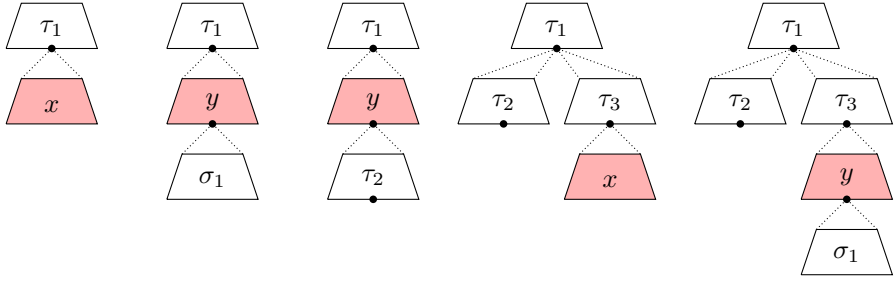
\begin{figure}
\pgfkeys{/pgf/inner sep=.07em}  
  \centering
    \begin{tikzpicture}  
      \node[circle,fill,minimum size=1mm] (a) {} ;
      \draw  ($(a)+(-0.6,0)$) -- node[pos=.5, label={[yshift=1mm]$\tau_1$}] {} ($(a)+(0.6,0)$) --  ($(a)+(0.4,+0.6)$)  -- ($(a)+(-0.4,+0.6)$)  -- ($(a)+(-0.6,0)$) ;
     
      \draw[fill=red!30]  ($(a)+(-0.6,-1)$) -- node[pos=.5, label={[yshift=1mm]$x$}] {} ($(a)+(0.6,-1)$) --  ($(a)+(0.4,-0.4)$)  -- ($(a)+(-0.4,-0.4)$)  -- ($(a)+(-0.6,-1)$) ;
      
      \draw[densely dotted] ($(a)+(-0.4,-0.4)$) -- (a) -- ($(a)+(0.4,-0.4)$);
      
      \node[circle,fill,minimum size=1mm, right = 2cm of a] (b) {} ;
      
      \draw  ($(b)+(-0.6,0)$) -- node[pos=.5, label={[yshift=1mm]$\tau_1$}] {} ($(b)+(0.6,0)$) --  ($(b)+(0.4,+0.6)$)  -- ($(b)+(-0.4,+0.6)$)  -- ($(b)+(-0.6,0)$) ;
     
      \draw[fill=red!30]  ($(b)+(-0.6,-1)$) -- node[circle, fill=black,minimum size=1mm, pos=.5, label={[yshift=1mm]$y$}] (c) {}  ($(b)+(0.6,-1)$) -- ($(b)+(0.4,-0.4)$)  -- ($(b)+(-0.4,-0.4)$)  -- ($(b)+(-0.6,-1)$) ;
      
      \draw ($(c)+(-0.6,-1)$) -- node[pos=.5, label={[yshift=1mm]$\sigma_1$}] {} ($(c)+(0.6,-1)$) --  ($(c)+(0.4,-.4)$)  -- ($(c)+(-0.4,-.4)$)  -- ($(c)+(-0.6,-1)$) ;
      
      \draw[densely dotted] ($(b)+(-0.4,-0.4)$) -- (b) -- ($(b)+(0.4,-0.4)$);
      \draw[densely dotted] ($(c)+(-0.4,-0.4)$) -- (c) -- ($(c)+(0.4,-0.4)$);
      
      \node[circle,fill,minimum size=1mm, right = 2cm of b] (d) {} ;
      
      \draw  ($(d)+(-0.6,0)$) -- node[pos=.5, label={[yshift=1mm]$\tau_1$}] {} ($(d)+(0.6,0)$) --  ($(d)+(0.4,+0.6)$)  -- ($(d)+(-0.4,+0.6)$)  -- ($(d)+(-0.6,0)$) ;
     
      \draw[fill=red!30]  ($(d)+(-0.6,-1)$) -- node[circle, fill=black,minimum size=1mm, pos=.5, label={[yshift=1mm]$y$}] (e) {}  ($(d)+(0.6,-1)$) -- ($(d)+(0.4,-0.4)$)  -- ($(d)+(-0.4,-0.4)$)  -- ($(d)+(-0.6,-1)$) ;
      
      \draw ($(e)+(-0.6,-1)$) -- node[circle, fill=black,minimum size=1mm, pos=.5, label={[yshift=1mm]$\tau_2$}] {} ($(e)+(0.6,-1)$) --  ($(e)+(0.4,-.4)$)  -- ($(e)+(-0.4,-.4)$)  -- ($(e)+(-0.6,-1)$) ;
      
      \draw[densely dotted] ($(d)+(-0.4,-0.4)$) -- (d) -- ($(d)+(0.4,-0.4)$);
      \draw[densely dotted] ($(e)+(-0.4,-0.4)$) -- (e) -- ($(e)+(0.4,-0.4)$);
       
      \node[circle,fill,minimum size=1mm, right = 2.3cm of d] (f) {} ;
      
      \draw  ($(f)+(-0.6,0)$) -- node[pos=.5, label={[yshift=1mm]$\tau_1$}] {} ($(f)+(0.6,0)$) --  ($(f)+(0.4,+0.6)$)  -- ($(f)+(-0.4,+0.6)$)  -- ($(f)+(-0.6,0)$) ;
     
      \draw  ($(f)+(-1.3,-1)$) -- node[circle, fill=black,minimum size=1mm, pos=.5, label={[yshift=1mm]$\tau_2$}] (g) {}  ($(f)+(-0.1,-1)$) -- ($(f)+(-0.3,-0.4)$)  -- ($(f)+(-1.1,-0.4)$)  -- ($(f)+(-1.3,-1)$) ;
      \draw ($(f)+(0.1,-1)$) -- node[circle, fill=black,minimum size=1mm, pos=.5, label={[yshift=1mm]$\tau_3$}] (h) {}  ($(f)+(1.3,-1)$) -- ($(f)+(1.1,-0.4)$)  -- ($(f)+(0.3,-0.4)$)  -- ($(f)+(0.1,-1)$) ;
      
       \draw[fill=red!30]  ($(h)+(-0.6,-1)$) -- node[pos=.5, label={[yshift=1mm]$x$}] {} ($(h)+(0.6,-1)$) --  ($(h)+(0.4,-0.4)$)  -- ($(h)+(-0.4,-0.4)$)  -- ($(h)+(-0.6,-1)$) ;

      \draw[densely dotted] ($(f)+(1.1,-0.4)$) -- (f) -- ($(f)+(0.3,-0.4)$);
     \draw[densely dotted] ($(f)+(-0.3,-0.4)$) -- (f) -- ($(f)+(-1.1,-0.4)$);
      \draw[densely dotted] ($(h)+(-0.4,-0.4)$) -- (h) -- ($(h)+(0.4,-0.4)$);

     \node[circle,fill,minimum size=1mm, right = 3cm of f] (i) {} ;
      
      \draw  ($(i)+(-0.6,0)$) -- node[pos=.5, label={[yshift=1mm]$\tau_1$}] {} ($(i)+(0.6,0)$) --  ($(i)+(0.4,+0.6)$)  -- ($(i)+(-0.4,+0.6)$)  -- ($(i)+(-0.6,0)$) ;
     
      \draw  ($(i)+(-1.3,-1)$) -- node[circle, fill=black,minimum size=1mm, pos=.5, label={[yshift=1mm]$\tau_2$}] (j) {}  ($(i)+(-0.1,-1)$) -- ($(i)+(-0.3,-0.4)$)  -- ($(i)+(-1.1,-0.4)$)  -- ($(i)+(-1.3,-1)$) ;
      \draw ($(i)+(0.1,-1)$) -- node[circle, fill=black,minimum size=1mm, pos=.5, label={[yshift=1mm]$\tau_3$}] (k) {}  ($(i)+(1.3,-1)$) -- ($(i)+(1.1,-0.4)$)  -- ($(i)+(0.3,-0.4)$)  -- ($(i)+(0.1,-1)$) ;
      
       \draw[fill=red!30]  ($(k)+(-0.6,-1)$) -- node[circle, fill=black,minimum size=1mm, pos=.5, label={[yshift=1mm]$y$}] (l) {} ($(k)+(0.6,-1)$) --  ($(k)+(0.4,-0.4)$)  -- ($(k)+(-0.4,-0.4)$)  -- ($(k)+(-0.6,-1)$) ;
       
        \draw ($(l)+(-0.6,-1)$) -- node[pos=.5, label={[yshift=1mm]$\sigma_1$}] {} ($(l)+(0.6,-1)$) --  ($(l)+(0.4,-.4)$)  -- ($(l)+(-0.4,-.4)$)  -- ($(l)+(-0.6,-1)$) ;

      \draw[densely dotted] ($(i)+(1.1,-0.4)$) -- (i) -- ($(i)+(0.3,-0.4)$);
     \draw[densely dotted] ($(i)+(-0.3,-0.4)$) -- (i) -- ($(i)+(-1.1,-0.4)$);
      \draw[densely dotted] ($(k)+(-0.4,-0.4)$) -- (k) -- ($(k)+(0.4,-0.4)$);
      \draw[densely dotted] ($(l)+(-0.4,-0.4)$) -- (l) -- ($(l)+(0.4,-0.4)$);
      
   \end{tikzpicture}
\caption{\label{fig-forest-tame}The shapes of the contexts in $C$ (proof of Lemma~\ref{lem-forest-tame}). 
Forests and forest contexts are represented by trapezoids. The roots of the forests/forest contexts are located
on the top horizontal lines of the trapezoids.
Bullet nodes represent occurrences of $\ast$. Symmetric 
shapes where the roles of $\tau_2$ and $\tau_3$ exchanged are omitted.}

\end{figure}

\begin{lem} \label{lem-forest-tame}
Every forest algebra $\mathsf{F}(\Sigma)$ has a finite subsumption base.
\end{lem}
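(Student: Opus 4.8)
The plan is to apply Lemma~\ref{lem:tame-atomic}: it suffices to produce a finite set $C$ of contexts over the forest algebra $\mathsf{F}(\Sigma)$ such that (a) every atomic context is subsumed in $\mathsf{F}(\Sigma)$ by some element of $C$, and (b) whenever $s$ is an atomic context and $t \in C$ with $s[t]$ defined and with $s, t$ sharing no auxiliary variable, then $s[t] \le^{\mathsf{F}(\Sigma)} t'$ for some $t' \in C$. The members of $C$ will be the finitely many context shapes sketched in Figure~\ref{fig-forest-tame}, one for each essentially different way a single occurrence of the main variable $x$ can be threaded through a forest, respectively forest context. Finding the right $C$ is the only nonroutine step.

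To arrive at $C$, I would first analyse by structural induction the shape of the unary linear term function $t^{\mathsf{F}(\Sigma)}$ of an arbitrary context $t$ with main variable $x$, normalising with the help of two facts: both the horizontal concatenations $\conch$ (within each compatible choice of sorts) and the vertical concatenation $\concv$ are associative, and for all forests $w_\ell, w_r \in \F_0(\Sigma)$ one has $(w_\ell \conch {\ast} \conch w_r) \concv v = w_\ell \conch v \conch w_r$, which lets one absorb horizontal material surrounding a hole into a vertical concatenation. The upshot is that, after suitably instantiating the auxiliary variables of $t$, $t^{\mathsf{F}(\Sigma)}$ becomes equivalent to one of boundedly many shapes: for $x$ of sort $0$ the shape $y \concv x$ (sort $(0,0)$) and the shapes $y_1 \concv \big( (y_2 \concv x) \conch y_3 \big)$ and $y_1 \concv \big( y_3 \conch (y_2 \concv x) \big)$ (sort $(0,1)$, with $y_3$ carrying the output hole); for $x$ of sort $1$ the shape $y_1 \concv x \concv y_2$ (sort $(1,0)$ if $y_2$ is a forest, sort $(1,1)$ if $y_2$ is a forest context) together with two further sort-$(1,1)$ shapes in which the output hole is carried by an auxiliary variable that is a sibling of the subtree containing $x$, whose hole is then filled by a plain forest. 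Here $y_1, y_2, \dots$ are auxiliary variables of the appropriate sorts; $C$ consists of one representative term per shape, with the constants $\varepsilon$ or $\ast$ plugged into whichever parameters are absent. Since the labels of $\Sigma$ occur only inside the terms substituted for the parameters, $C$ is finite (and can even be taken independent of $\Sigma$).

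It then remains to check (a) and (b). For (a) one enumerates the atomic contexts — they come from the binary operations $\conch_{00}, \conch_{01}, \conch_{10}, \concv_0, \concv_1$ by putting $x$ in one argument slot and a single auxiliary variable in the other, the rank-$0$ symbols $a(\ast), \varepsilon, \ast$ contributing none — and for each one exhibits the subsuming shape together with the witnessing substitution (which just maps the redundant parameters to $\varepsilon$ or $\ast$). For (b) one verifies, case by case over the finitely many pairs $(s, t)$, that wrapping one atomic operation around a shape re-normalises into a shape; each case is settled by the same associativity and absorption identities, e.g.\ $y_0 \conch (y_1 \concv x \concv y_2) = (y_0 \conch y_1) \concv x \concv y_2$, which is again of the form $y_1' \concv x \concv y_2'$.

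The main obstacle is pinning down the list of shapes precisely enough that $C$ is genuinely closed under step (b); the delicate cases are those of sort $(1,1)$ where the output hole is a sibling of $x$ rather than lying above or below it — there the interplay of $\concv$ and $\conch$ forbids merging the sibling hole into the material above $x$, so it must be kept as a separate parameter — and one must remember to include all the mirror-image shapes obtained by swapping left and right siblings (as indicated in Figure~\ref{fig-forest-tame}), since composing with a one-sided horizontal concatenation would otherwise lead out of $C$. Once the shapes are fixed, conditions (a) and (b) of Lemma~\ref{lem:tame-atomic} reduce to a finite, if tedious, verification, and Lemma~\ref{lem-forest-tame} follows.
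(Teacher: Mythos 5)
Your proposal is correct and follows essentially the same route as the paper's proof: it applies Lemma~\ref{lem:tame-atomic} to the same finite family of shapes (those of Figure~\ref{fig-forest-tame}: $\tau_1 \concv x$, $\tau_1 \concv y \concv \sigma_1$, $\tau_1 \concv y \concv \tau_2$, and the mirror pairs with a sibling hole-carrier), normalises via associativity of $\conch,\concv$ and the absorption identity $(w_\ell \conch \ast \conch w_r) \concv v = w_\ell \conch v \conch w_r$, and finishes with the same finite case check over atomic contexts. The only minor imprecision is the parenthetical claim that the witnessing substitutions map parameters only to $\varepsilon$ or $\ast$: for instance, for the atomic context $\sigma \conch x$ one must map the vertical parameter to $\sigma \conch \ast$, which is exactly the absorption step you already describe, so nothing of substance is affected.
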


\begin{proof}
	In the following we denote by $x$ and $y$ the main variables of sorts $\F_0(\Sigma)$ and $\F_1(\Sigma)$, respectively,
	and by $\sigma, \sigma_1, \sigma_2, \dots$ (resp., $\tau, \tau_1, \tau_2, \dots$) auxiliary variables of sorts
	$\F_0(\Sigma)$ (resp., $\F_1(\Sigma)$).
	In the following, subsumption and equivalence of contexts are always meant 
	with respect to the forest algebra $\mathsf{F}(\Sigma)$.
	
	Let $C$ be the set of containing the following contexts (see also Figure~\ref{fig-forest-tame}):
	\begin{enumerate}[(a)]
	\item 
	$\tau_1 \concv x$,
	\item 
	$\tau_1 \concv y \concv \sigma_1$
	and $\tau_1 \concv y \concv \tau_2$,
	\item 
	$\tau_1 \concv (\tau_2 \conch (\tau_3 \concv x))$
	and $\tau_1 \concv ((\tau_2 \concv x) \conch \tau_3)$,
	\item 
	$\tau_1 \concv (\tau_2 \conch (\tau_3 \concv y \concv \sigma_1))$
	and $\tau_1 \concv ((\tau_2 \concv y \concv \sigma_1) \conch \tau_3)$.
	\end{enumerate}
	A context from point (x) (for x = a,b,c,d) will be also called a (x)-context below.
	First notice that every atomic context is of the form
	$\tau \concv x$, $\tau \concv y$,
	$y \concv \sigma$, $y \concv \tau$,
	$\sigma \conch x$, $x \conch \sigma$,
	$\sigma \conch y$, $y \conch \sigma$, 
	$\tau \conch x$, or $x \conch \tau$
	(up to variable renaming). Each of these contexts is subsumed by a context in $C$.
	For the atomic contexts $\tau \concv x$, $\tau \concv y$,
	$y \concv \sigma$, $y \concv \tau$, 
	$\tau \conch x$, and $x \conch \tau$ this is obvious.
	For $\sigma \conch x$ note that $\sigma \conch x$ is equivalent to the context $(\sigma \conch \ast) \concv x$,
	which is subsumed by $\tau_1 \concv x$. A similar argument also applies to $x \conch \sigma$,
	$\sigma \conch y$ and $y \conch \sigma$.
	
	Now consider any context $s \in C$.
	We prove that for any atomic context $s'$ from above, $s'[s]$ is subsumed by some context from $C$.
	\begin{description}
		\item[Case $\tau \concv s$] Since $s$ is of the form $s = \tau_1 \concv s'$
		for some $s'$,
		the context $\tau \concv s = \tau \concv (\tau_1 \concv s')$
		is subsumed by $s \in C$ itself.
		\item[Case $s \concv \sigma$ and $s \concv \tau$] In this case $s$ must be either the (b)-context
		$s = \tau_1 \concv y \concv \tau_2$, a (c)-context or a (d)-context.
		\begin{enumerate}
			\item If $s= \tau_1 \concv y \concv \tau_2$, then $s \concv \sigma$ and 
			$s \concv \tau$ are subsumed by a (b)-context.
			\item Assume that $s$ is a (c)-context, say $s = \tau_1 \concv (\tau_2 \conch (\tau_3 \concv x))$.
			Then $s \concv \sigma$ is equivalent to $(\tau_1 \concv ((\tau_2 \concv \sigma) \conch \tau_3)) \concv x$
			which is subsumed by $\tau_1 \concv x$. Moreover, 
			$s \concv \tau$ is equivalent to $\tau_1 \concv ((\tau_2 \concv \tau) \conch (\tau_3 \concv x))$,
			which is subsumed by $s$ itself.
			\item Assume that $s$ is a (d)-context, say 
			$s = \tau_1 \concv (\tau_2 \conch (\tau_3 \concv y \concv \sigma_1))$.
			Firstly, $s \concv \sigma$ is equivalent to
			$(\tau_1 \concv ((\tau_2 \concv \sigma) \conch \tau_3)) \concv y \concv \sigma_1$,
			which is subsumed by the context $\tau_1 \concv x \concv \sigma_1$.
			Secondly, $s \concv \tau$ is equivalent to
			$\tau_1 \concv ((\tau_2 \concv \tau) \conch (\tau_3 \concv y \concv \sigma_1))$,
			which is subsumed by $s$ itself.
		\end{enumerate}
		\item[Case $\sigma \conch s$ and $s \conch \sigma$] Since $s$ is of the form $s = \tau_1 \concv s'$ for some $s'$
		the context $\sigma \conch s$ is equivalent to $(\sigma \conch \tau_1) \concv s'$, which is subsumed by $s$ itself.
		The case $s \conch \sigma$ is similar.
		\item[Case $\tau \conch s$ and $s \conch \tau$]
		In this case $s$ must be either the (a)-context or the (b)-context $\tau_1 \concv y \concv \sigma_1$.
		If $s$ is the (a)-context $\tau_1 \concv x$ then $\tau \conch s$ is subsumed by
		the (c)-context $\tau_1 \concv (\tau_2 \conch (\tau_3 \concv x))$.
		If $s$ is the (b)-context $\tau_1 \concv y \concv \sigma_1$
		then $\tau \conch s$ is subsumed by the (d)-context 
		$\tau_1 \concv (\tau_2 \conch (\tau_3 \concv y \concv \sigma_1))$.
		The case $s \conch \tau$ is similar.
	\end{description}
	By Lemma~\ref{lem:tame-atomic}, $C$ is a finite subsumption base.
\end{proof}

\begin{rem} \label{remark-func-extension}
Similarly to the proof of Lemma~\ref{lem-forest-tame} one can show that 
for every signature $\Gamma$ the functional extension $\hat\T(\Gamma)$ of the free term algebra $\T(\Gamma)$
has a finite subsumption base as well. Recall from Example~\ref{ex-free} that the free term algebra $\T(\Gamma)$
has no finite subsumption base if $\Gamma$ contains a symbol of rank at least one. 
\end{rem}

\subsubsection{Forest straight-line programs}
\label{sec:fslps:fslps}

A {\em forest straight-line program} over $\Sigma$, FSLP for short, is a straight-line program
$\calG$ over the algebra  $\mathsf{F}(\Sigma)$ such that $\valX{\calG}^{\mathsf{F}(\Sigma)} \in \F_0(\Sigma)$.
Iterated vertical and horizontal concatenations allow to
generate forests, whose depth and width is exponential
in the size of the FSLP.
For an FSLP $\mathcal{G} = (\V,\rho,S)$ and $i \in \{0,1\}$
we define $\V_i = \{ X \in \V \mid \valX{X}^{\mathsf{F}(\Sigma)} \in \F_i(\Sigma) \}$. 
Every right-hand side of a standard FSLP $\mathcal{G}$ must have one of the following forms:
(i) $\varepsilon$ (the empty forest), (ii) $\ast$, (iii) $a(\ast)$ for $a \in \Sigma$,
(iv) $X \conch Y$ (for which we write $XY$)
for $X,Y \in \V$ with $X \in \V_0$ or $Y \in \V_0$, or (v) $X \concv Y$
for $X \in \V_1$ and $Y \in \V$.

\begin{ex}
\label{example:fslp}
Let $n \in \mathbb{N}$. Consider the (non-standard) FSLP 
$$\mathcal{G} = (\{S,X_0, \ldots, X_n,Y_0, \ldots, Y_n\}, \rho, S)$$ over $\{a,b,c\}$
with $\rho$ defined by
$\rho(X_0) = a$, $\rho(X_i) = X_{i-1} X_{i-1}$ for $1 \leq i \leq n$,
$\rho(Y_0) = b(X_n \ast X_n)$, $\rho(Y_i) = Y_{i-1} \concv Y_{i-1}$
for $1 \leq i \leq n$, and $\rho(S) = Y_n \concv c$. We have
$$\valX{\mathcal{G}}^{\mathsf{F}(\Sigma)} = b(a^{2^n} b(a^{2^n} \cdots b (a^{2^n} c \, a^{2^n}) \cdots a^{2^n}) a^{2^n}),$$
where $b$ occurs $2^n$ many times, see Figure~\ref{fig-ex3} for $n=2$.

\pgfkeys{/pgf/inner sep=0.1em}

\begin{figure}
  \centering
  \label{fig:fslp}
  \begin{forest}
    [$b$, s sep=3mm, for tree={parent anchor=south, child anchor=north}
      [$a$]
      [$a$]
      [$a$]
      [$a$]
      [$b$, s sep=3mm
        [$a$]
        [$a$]
        [$a$]
        [$a$]
        [$b$, s sep=3mm
          [$a$]
          [$a$]
          [$a$]
          [$a$]
          [$b$, s sep=3mm
            [$a$]
            [$a$]
            [$a$]
            [$a$]
            [$c$]
            [$a$]
            [$a$]
            [$a$]
            [$a$]
          ]
          [$a$]
          [$a$]
          [$a$]
          [$a$]
        ]
        [$a$]
        [$a$]
        [$a$]
        [$a$]
      ]
      [$a$]
      [$a$]
      [$a$]
      [$a$]
    ]
  \end{forest}
\caption{\label{fig-ex3}Forest $\valX{\mathcal{G}}^{\mathsf{F}(\Sigma)}$ for $n=2$ from Example~\ref{example:fslp}.}
\end{figure}
\end{ex}

Let us first show that most occurrences of $\varepsilon$ and $\ast$ can be
eliminated in an FSLP.

\begin{lem} \label{lem-eliminate-eps}
From a given FSLP $\calG$ with $\valX{\mathcal{G}}^{\mathsf{F}(\Sigma)}  \neq \varepsilon$
 one can compute in linear time an FSLP $\mathcal{H}$ such that
$\valX{\mathcal{G}}^{\mathsf{F}(\Sigma)} = \valX{\mathcal{H}}^{\mathsf{F}(\Sigma)}$,
$|\mathcal{H}| \in \mathcal{O}(|\calG|)$,  $\depth(\mathcal{H}) \in \mathcal{O}(\depth(\calG))$, and
$\mathcal{H}$ does not contain occurrences of the constants $\varepsilon$ and $\ast$, except
for right-hand sides of the form $a(\varepsilon)$.\footnote{Constants $a(\ast)$ are allowed as well.
Formally,  $a(\ast)$ is a constant symbol that is interpreted by the forest context $a(\ast)$.}
\end{lem}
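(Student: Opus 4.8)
The plan is to detect the variables whose value is $\varepsilon$ or $\ast$, delete all of them, and rewrite the surviving right-hand sides so that the only constants used are the $a(\ast)$ and the allowed single-node constants $a(\varepsilon)$. First I would, in one bottom-up pass over $\calG$, compute for each variable $X$ the two booleans ``$\valX{X}^{\mathsf{F}(\Sigma)}=\varepsilon$'' and ``$\valX{X}^{\mathsf{F}(\Sigma)}=\ast$''. These obey the obvious recurrences over the five shapes of standard right-hand sides: $\varepsilon$ and $\ast$ are self-classifying, $a(\ast)$ is neither, $X\conch Y$ is $\varepsilon$-valued iff both $X,Y$ are and $\ast$-valued iff one is $\varepsilon$-valued and the other $\ast$-valued, and $X\concv Y$ is $\varepsilon$-valued iff $X$ is $\ast$-valued and $Y$ is $\varepsilon$-valued and $\ast$-valued iff both $X,Y$ are $\ast$-valued. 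Hence the pass runs in linear time. Since $\valX{\calG}^{\mathsf{F}(\Sigma)}\neq\varepsilon$ and $\varepsilon\in\F_0(\Sigma)$ while $\ast\in\F_1(\Sigma)$, the start variable $S$ is neither $\varepsilon$-valued nor $\ast$-valued, so it survives.

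Next I would rewrite right-hand sides bottom-up. The harmless cases are local: in a right-hand side $X\conch Y$ an $\varepsilon$-valued operand is simply dropped, and if then only one operand remains, the left-hand variable is identified with it (using a representative pointer, so the substitution costs $\mathcal{O}(1)$ amortised per variable and never introduces a new constant); in a right-hand side $X\concv Y$ an $\ast$-valued context $X$ makes the concatenation equal to $Y$, and an $\varepsilon$-valued argument $Y$ makes it equal to ``$X$ with its $\ast$-leaf removed''. To realise the last case I would introduce, for every surviving context variable $X$, one companion $\overline{X}$ producing the forest obtained from $\valX{X}^{\mathsf{F}(\Sigma)}$ by deleting the $\ast$-leaf, defined by the matching recurrence ($\overline{X}$ has right-hand side $a(\varepsilon)$ when $\rho(X)=a(\ast)$, $\overline{Y}\conch Z$ or $Y\conch\overline{Z}$ when $\rho(X)$ is a horizontal concatenation, $Y\concv\overline{Z}$ when $\rho(X)=Y\concv Z$ with $Z\in\V_1$). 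This keeps every new right-hand side of constant size and adds only $\mathcal{O}(|\V|)$ variables, each of depth at most $\depth(\calG)+\mathcal{O}(1)$.

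The one genuinely delicate point is what I would call \emph{flat context variables}: a context variable $W$ with $\valX{W}^{\mathsf{F}(\Sigma)}=L\conch\ast\conch R$ for forests $L,R$ (the hole is not the unique child of any node) cannot survive, because, by an easy structural invariant, every FSLP context whose hole is not the unique child of its parent requires a right-hand side equal to the bare constant $\ast$. The remedy is not to keep $W$ but to record the residual sibling forests $L,R$ (as pointers to surviving forest variables, or $\varepsilon$) together with their own recurrences, and to propagate this data upward: when such a $W$ is the context of a vertical concatenation filling the hole with a forest $U$, the value is the ordinary forest $L\conch U\conch R$, spelled out directly over surviving variables; when $W$ is an operand of another horizontal or vertical concatenation, the flat-context data (and, if relevant, a companion $\overline{\cdot}$) is transferred to the parent by the analogous recurrence; and since $\valX{\calG}^{\mathsf{F}(\Sigma)}\in\F_0(\Sigma)$ the hole is closed on the path down from $S$, so every flat context is eventually absorbed. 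Correctness is then a routine induction on the bottom-up order, and since every right-hand side of $\mathcal{H}$ has constant size and each rewriting step refers only to strict descendants or companions thereof, one obtains $|\mathcal{H}|\in\mathcal{O}(|\calG|)$ and $\depth(\mathcal{H})\in\mathcal{O}(\depth(\calG))$. The hard part will be exactly this flat-context bookkeeping: making the residual-sibling propagation precise over all five right-hand-side shapes while keeping it linear-time and non-increasing in depth. Everything else is a mechanical case analysis.
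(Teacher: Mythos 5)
Your first pass, the companion $\overline{X}$ (the paper's $X_\varepsilon$), and your structural invariant are all sound: in an FSLP whose only context constants are of the form $a(\ast)$, every $\F_1$-valued variable evaluates to a context whose hole is the \emph{unique} child of its parent. The gap is that your remedy covers only part of what this invariant excludes. You decompose only the ``flat'' contexts, those with value $L \conch \ast \conch R$, and record just the pair $(L,R)$. But a context whose hole lies \emph{below} the root and has siblings is equally unrepresentable, and such contexts arise immediately from your flat ones: if $A$ is a surviving context variable with value $a(\ast)$ and $W$ is flat with data $(L,R)=(b,c)$, then the variable $X$ with right-hand side $A \concv W$ has value $a(b \conch \ast \conch c)$. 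This $X$ is not flat, so your absorption case (``the value is the ordinary forest $L \conch U \conch R$'') does not apply; it cannot survive as a single variable of $\mathcal{H}$ by your own invariant; and the pair $(L,R)$ alone cannot stand for it, because the top part $A$ is lost --- plugging a forest $U$ into $X$ later must produce $A \concv (L \conch U \conch R)$, not $L \conch U \conch R$. Thus ``transferred to the parent by the analogous recurrence'' is precisely the point where your construction is undefined, and it is not a bookkeeping detail: it is the core case of the lemma. (A smaller omission: your linear pass computes only exact-$\varepsilon$/exact-$\ast$ values, not the flatness predicate or the residual forests, though these are also computable in one bottom-up pass.)

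The repair is to decompose \emph{every} context variable $X \in \V_1$ into up to three surviving components: the left and right sibling forests $X_\ell, X_r$ of the hole and, when the hole is not at root level, a top context $X_t$ whose hole is the unique child of its ($a$-labelled) parent, obtained by replacing the subtree $a(u_\ell \ast u_r)$ around the hole by $a(\ast)$; one then writes recurrences for these triples under all right-hand-side shapes, e.g.\ for $\rho(X) = Y \concv Z$ with $Y$ flat and $X,Z$ not flat one sets $\tau(X_t) = Y_\ell Z_t Y_r$, $\tau(X_\ell) = Z_\ell$, $\tau(X_r) = Z_r$. This is exactly the paper's proof: a first phase eliminating $\ast$ via this triple decomposition, followed by a second phase eliminating $\varepsilon$ via companions $X_\varepsilon$ playing the role of your $\overline{X}$. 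Once the triple recurrences are in place, the rest of your argument (representative pointers, constant-size right-hand sides, linear time, constant-factor depth increase) goes through essentially as you describe.
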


\begin{proof}
Let $\calG = (\V,\rho,S)$. We first construct an equivalent FSLP which does
not contain the constant $\ast$.
Let us denote with $\V_\ast \subseteq \V_1$ the set
of all variables $X \in \V_1$ such that $\valX{X}^{\mathsf{F}(\Sigma)}$ is of the form $u_\ell \ast u_r$
for forests $u_\ell, u_r \in \F_0(\Sigma)$. In other words: $\ast$ occurs at a root position in the forest
$\valX{X}^{\mathsf{F}(\Sigma)}$. 
The set $\V_\ast$ can be easily computed in linear time by a single pass over $\calG$.
Every variable $X \in \V_\ast$ with
$\valX{X}^{\mathsf{F}(\Sigma)} = u_\ell \ast u_r$ is replaced in $\mathcal{H}$ by two variables
$X_\ell$ and $X_r$ that produce in $\mathcal{H}$ the forests $u_\ell$ and $u_r$, respectively.  
Every variable $X \in \V_1 \setminus \V_\ast$ is replaced in $\mathcal{H}$ by three variables
$X_t, X_\ell, X_r$. Since $X \in \V_1 \setminus \V_\ast$, $\valX{X}^{\mathsf{F}(\Sigma)}$ contains
a unique subtree of the form $a(u_\ell \ast u_r)$. Let us denote with $u_t$ (the top part of $u$) the 
forest that is obtained from $u$ by replacing the subtree $a(u_\ell \ast u_r)$ by $a(\ast)$.
We then will have $\valX{X_t}^{\mathsf{F}(\Sigma)} = u_t$, $\valX{X_\ell}^{\mathsf{F}(\Sigma)} = u_\ell$, 
and $\valX{X_r}^{\mathsf{F}(\Sigma)} = u_r$. Finally, all variables from $\V_0$ also belong to $\mathcal{H}$
and produce in $\mathcal{H}$ the same forests as in $\calG$.

It is straight-forward to define the right-hand sides of $\mathcal{H}$ such that the variables indeed produce
the desired forests ($\tau$ denotes the right-hand side mapping of $\mathcal{H}$):
\begin{itemize}
\item If $\rho(X) = \ast$ then $\tau(X_\ell) = \tau(X_r) = \varepsilon$.
\item If $\rho(X) = a(\ast)$ then $\tau(X_t) = a(\ast)$ and 
$\tau(X_\ell) = \tau(X_r) = \varepsilon$.
\item If $\rho(X) = \varepsilon$ or $\rho(X) = YZ$ with $X,Y,Z \in \V_0$ then 
$\tau(X) = \rho(X)$.
\item If $\rho(X) = YZ$ with $X,Y \in \V_\ast$ and $Z \in \V_0$ then 
$\tau(X_\ell) = Y_\ell$  and $\tau(X_r) = Y_r Z$, and analogously  for $X,Z \in \V_\ast$ and $Y \in \V_0$.
\item If $\rho(X) = YZ$ with $X,Y \in \V_1 \setminus \V_\ast$ and $Z \in \V_0$ then 
$\tau(X_\ell) = Y_\ell$, $\tau(X_r) = Y_r$, and
$\tau(X_t) = Y_t Z$, and analogously for $X,Z \in \V_1 \setminus \V_\ast$ and $Y \in \V_0$.
\item If $\rho(X) = Y \concv Z$ with $X,Z \in \V_0$ and $Y \in \V_\ast$ then 
$\tau(X) = Y_\ell Z Y_r$.
\item If $\rho(X) = Y \concv Z$ with $X,Z \in \V_0$ and $Y \in \V_1 \setminus \V_\ast$ then 
$\tau(X) = Y_t \concv (Y_\ell Z Y_r)$.
\item If $\rho(X) = Y \concv Z$ with $X,Y,Z \in \V_\ast$ then 
$\tau(X_\ell) = Y_\ell Z_\ell$ and $\tau(X_r) = Z_r Y_r$.
\item If $\rho(X) = Y \concv Z$ with $Y \in \V_\ast$ and $X,Z \in \V_1 \setminus \V_\ast$
then $\tau(X_t) = Y_\ell Z_t Y_r$, $\tau(X_\ell) = Z_\ell$, and 
$\tau(X_r) =Z_r$.
\item If $\rho(X) = Y \concv Z$ with $Z \in \V_\ast$ and $X,Y \in \V_1 \setminus \V_\ast$
then $\tau(X_t) = Y_t$, $\tau(X_\ell) = Y_\ell Z_\ell$, and 
$\tau(X_r) = Z_r Y_r$.
\item If $\rho(X) = Y \concv Z$ with $X,Y,Z \in \V_1 \setminus \V_\ast$
then $\tau(X_t) = Y_t \concv (Y_\ell Z_t Y_r)$, $\tau(X_\ell) = Z_\ell$, and 
$\tau(X_r) = Z_r$.
\end{itemize}
Note that all right-hand sides of the new FSLP have constant length. Variables $X$ such that
$\tau(X)$ is a variable can be eliminated.

Let us finally eliminate occurrences
of the constant $\varepsilon$, except
for right-hand sides of the form $a(\varepsilon)$.
Let us take an FSLP $\calG = (\V,\rho,S)$ with $\valX{\calG}^{\mathsf{F}(\Sigma)} \neq \varepsilon$
and which does not contain occurrences of the constant $\ast$. Let $\V_\varepsilon = \{ X \in \V_0 \mid
\valX{X}^{\mathsf{F}(\Sigma)} = \varepsilon \}$. Note that $S \notin \V_\varepsilon$. 
The set $\V_\varepsilon$ can be easily computed in linear time by a single pass over $\calG$.
We construct an equivalent FSLP $\mathcal{H}$ which neither contains $\ast$ nor $\varepsilon$,
except for right-hand sides of the form $a(\ast)$ and $a(\varepsilon)$.
All variables from $\calG$ are also contained in $\mathcal{H}$, except for 
variables in $\V_\varepsilon$. For every variable $X \in \V_1$,
$\mathcal{H}$ also contains a copy $X_\varepsilon$ that produces
$\valX{X}^{\mathsf{F}(\Sigma)} \concv \varepsilon$. The right-hand side mapping $\tau$
of $\mathcal{H}$ is defined as follows:
\begin{itemize}
\item If $\rho(X) = a(\ast)$ then $\tau(X) = a(\ast)$ and
$\tau(X_\varepsilon) = a(\varepsilon)$.
\item If $\rho(X) = \varepsilon$ then $X$ does not belong to $\mathcal{H}$.
\item If $\rho(X) = YZ$ 
with $Y,Z \in \V_\varepsilon$ then $X \in \V_\varepsilon$ 
does not belong to $\mathcal{H}$.
\item If $\rho(X) = YZ$ or $\rho(X) = ZY$
with $Y \in \V_\varepsilon$ and $Z \in \V_0 \setminus \V_\varepsilon$ then 
$\tau(X) = Z$.
\item If $\rho(X) = YZ$ with $Y,Z \in \V_0 \setminus \V_\varepsilon$ then 
$\tau(X) = YZ$.
\item If $\rho(X) = YZ$ or  $\rho(X) = ZY$ with $Y \in \V_\varepsilon$ and $X,Z\in \V_1$ then 
$\tau(X) = Z$ and $\tau(X_\varepsilon) = Z_\varepsilon$.
\item If $\rho(X) = YZ$ with $Y \in \V_0 \setminus \V_\varepsilon$ and $X,Z\in \V_1$ then 
$\tau(X) = YZ$ and $\tau(X_\varepsilon) = YZ_\varepsilon$, and similarly
if $Z \in \V_0 \setminus \V_\varepsilon$ and $X,Y\in \V_1$.
\item If $\rho(X) = Y \concv Z$ with $X,Y,Z \in \V_1$ then $\tau(X) = Y \concv Z$
and $\tau(X_\varepsilon) = Y \concv Z_\varepsilon$.
\item If $\rho(X) = Y \concv Z$ with $Y \in \V_1$ and $Z \in \V_{\varepsilon}$ then 
$\tau(X) = Y_\varepsilon$.
\item If $\rho(X) = Y \concv Z$ with $Y \in \V_1$ and $Z \in \V_0 \setminus \V_{\varepsilon}$ then 
$\tau(X) = Y \concv Z$.
\end{itemize}
As in the previous case, variables $X$ such that $\tau(X)$ is a variable, can be eliminated.
Note that the construction does not introduce new occurrences of $\ast$.
All variables from $\V \setminus \V_\varepsilon$ produce the same forest in $\calG$ and $\mathcal{H}$,
which implies $\valX{\calG}^{\mathsf{F}(\Sigma)} = \valX{\mathcal{H}}^{\mathsf{F}(\Sigma)}$.
Finally note that both constructions increase the size and depth of the FSLP only by a constant factor. 
\end{proof}

\begin{cor} \label{thm-balance-FSLP}
Given a finite alphabet $\Sigma$ and an FSLP $\calG$ over the forest algebra $\mathsf{F}(\Sigma)$ 
defining the forest $u = \valX{\calG}^{\mathsf{F}(\Sigma)}$,
one can compute in time $\mathcal{O}(|\calG|)$ an FSLP
$\mathcal{H}$ such that $\valX{\mathcal{H}}^{\mathsf{F}(\Sigma)} = u$, $|\mathcal{H}| \in \mathcal{O}(|\calG|)$ 
and $\depth(\mathcal{H}) \in \mathcal{O}(\log |u|)$.
\end{cor}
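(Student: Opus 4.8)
The plan is to derive Corollary~\ref{thm-balance-FSLP} from the general balancing theorem (Theorem~\ref{cor-balance-dag}) applied to the forest algebra $\mathsf{F}(\Sigma)$. Recall that an FSLP over $\Sigma$ is literally a $\Gamma$-SLP over the signature $\Gamma = \Gamma(\mathsf{F}(\Sigma))$ that evaluates to a ground forest, and that $\mathsf{F}(\Sigma)$ has a finite subsumption base by Lemma~\ref{lem-forest-tame}. Hence Theorem~\ref{cor-balance-dag} essentially does the job; two points remain. Firstly, the depth guarantee of Theorem~\ref{cor-balance-dag} is $\mathcal{O}(\log|t|)$, where $t = \valX{\calG}$ is the derivation tree, and for an arbitrary FSLP $|t|$ may be astronomically larger than $|u|$ (think of a doubling cascade built over $\varepsilon$), so $\calG$ must first be normalised. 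Secondly, since $\Sigma$ is part of the input, the signature $\Gamma(\mathsf{F}(\Sigma))$ is not fixed, so we have to invoke the uniform variant of Theorem~\ref{cor-balance-dag} described in Remark~\ref{rem-variable-algebra}.

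For the normalisation I would first check in linear time whether $u = \valX{\calG}^{\mathsf{F}(\Sigma)} = \varepsilon$, in which case the trivial FSLP is the output. Otherwise I apply Lemma~\ref{lem-eliminate-eps} to obtain in linear time an equivalent FSLP $\calG'$ with $|\calG'| \in \mathcal{O}(|\calG|)$ in which $\varepsilon$ and $\ast$ occur only inside right-hand sides of the form $a(\varepsilon)$ (and $a(\ast)$). The key observation is then that the derivation tree $t' = \valX{\calG'}$ has size $\Theta(|u|)$: its internal nodes all carry one of the binary operations $\conch$ or $\concv$, its leaves are occurrences of the constants $a(\ast)$ or $\varepsilon$, every $a(\ast)$-leaf contributes exactly one node to $u$ and conversely every node of $u$ stems from a unique $a(\ast)$-leaf, while each $\varepsilon$-leaf is the second child of a $\concv$-gadget for some $a(\varepsilon)$; hence $t'$ has between $|u|$ and $2|u|$ leaves, and since $t'$ is a binary tree, $|t'| = \Theta(|u|)$. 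In particular $\log|t'| = \Theta(\log|u|)$.

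For the uniform variant I would verify the hypotheses (i)--(iv) of Remark~\ref{rem-variable-algebra} for the family $\{\mathsf{F}(\Sigma)\}_{\Sigma}$: (i) every operation of $\mathsf{F}(\Sigma)$ has rank at most $2$, independently of $\Sigma$; (ii) the subsumption base $C$ produced in the proof of Lemma~\ref{lem-forest-tame} consists of the constantly many contexts (a)--(d), each of constant size independent of $\Sigma$; and (iii)--(iv) the finite case distinctions carried out in that proof (assigning to an atomic context a subsuming element of $C$, and to $s[t]$ with $s$ atomic and $t \in C$ a subsuming element of $C$ together with the witnessing substitution) are uniform in $\Sigma$, hence run in constant time once symbols of $\Gamma(\mathsf{F}(\Sigma))$ fit into a machine word. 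Then the uniform version of Theorem~\ref{cor-balance-dag} applied to $\calG'$ yields in time $\mathcal{O}(|\calG'|) = \mathcal{O}(|\calG|)$ an FSLP $\mathcal{H}$ with $\valX{\mathcal{H}}^{\mathsf{F}(\Sigma)} = \valX{\calG'}^{\mathsf{F}(\Sigma)} = u$, $|\mathcal{H}| \in \mathcal{O}(|\calG'|) \subseteq \mathcal{O}(|\calG|)$ and $\depth(\mathcal{H}) \in \mathcal{O}(\log|t'|) \subseteq \mathcal{O}(\log|u|)$, as required.

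The part I expect to be most delicate is not any single estimate but checking that everything in the proof of Lemma~\ref{lem-forest-tame} is genuinely independent of $\Sigma$ and computationally trivial, so that the uniform machinery of Remark~\ref{rem-variable-algebra} is legitimately applicable; the linear-size bound $|t'| = \Theta(|u|)$ is then the only place where the output guarantees of Lemma~\ref{lem-eliminate-eps} and the assumption $u \neq \varepsilon$ are actually used.
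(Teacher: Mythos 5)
Your proposal is correct and follows essentially the same route as the paper's own proof: handle $u=\varepsilon$ trivially, normalise with Lemma~\ref{lem-eliminate-eps} so that the derivation tree has size $\mathcal{O}(|u|)$, and then apply Theorem~\ref{cor-balance-dag} via the finite subsumption base of Lemma~\ref{lem-forest-tame}, invoking Remark~\ref{rem-variable-algebra} because $\Sigma$ is part of the input. Your extra details (the leaf-counting argument for $|t'|=\Theta(|u|)$ and the explicit check of conditions (i)--(iv) of Remark~\ref{rem-variable-algebra}, including that the subsumption base is independent of $\Sigma$) simply spell out what the paper leaves implicit.
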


\begin{proof}
The case $u = \varepsilon$ is trivial. Let us now assume that $u \neq \varepsilon$.
We first apply Lemma~\ref{lem-eliminate-eps} and construct from $\calG$ in linear time an 
equivalent FSLP $\calG'$ which does not contain occurrences of the constants $\ast$ and $\varepsilon$, except
for right-hand sides of the form $a(\varepsilon)$. This ensures that the derivation tree $t = 
\valX{\calG'}$ has size $\mathcal{O}(|u|)$. The size and depth of $\calG'$ are linearly bounded in the size and depth,
respectively, of $\calG$. By Lemma~\ref{lem-forest-tame} we can apply Theorem~\ref{cor-balance-dag} in
order to get the FSLP $\mathcal{H}$ with the desired properties for the situation where the alphabet $\Sigma$
is fixed. For the situation where $\Sigma$ is part of the input one has to use Remark~\ref{rem-variable-algebra}.
The arguments are analogous to the proof of Theorem~\ref{thm-balance-SSLP}. Note in particular that the 
subsumption base from the proof of Lemma~\ref{lem-forest-tame} does not depend on the alphabet $\Sigma$ of the forest
algebra $\mathsf{F}(\Sigma)$.
\end{proof}

\begin{rem}
Using Remark~\ref{remark-func-extension} one can 
show the following variant of Corollary~\ref{thm-balance-FSLP}:
Take a fixed signature $\Gamma$. From a given $\Gamma$-TSLP $\calG$  
defining the tree $t = \valX{\calG}^{\hat\T(\Gamma)}$,
one can compute in time $\mathcal{O}(|\calG|)$ a $\Gamma$-TSLP
$\mathcal{H}$ such that $\valX{\mathcal{H}}^{\hat\T(\Gamma)} = t$, $|\mathcal{H}| \in \mathcal{O}(|\calG|)$ 
and $\depth(\mathcal{H}) \in \mathcal{O}(\log |t|)$. In other words, $\Gamma$-TSLPs can 
be balanced with a linear size increase.
Note that this is a much stronger statement than Theorem~\ref{thm-balance-TSLP}, which
states that a $\Gamma$-SLP can be balanced into an equivalent $\Gamma$-TSLP with
a linear size increase. On the other hand, the above balancing result for $\Gamma$-TSLPs finally uses 
the weaker Theorem~\ref{thm-balance-TSLP} in its proof. 
We have to assume a fixed signature $\Gamma$ in the above argument since the size of the contexts in a 
finite subsumption for $\hat\T(\Gamma)$ depends on the maximal rank of the symbols in $\Gamma$.

Alternatively, the balancing result for $\Gamma$-TSLPs can be deduced from the corresponding
balancing result for FSLPs (Corollary~\ref{thm-balance-FSLP}): A given $\Gamma$-TSLP $\calG$  
can be directly translated into an FSLP $\calG_1$ for the tree $\valX{\calG}^{\hat\T(\Gamma)}$. The size of $\calG_1$ is 
$\mathcal{O}(|\calG|)$. Using Corollary~\ref{thm-balance-FSLP} one can compute from $\calG_1$
a balanced FSLP $\calG_2$ of size $\mathcal{O}(|\calG_1|)$. Finally, the FSLP $\calG_2$ can be easily transformed
back into a $\Gamma$-TSLP of size $\mathcal{O}(r \cdot |\calG_2|)$, where $r$ is the maximal rank of a symbol in
$\Gamma$. For this one has to eliminate horizontal concatenations in the FSLP. Since we assumed $\Gamma$
to be a fixed signature, $r$ is a constant.
\end{rem}

\subsection{Cluster algebras and top dags} \label{sec-cluster}

FSLPs are very similar to {\em top dags} that were introduced in \cite{BilleGLW15} 
and further studied in \cite{BilleFG17,DudekG18,Hubschle-Schneider15}.
In fact, top dags can be defined in the same way as FSLPs, one only has to slightly change
the two concatenation operations $\conch$ and $\concv$, which yields the so called cluster algebra defined
below.
 
 Let us fix an alphabet $\Sigma$ of node labels and define for $a \in \Sigma$ the set
 $\mathcal{K}_a(\Sigma) = \{ a(u) \mid u \in \F_0(\Sigma) \setminus \{\varepsilon\}\}$.
 Note that $\mathcal{K}_a(\Sigma)$ consists of unranked $\Sigma$-labelled trees of size at least two,
 where the root is labeled with $a$.
 Elements of $\mathcal{K}_a(\Sigma)$ (for any $a$) are also called {\em clusters of rank $0$}.
 For $a,b \in \Sigma$ let $\mathcal{K}_{ab}(\Sigma)$ be the set of all
 trees $t \in\mathcal{K}_{a}(\Sigma)$ together with a distinguished $b$-labelled leaf of $t$,
 which is called the {\em bottom boundary node} of $t$.
 Elements of $\mathcal{K}_{ab}(\Sigma)$ (for any $a,b$) are called
 {\em clusters of rank one}. The root node of a cluster $t$ (of rank zero or one) is called the 
 {\em top boundary node} of $t$. 
When writing a cluster of rank one,
we underline the bottom boundary node. For instance $a(b c(\underline{b} a))$
is an element of $\mathcal{K}_{ab}(\Sigma)$. An {\em atomic cluster} is of the form $a(b)$ or $a(\underline{b})$ for $a,b \in \Sigma$.

We define the {\em cluster algebra} $\mathsf{K}(\Sigma)$ as an algebra over a 
$(\Sigma \cup \Sigma^2)$-sorted signature.
The universe of sort $a \in \Sigma$ is 
$\mathcal{K}_a(\Sigma)$ and the universe of sort $ab \in \Sigma^2$ is
$\mathcal{K}_{ab}(\Sigma)$.
The operations of $\mathsf{K}(\Sigma)$ are the following:
\begin{itemize}
\item There are $|\Sigma| + 2|\Sigma|^2$ many horizontal merge operators; we denote all of them with the same symbol $\mergh$.
Their domains and ranges are specified by: 
$\mergh \colon \mathcal{K}_{a}(\Sigma) \times \mathcal{K}_{a}(\Sigma) \to \mathcal{K}_{a}(\Sigma)$,
$\mergh \colon \mathcal{K}_{a}(\Sigma) \times \mathcal{K}_{ab}(\Sigma) \to \mathcal{K}_{ab}(\Sigma)$, and
$\mergh \colon \mathcal{K}_{ab}(\Sigma) \times \mathcal{K}_{a}(\Sigma) \to \mathcal{K}_{ab}(\Sigma)$, where $a,b \in \Sigma$.
All of these merge operators are defined by $a(u) \mergh a(v) = a(uv)$, where sorts of the clusters $u,v$ must match 
the input sorts for one of the merge operators.
\item There are $|\Sigma|^2 + |\Sigma|^3$ many vertical merge operators; we denote all of them with the same symbol $\mergv$.
Their domains and ranges are specified by: $\mergv \colon \mathcal{K}_{ab}(\Sigma) \times \mathcal{K}_{b}(\Sigma) \to 
\mathcal{K}_{a}(\Sigma)$ and $\mergv \colon \mathcal{K}_{ab}(\Sigma) \times \mathcal{K}_{bc}(\Sigma) \to 
\mathcal{K}_{ac}(\Sigma)$ for $a,b,c\in\Sigma$. For clusters $s\in \mathcal{K}_{ab}(\Sigma)$ and $t \in \mathcal{K}_{b}(\Sigma)
\cup \mathcal{K}_{bc}(\Sigma)$ we obtain $s \mergv t$ by replacing in $s$ the bottom boundary node 
by $t$. For instance,
\begin{displaymath}
  a(b c(\underline{b} a)) \mergv b(a c) =
  a(b c(b(ac)a)) .
\end{displaymath}
\item The atomic clusters $a(b)$ and $a(\underline{b})$ are constants of the  cluster algebra.
\end{itemize}
In the following, 
we just write $\mathcal{K}_a$ and $ \mathcal{K}_{ab}$ for $\mathcal{K}_a(\Sigma)$ and $ \mathcal{K}_{ab}(\Sigma)$, respectively.
A {\em top dag} over $\Sigma$ is an SLP $\calG$ over the algebra  $\mathsf{K}(\Sigma)$
such that $\valX{\calG}^{\mathsf{K}(\Sigma)}$ is a cluster of rank zero.\footnote{Note that the definition of a top dag
in~\cite{BilleGLW15} refers to the outcome of a particular top dag construction. In other words: for every
 tree $t$ a very specific SLP over the cluster algebra is constructed and 
 this SLP is called the top dag of $t$. Here, as in \cite{GLMRS18}, we call any SLP over the cluster
 algebra a top dag.}
In our terminology, cluster straight-line program would be a more appropriate name,
but we prefer to use the original term ``top dag''.

\begin{ex}
Consider the top dag $\calG = (\{S,X_0, \ldots, X_n,Y_0, \ldots, Y_n\},\rho,S)$ with
$\rho(X_0) = b(a)$, $\rho(X_i) = X_{i-1} \mergh X_{i-1}$ for $1 \leq i \leq n$,
$\rho(Y_0) = X_n \mergh b(\underline{b}) \mergh X_n$, $\rho(Y_i) = Y_{i-1} \mergv Y_{i-1}$
for $1 \leq i \leq n$, and $\rho(S) = Y_n \mergv b(c)$.
We have
\begin{displaymath}
  \valX{\calG}^{\mathsf{K}(\Sigma)} = b(a^{2^n} b(a^{2^n} \cdots b(a^{2^n} b(c)a^{2^n})\cdots a^{2^n}) a^{2^n}),
\end{displaymath}
where $b$ occurs $2^n+1$ many times.
\end{ex}
In  \cite{GLMRS18} it was shown that from a top dag $\calG$ one can compute in linear time an
equivalent FSLP of size $\mathcal{O}(|\calG|)$. Vice versa, from an FSLP $\mathcal{H}$ for a tree $t \in \calC_a$ (for some $a\in\Sigma$)
one can compute in time $\mathcal{O}(|\Sigma| \cdot |\mathcal{H}|)$ an equivalent top dag of 
size $\mathcal{O}(|\Sigma| \cdot |\mathcal{H}|)$. The additional factor $|\Sigma|$ in the transformation from FSLPs to top dags
is unavoidable; see \cite{GLMRS18} for an example.

\begin{figure}
\pgfkeys{/pgf/inner sep=.07em}  
  \centering
    \begin{tikzpicture}  
      \node[circle,fill,minimum size=1mm] (r1) {} ;
      \draw (r1) -- ($(r1)+(-.5,-.8)$) -- node[circle,minimum size=1mm, fill, pos=.5, label={[yshift=0.2cm]$\tau_1$}] (a){} ($(r1)+(.5,-.8)$) -- (r1) ; 
      \draw (a) -- ($(a)+(-1.8,-.8)$) --  node[pos=.8, label={[yshift=0cm]$\sigma_1$}] {} ($(a)+(-.8,-.8)$) -- (a) ; 
      \draw (a) -- ($(a)+(.8,-.8)$) --  node[pos=.2, label={[yshift=0cm]$\sigma_2$}] {} ($(a)+(1.8,-.8)$) -- (a) ; 
      \draw[fill=red!30] (a) -- ($(a)+(-.5,-.8)$) --  node[pos=.5, label={[yshift=0.2cm]$x$}] {} ($(a)+(.5,-.8)$) -- (a) ; 
      
      \node[circle,fill,minimum size=1mm,right = 4cm of r1] (r2) {} ;
      \draw (r2) -- ($(r2)+(-.5,-.8)$) -- node[circle, fill,minimum size=1mm, pos=.5, label={[yshift=0.2cm]$\tau_1$}] (a){} ($(r2)+(.5,-.8)$) -- (r2) ; 
      \draw (a) -- ($(a)+(-1.8,-.8)$) --  node[pos=.8, label={[yshift=0cm]$\sigma_1$}] {} ($(a)+(-.8,-.8)$) -- (a) ; 
      \draw (a) -- ($(a)+(.8,-.8)$) --  node[pos=.2, label={[yshift=0cm]$\sigma_2$}] {} ($(a)+(1.8,-.8)$) -- (a) ; 
      \draw[fill=red!30] (a) -- ($(a)+(-.5,-.8)$) --  node[circle, fill=black,minimum size=1mm, pos=.5, label={[yshift=0.2cm]$y$}]  (b) {} ($(a)+(.5,-.8)$) -- (a) ; 
      \draw (b) -- ($(b)+(-.5,-.8)$) -- node[pos=.5, label={[yshift=0.2cm]$\sigma_3$}] {} ($(b)+(.5,-.8)$) -- (b) ; 
      
      \node[circle,fill,minimum size=1mm,right = 4cm of r2] (r3) {} ;
      \draw (r3) -- ($(r3)+(-.5,-.8)$) -- node[circle, fill,minimum size=1mm, pos=.5, label={[yshift=0.2cm]$\tau_1$}] (a){} ($(r3)+(.5,-.8)$) -- (r3) ; 
      \draw (a) -- ($(a)+(-1.8,-.8)$) --  node[pos=.8, label={[yshift=0cm]$\sigma_1$}] {} ($(a)+(-.8,-.8)$) -- (a) ; 
      \draw (a) -- ($(a)+(.8,-.8)$) --  node[pos=.2, label={[yshift=0cm]$\sigma_2$}] {} ($(a)+(1.8,-.8)$) -- (a) ; 
      \draw[fill=red!30] (a) -- ($(a)+(-.5,-.8)$) --  node[circle, fill=black,minimum size=1mm, pos=.5, label={[yshift=0.2cm]$y$}]  (b) {} ($(a)+(.5,-.8)$) -- (a) ; 
      \draw (b) -- ($(b)+(-.5,-.8)$) -- node[circle, fill,minimum size=1mm, pos=.5, label={[yshift=0.2cm]$\tau_2$}] {} ($(b)+(.5,-.8)$) -- (b) ; 
      
      \node[circle,fill,minimum size=1mm,below right = 3cm and 1.5cm of r1] (r4) {} ;
       \draw (r4) -- ($(r4)+(-.5,-.8)$) -- node[circle, fill,minimum size=1mm, pos=.5, label={[yshift=0.2cm]$\tau_4$}] (x){} ($(r4)+(.5,-.8)$) -- (r4) ; 
      \draw (x) -- ($(x)+(-.5,-.8)$) -- node[circle, fill,minimum size=1mm, pos=.5, label={[yshift=0.2cm]$\tau_1$}] (a){} ($(x)+(.5,-.8)$) -- (x) ; 
       \draw (x) -- ($(x)+(-1.8,-.8)$) --  node[circle, fill,minimum size=1mm, pos=.5] {} node[pos=.8, label={[yshift=0cm]$\tau_3$}] {} ($(x)+(-.8,-.8)$) -- (x) ; 
      \draw (a) -- ($(a)+(-1.8,-.8)$) --  node[pos=.8, label={[yshift=0cm]$\sigma_1$}] {} ($(a)+(-.8,-.8)$) -- (a) ; 
      \draw (a) -- ($(a)+(.8,-.8)$) --  node[pos=.2, label={[yshift=0cm]$\sigma_2$}] {} ($(a)+(1.8,-.8)$) -- (a) ; 
      \draw[fill=red!30] (a) -- ($(a)+(-.5,-.8)$) --  node[pos=.5, label={[yshift=0.2cm]$x$}]  (b) {} ($(a)+(.5,-.8)$) -- (a) ;

       \node[circle,fill,minimum size=1mm, right = 5cm of r4] (r5) {} ;
       \draw (r5) -- ($(r5)+(-.5,-.8)$) -- node[circle, fill,minimum size=1mm, pos=.5, label={[yshift=0.2cm]$\tau_4$}] (x){} ($(r5)+(.5,-.8)$) -- (r5) ; 
      \draw (x) -- ($(x)+(-.5,-.8)$) -- node[circle, fill,minimum size=1mm, pos=.5, label={[yshift=0.2cm]$\tau_1$}] (a){} ($(x)+(.5,-.8)$) -- (x) ; 
       \draw (x) -- ($(x)+(-1.8,-.8)$) --  node[circle, fill,minimum size=1mm, pos=.5] {} node[pos=.8, label={[yshift=0cm]$\tau_3$}] {} ($(x)+(-.8,-.8)$) -- (x) ; 
      \draw (a) -- ($(a)+(-1.8,-.8)$) --  node[pos=.8, label={[yshift=0cm]$\sigma_1$}] {} ($(a)+(-.8,-.8)$) -- (a) ; 
      \draw (a) -- ($(a)+(.8,-.8)$) --  node[pos=.2, label={[yshift=0cm]$\sigma_2$}] {} ($(a)+(1.8,-.8)$) -- (a) ; 
      \draw[fill=red!30] (a) -- ($(a)+(-.5,-.8)$) --  node[circle, fill=black,minimum size=1mm, pos=.5, label={[yshift=0.2cm]$y$}]  (b) {} ($(a)+(.5,-.8)$) -- (a) ; 
      \draw (b) -- ($(b)+(-.5,-.8)$) -- node[pos=.5, label={[yshift=0.2cm]$\sigma_3$}] {} ($(b)+(.5,-.8)$) -- (b) ; 
      
\end{tikzpicture}
\caption{\label{fig-cluster}The shapes of the contexts in $C$ (proof of Lemma~\ref{lem-cluster-tame}). Bullet nodes represent boundary nodes. Symmetric 
shapes where $\tau_3$ is to the right of $\tau_1$ are omitted.}
\end{figure}
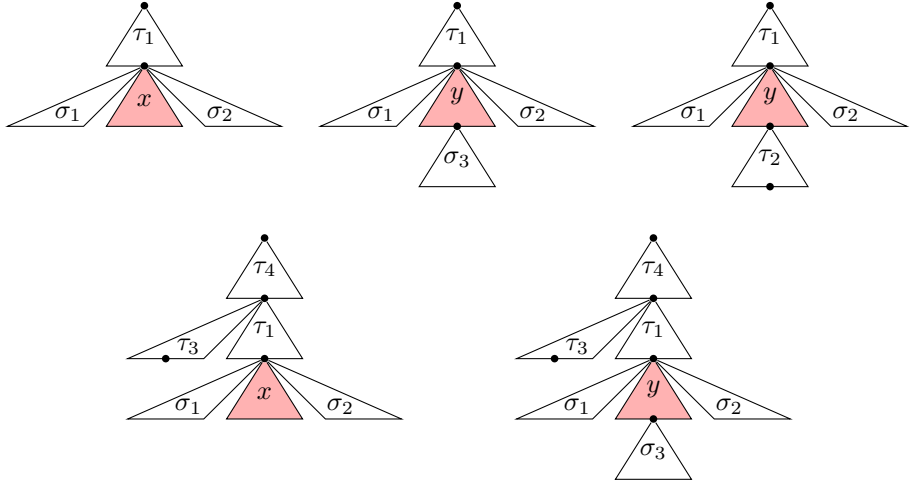

\begin{lem} \label{lem-cluster-tame}
Every cluster algebra $\mathsf{K}(\Sigma)$ has a finite subsumption base.
\end{lem}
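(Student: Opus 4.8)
The plan is to mirror the proof of Lemma~\ref{lem-forest-tame} for the forest algebra. By Lemma~\ref{lem:tame-atomic} it suffices to exhibit a finite set $C$ of contexts over $\mathsf{K}(\Sigma)$ such that (i) every atomic context is subsumed in $\mathsf{K}(\Sigma)$ by some $t \in C$, and (ii) whenever $s$ is an atomic context, $t \in C$, $s[t]$ is defined and $s,t$ share no auxiliary variables, then $s[t]$ is subsumed by some $t' \in C$. Throughout, subsumption and equivalence are meant with respect to $\mathsf{K}(\Sigma)$; I write $x,y$ for main variables of a rank-zero sort $\mathcal{K}_a$ and a rank-one sort $\mathcal{K}_{ab}$, respectively, and $\sigma,\sigma_i$ (resp.\ $\tau,\tau_i$) for auxiliary variables ranging over rank-zero (resp.\ rank-one) clusters, the node labels occurring only as parameters.

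First I would fix $C$ to be the finite family of contexts indicated in Figure~\ref{fig-cluster}, together with the obvious degenerations (a rank-one ``spine'' on top into whose bottom boundary the main variable is plugged; the main variable possibly flanked on both sides by rank-zero clusters glued in by horizontal merges; at one point of the spine a rank-one cluster possibly branching off; and the main variable itself possibly carrying below it a rank-zero cluster, or a rank-one cluster when the main variable has rank one; plus the analogous shapes with an empty or short spine, which in particular include the purely root-merging contexts such as $\sigma_1 \mergh x \mergh \sigma_2$). The structural reason why boundedly many shapes are enough is that both merge operations are associative and satisfy the ``push'' identities $(u \mergh v) \mergv w = u \mergh (v \mergv w)$ and $(v \mergh u) \mergv w = (v \mergv w) \mergh u$ (whenever the sorts match): a horizontal merge performed at a node can always be slid past the vertical merge sitting directly above that node, so the main variable can be normalised into a neighbourhood of bounded size. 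Condition (i) is then a finite check: up to renaming the atomic contexts are $x \mergh \sigma$, $\sigma \mergh x$, $x \mergh \tau$, $\tau \mergh x$, $y \mergh \sigma$, $\sigma \mergh y$, $y \mergv \sigma$, $y \mergv \tau$, $\tau \mergv x$, $\tau \mergv y$; for the vertical ones subsumption is immediate from the spine shapes, and the horizontal merges at the root are handled by direct inspection against the root-merging shapes of $C$ (note that here, unlike in the forest case, one cannot replace $\sigma \mergh x$ by something of the form $(\ \cdot\ ) \mergv x$, since $\mergv$ preserves the root of its argument while $\mergh$ merges it away, so these shapes must be kept in $C$ explicitly).

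The bulk of the work, and the only genuine obstacle, is condition (ii). By Lemma~\ref{lem:sub-com} and transitivity of $\le^{\mathsf{K}(\Sigma)}$ it is enough to go through the finitely many sort-correct pairs $(s,t)$ with $s$ atomic and $t \in C$; for each one normalises $s[t]$ using associativity and the two push identities and reads off the subsuming shape together with the witnessing substitution. The delicate part is purely combinatorial: one must make sure that prepending a vertical merge to a spine either just lengthens the spine at the top or, after pushing a horizontal merge through, re-attaches siblings, but never pushes the main variable deeper than the two spine levels the shapes of $C$ provide (this is precisely why $C$ contains both a ``one-level'' family, the shapes (a)--(c) of Figure~\ref{fig-cluster}, and a ``two-level'' family, the shapes involving the extra clusters $\tau_3,\tau_4$), and that the ranks of the boundary nodes are tracked correctly throughout. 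Carrying out this case distinction along the list of shapes in Figure~\ref{fig-cluster}, exactly as in the proof of Lemma~\ref{lem-forest-tame}, establishes condition (ii), and Lemma~\ref{lem:tame-atomic} then yields that $C$ is a finite subsumption base of $\mathsf{K}(\Sigma)$. Since $C$ and each of its members have size bounded independently of $\Sigma$, the same analysis also supplies the constant-time computations required in conditions (iii) and (iv) of Remark~\ref{rem-variable-algebra}, so that the balancing theorem applies to top dags with $\Sigma$ part of the input.
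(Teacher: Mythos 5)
Your proposal takes essentially the same route as the paper's proof: the same finite set $C$ (the shapes of Figure~\ref{fig-cluster} with auxiliary variables optionally omitted, needed because the merge operations have no neutral elements), checked against the two conditions of Lemma~\ref{lem:tame-atomic}, with the verification of $s'[s]$ for $s'$ atomic and $s \in C$ carried out by normalising via associativity and the ``push'' identities that the paper uses implicitly in its worked examples from Figure~\ref{fig-comp1}. One minor caution on your closing remark: for the uniform setting of Remark~\ref{rem-variable-algebra} each context in $C$ indeed has constant size, but the sorted set $C$ itself does depend on $\Sigma$ (sorts must be instantiated), as the paper points out in the proof of Corollary~\ref{coro-top-dag}; this does not affect the lemma, which is stated for a fixed $\Sigma$.
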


\begin{proof}
The proof is similar to the proof of Lemma~\ref{lem-forest-tame}. 
Let the set $C$ contain the following contexts, where in each context, 
each of the auxiliary variables $\sigma_1, \sigma_2, \sigma_3$, $\tau_1, \tau_2, \tau_3, \tau_4$
can be also missing (this is necessary since in the cluster algebra,
the merge operations have no neutral elements).
The main variable $x$ and the auxiliary variables $\sigma_1, \sigma_2, \sigma_3$ must have sorts from $\Sigma$ (rank zero),
whereas the main variable $y$ and the auxiliary variables $\tau_1, \tau_2, \tau_3, \tau_4$ must have sorts from 
$\Sigma\Sigma$ (rank one). The concrete sorts must be chosen 
such that all horizontal and vertical merge operations are defined.
\begin{enumerate}[(a)]
\item $\tau_1 \concv (\sigma_1 \conch x \conch \sigma_2)$ 
\item $\tau_1 \concv (\sigma_1 \conch y \conch \sigma_2) \concv \sigma_3$ 
\item $\tau_1 \concv (\sigma_1 \conch y \conch \sigma_2) \concv \tau_2$ 
\item $\tau_4 \concv (\tau_3 \conch (\tau_1 \concv (\sigma_1 \conch x \conch \sigma_2)))$ 
\item $\tau_4 \concv ((\tau_1 \concv (\sigma_1 \conch x \conch \sigma_2)) \conch \tau_3)$ 
\item $\tau_4 \concv (\tau_3 \conch (\tau_1 \concv (\sigma_1 \conch y \conch \sigma_2) \concv \sigma_3))$ 
\item $\tau_4 \concv ((\tau_1 \concv (\sigma_1 \conch y \conch \sigma_2) \concv \sigma_3) \conch \tau_3)$ 
\end{enumerate}
Note that these forms are very similar to the forms (a)--(g) for forest algebras from the proof of 
Lemma~\ref{lem-forest-tame}. Only the variables $\sigma_1$ and $\sigma_2$ that are horizontally merged with $x$ (resp., $y$)
are new. 

Figure~\ref{fig-cluster} shows the shapes of the above contexts. Let us explain the intuition behind these shapes.
Take a cluster $s$ (of rank zero or one) and cut out from $s$ a subcluster $x$ of rank zero or a subcluster $y$ of rank one. 
We do not give
a formal definition of subclusters (see \cite{BilleGLW15}), but roughly speaking this means that $x$ (resp., $y$) is a 
cluster that occurs somewhere in $s$. In 
Figure~\ref{fig-cluster}, these subclusters are the red triangles. The part of $s$ that does not belong to the subcluster $x$ (resp., $y$)
can be partitioned into finitely many subclusters, and these are the white triangles in 
Figure~\ref{fig-cluster}.

\begin{figure}
\pgfkeys{/pgf/inner sep=.07em}  
  \centering
    \begin{tikzpicture}  
      \node[circle,fill,minimum size=1mm] (a) {} ;
      \draw (a) -- ($(a)+(-1.8,-.8)$) --  node[circle, fill,minimum size=1mm, pos=.5] {} node[pos=.8, label={[yshift=0cm]$\tau$}] {} ($(a)+(-.8,-.8)$) -- (a) ; 
      
       \draw (a) -- ($(a)+(-.5,-.8)$) -- node[circle, fill,minimum size=1mm, pos=.5, label={[yshift=0.2cm]$\tau_1$}] (b){} ($(a)+(.5,-.8)$) -- (a) ; 
      \draw (b) -- ($(b)+(-1.8,-.8)$) --  node[pos=.8, label={[yshift=0cm]$\sigma_1$}] {} ($(b)+(-.8,-.8)$) -- (b) ; 
      \draw (b) -- ($(b)+(.8,-.8)$) --  node[pos=.2, label={[yshift=0cm]$\sigma_2$}] {} ($(b)+(1.8,-.8)$) -- (b) ; 
      \draw[fill=red!30] (b) -- ($(b)+(-.5,-.8)$) --  node[circle, fill=black,minimum size=1mm, pos=.5, label={[yshift=0.2cm]$y$}]  (c) {} ($(b)+(.5,-.8)$) -- (b) ; 
      \draw (c) -- ($(c)+(-.5,-.8)$) -- node[pos=.5, label={[yshift=0.2cm]$\sigma_3$}] (d) {} ($(c)+(.5,-.8)$) -- (c) ;

      \node[circle,fill,minimum size=1mm, right = 6cm of a] (r5) {} ;
   
       \draw (r5) -- ($(r5)+(-.5,-.8)$) -- node[circle, fill,minimum size=1mm, pos=.5, label={[yshift=0.2cm]$\tau_4$}] (x){} ($(r5)+(.5,-.8)$) -- (r5) ; 
          
      \draw (x) -- ($(x)+(-.5,-.8)$) -- node[circle, fill,minimum size=1mm, pos=.5, label={[yshift=0.2cm]$\tau_1$}] (a){} ($(x)+(.5,-.8)$) -- (x) ; 
       \draw (x) -- ($(x)+(-1.8,-.8)$) --  node[circle, fill,minimum size=1mm, pos=.5] (d) {} node[pos=.8, label={[yshift=0cm]$\tau_3$}] {} ($(x)+(-.8,-.8)$) -- (x) ;     
      \draw (d) -- ($(d)+(-1.8,-.8)$) --  node[pos=.8, label={[yshift=0cm]$\sigma$}] {} ($(d)+(-.8,-.8)$) -- (d) ;
      \draw (a) -- ($(a)+(-1.8,-.8)$) --  node[pos=.8, label={[yshift=0cm]$\sigma_1$}] {} ($(a)+(-.8,-.8)$) -- (a) ; 
      \draw (a) -- ($(a)+(.8,-.8)$) --  node[pos=.2, label={[yshift=0cm]$\sigma_2$}] {} ($(a)+(1.8,-.8)$) -- (a) ; 
      \draw[fill=red!30] (a) -- ($(a)+(-.5,-.8)$) --  node[circle, fill=black,minimum size=1mm, pos=.5, label={[yshift=0.2cm]$y$}]  (b) {} ($(a)+(.5,-.8)$) -- (a) ; 
      \draw (b) -- ($(b)+(-.5,-.8)$) -- node[pos=.5, label={[yshift=0.2cm]$\sigma_3$}] {} ($(b)+(.5,-.8)$) -- (b) ;
\end{tikzpicture}
\caption{\label{fig-comp1}Two example cases from the proof of Lemma~\ref{lem-cluster-tame}.}
\end{figure}
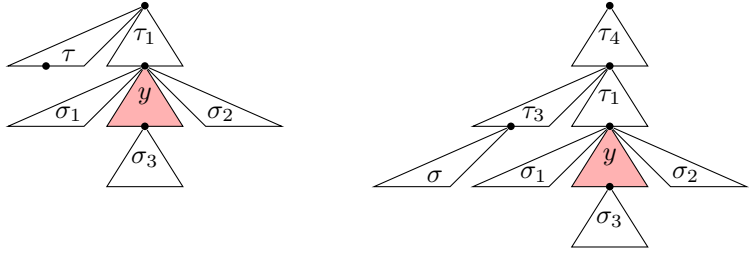

Using Lemma~\ref{lem:tame-atomic} we can show that $C$ is a finite subsumption base for 
the cluster algebra $\mathsf{K}(\Sigma)$.
The atomic clusters are $\tau \conch x$, $\sigma \conch x$, $\sigma \conch y$, 
$x \conch \tau$, $x \conch \sigma$, $y \conch \sigma$, $\tau \concv x$, $\tau \concv y$,
$x \concv \tau$, $x \concv \sigma$ (where $x$ and  $\sigma$ have sorts from $\Sigma$ and 
$y$ and $\tau$ have sorts from $\Sigma\Sigma$). Each of these atomic contexts belongs to $C$ 
 up to renaming of auxiliary variables. For this it is important that every context from the above list (a)--(g), where some 
 of the auxiliary variables are omitted, belongs to $C$ as well. 

Let us now consider a context $s'[s]$, where $s \in C$ and $s'$ is atomic.
We have to show that $s'[s]$ is subsumed in $\mathsf{K}(\Sigma)$ by a context
from $C$. The case distinction is very similar to the proof of 
Lemma~\ref{lem-forest-tame}. Two examples are shown in Figure~\ref{fig-comp1}.
The left figure shows the case 
$s = \tau_1 \concv (\sigma_1 \conch y \conch \sigma_2) \concv \sigma_3$ and $s' = \tau \mergh x$. In this case
$s'[s] = \tau \mergh (\tau_1 \concv (\sigma_1 \conch y \conch \sigma_2) \concv \sigma_3)$
is subsumed in $\mathsf{K}(\Sigma)$ by
$\tau_3 \conch (\tau_1 \concv (\sigma_1 \conch y \conch \sigma_2) \concv \sigma_3)$
(the latter is obtained from the context in (f) by removing $\tau_4$).

Figure~\ref{fig-comp1} on the right shows the case 
$s = \tau_4 \mergv (\tau_3 \mergh (\tau_1 \mergv (\sigma_1 \mergh y \mergh \sigma_2) \mergv \sigma_3))$ and $s' = y \mergv \sigma$.
We have
$s'[s] = \tau_4 \mergv ((\tau_3 \mergv \sigma) \conch (\tau_1 \mergv (\sigma_1 \mergh y \mergh \sigma_2) \mergv \sigma_3))$,
which is equivalent in $\mathsf{K}(\Sigma)$ to 
$(\tau_4 \mergh ( (\tau_3 \mergv \sigma) \mergh \tau_1)) \mergv  (\sigma_1 \mergh y \mergh \sigma_2) \mergv \sigma_3$.
The latter context is subsumed in $\mathsf{K}(\Sigma)$ by $\tau_1 \concv (\sigma_1 \conch y \conch \sigma_2) \concv \sigma_3 \in C$.
\end{proof}
We can now show the main result for top dags:

\begin{cor} \label{coro-top-dag}
Given a finite alphabet $\Sigma$ and a top dag $\calG$
over the cluster algebra $\mathsf{K}(\Sigma)$ producing the tree $t = \valX{\calG}^{\mathsf{K}(\Sigma)}$,
one can compute in time $\mathcal{O}(|\calG|)$ a top dag
$\mathcal{H}$ for $t$ of size $\mathcal{O}(|\calG|)$ and depth $\mathcal{O}(\log |t|)$.
\end{cor}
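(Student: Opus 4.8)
The plan is to obtain Corollary~\ref{coro-top-dag} as a direct instance of the general balancing theorem (Theorem~\ref{cor-balance-dag}) applied to the cluster algebra $\mathsf{K}(\Sigma)$, whose finite subsumption base is furnished by Lemma~\ref{lem-cluster-tame}. The only wrinkle is that here $\Sigma$ is part of the input, so Theorem~\ref{cor-balance-dag} in its ``fixed algebra'' form does not apply verbatim; instead I would invoke the uniform extension of that theorem sketched in Remark~\ref{rem-variable-algebra}, for the family $\{\mathsf{K}(\Sigma) \mid \Sigma \text{ a finite alphabet}\}$ together with the $(\Sigma \cup \Sigma^2)$-sorted signature $\Gamma_\Sigma$ of $\mathsf{K}(\Sigma)$. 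A top dag over $\mathsf{K}(\Sigma)$ is by definition a $\Gamma_\Sigma$-SLP evaluating in $\mathsf{K}(\Sigma)$ to a cluster of rank zero, so the output of the balancing procedure is automatically again a top dag.

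The first concrete step is to verify the four hypotheses (i)--(iv) of Remark~\ref{rem-variable-algebra}. Hypothesis~(i) is immediate: all merge operators are binary and the atomic clusters are constants, so every symbol of $\bigcup_\Sigma \Gamma_\Sigma$ has rank at most $2$. For~(ii), note that although the subsumption base $C(\mathsf{K}(\Sigma))$ constructed in the proof of Lemma~\ref{lem-cluster-tame} grows with $|\Sigma|$ -- the seven shapes (a)--(g) must be instantiated with concrete sorts from $\Sigma$, $\Sigma^2$, $\Sigma^3$, and auxiliary variables may be dropped -- every individual member of $C(\mathsf{K}(\Sigma))$ is obtained from one of these finitely many fixed shapes, hence contains at most an absolute constant number of operations; assuming (as the remark does) that each symbol of $\Gamma_\Sigma$ fits into a machine word, every such context fits into $\mathcal{O}(1)$ words. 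Hypotheses~(iii) and~(iv) are then discharged by brute force: given an atomic context $s$ over $\Gamma_\Sigma$, respectively a composition $s[t]$ with $s$ atomic and $t \in C(\mathsf{K}(\Sigma))$, both the member of $C(\mathsf{K}(\Sigma))$ to be returned and the accompanying substitution have $\mathcal{O}(1)$ size and can be read off from the explicit case analysis in the proof of Lemma~\ref{lem-cluster-tame}; all equivalences in $\mathsf{K}(\Sigma)$ used there (such as $\sigma \conch x \equiv (\sigma \conch \ast)\concv x$, or the regroupings of horizontal and vertical merges) merely rearrange a bounded number of symbols and respect the sort constraints, so the whole step is a finite decision on $\mathcal{O}(1)$-word data.

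With the hypotheses in place, the uniform version of Theorem~\ref{cor-balance-dag} applied to $\calG$ produces in time $\mathcal{O}(|\calG|)$ a $\Gamma_\Sigma$-SLP $\mathcal{H}$ -- i.e.\ a top dag over $\mathsf{K}(\Sigma)$ -- with $\valX{\mathcal{H}}^{\mathsf{K}(\Sigma)} = \valX{\calG}^{\mathsf{K}(\Sigma)} = t$, $|\mathcal{H}| \in \mathcal{O}(|\calG|)$, and $\depth(\mathcal{H}) \in \mathcal{O}(\log |t_\calG|)$, where $t_\calG = \valX{\calG} \in \T_0(\Gamma_\Sigma)$ is the derivation tree of $\calG$. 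It then remains to replace $\log|t_\calG|$ by $\log|t|$, and for this I would give the edge-counting argument showing $|t_\calG| = \Theta(|t|)$: the leaves of the term tree $t_\calG$ are precisely the occurrences of atomic clusters, each atomic cluster $a(b)$ or $a(\underline b)$ contributes exactly one edge to the final tree $t$, and neither a horizontal merge $a(u)\mergh a(v) = a(uv)$ nor a vertical merge (which replaces a boundary leaf of $s$ by a cluster whose root carries the same label) creates or removes edges; hence $t$ has exactly as many edges as $t_\calG$ has leaves, and since the merge operators are binary, $|t_\calG|$ is within a constant factor of the number of edges of $t$, i.e.\ of $|t|$. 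In particular -- unlike the FSLP case, which needed Lemma~\ref{lem-eliminate-eps} to get rid of the ``weightless'' constants $\varepsilon$ and $\ast$ -- no preprocessing of $\calG$ is required here.

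The part I expect to be most delicate is not any of these structural reductions, which are essentially bookkeeping, but the careful verification that hypotheses~(iii) and~(iv) of Remark~\ref{rem-variable-algebra} really hold \emph{uniformly in $\Sigma$} for the subsumption base of Lemma~\ref{lem-cluster-tame}: one must check that the case analysis of that proof, together with the sort bookkeeping that selects the concrete element of $C(\mathsf{K}(\Sigma))$, is a genuine constant-time procedure on $\mathcal{O}(1)$-word data, and in particular that resolving which merge operators are defined (a matter of comparing a bounded number of sorts) costs only a constant number of word operations. Once this is granted, the corollary is a routine assembly of Lemma~\ref{lem-cluster-tame}, Remark~\ref{rem-variable-algebra}, Theorem~\ref{cor-balance-dag}, and the edge count above.
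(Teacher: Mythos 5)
Your proposal is correct and follows essentially the same route as the paper: invoke Lemma~\ref{lem-cluster-tame} together with Theorem~\ref{cor-balance-dag} (via Remark~\ref{rem-variable-algebra} since $\Sigma$ is part of the input, noting that the base depends on $\Sigma$ but each of its contexts has constant size), and observe that the derivation tree of a top dag has atomic-cluster leaves and binary internal nodes, so its size is linear in $|t|$ and no analogue of Lemma~\ref{lem-eliminate-eps} is needed. Your explicit edge-counting justification of this last point is a slightly more detailed version of the paper's one-line remark, not a different argument.
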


\begin{proof}
Note that in the derivation tree $\valX{\calG}$ of a top dag $\calG$, all
leaves are labelled with atomic clusters and all internal nodes have rank two.
Hence, the size of the derivation tree $\valX{\calG}$ is linearly bounded in the size of the generated tree
$\valX{\calG}^{\mathsf{K}(\Sigma)}$ (in the forest algebra, we needed Lemma~\ref{lem-eliminate-eps}
to enforce this property). For the case of a fixed alphabet $\Sigma$, the statement of the corollary 
follows from Lemma~\ref{lem-cluster-tame} and Theorem~\ref{cor-balance-dag}
analogously to Corollary~\ref{thm-balance-FSLP} for FSLPs. 
For the general case of a variable-size alphabet $\Sigma$ we have to use again Remark~\ref{rem-variable-algebra}.
As for SSLPs and FSLPs we need the natural assumption that symbols from the input alphabet fit into a single
machine word of the RAM. All operations from a cluster algebra have rank zero and two, and 
the subsumption base $C$ from the proof of Lemma~\ref{lem-cluster-tame}
has the property that every context $s \in \Sigma$ has constant size. In contrast to free monoids
and forest algebras, the subsumption base depends on the alphabet $\Sigma$.
Basically, we need to choose
the sorts of the variables $x,y,\tau_1, \tau_2, \tau_3$, $\sigma_1, \sigma_2, \sigma_3, \sigma_4$ in each of the contexts from
$C$.  This implies that every context $s \in C$ can be represented by a constant number of symbols from $\Sigma$ 
and hence can be stored in a constant number of machine words. The constant time algorithms from point (iii) and (iv)
from Remark~\ref{rem-variable-algebra} make a constant number of comparisons between the 
$\Sigma$-symbols representing the input contexts.
\end{proof}
In~\cite{GaJoMoWe19} top dags have been used for compressed range minimum queries (RMQs).
It is well known that for a string $s$ of integers one can reduce RMQs to lowest common ancestor
queries on the Cartesian tree corresponding to $s$.
Two compressed data structures for answering RMQs for $s$ are proposed in \cite{GaJoMoWe19}: 
one is based on an SSLP for $s$, we commented on it already in Section~\ref{sec-applications},
the other one uses a top dag for the Cartesian tree corresponding to $s$.
The following result has been shown, see \cite[Corollary~1.4]{GaJoMoWe19}:\\
\noindent Given a string $s$ of length $n$ over an alphabet of $\sigma$ many integers,
let $m_{\text{opt}}$ denote the size of a smallest SSLP for $s$.
There is a top dag $\calG$ for  the Cartesian tree corresponding to $s$
of size $|\calG| \le \min(\mathcal{O}(n/\log n), \mathcal{O}(m_{\text{opt}} \cdot \log n\cdot \sigma))$,
and there is a data structure of size $\mathcal{O}(|\calG|)$ that answers range
minimum queries on $s$ in time $\mathcal{O}(\log \sigma \cdot \log n)$.%
\noindent

As the time bound $\mathcal{O}(\log \sigma \cdot \log n)$ comes from the height of the constructed top dag,
using Corollary~\ref{coro-top-dag} we can enforce the bound $\mathcal{O}(\log n)$ on the height 
of the constructed top dag and ensure that the transformation can be applied to any input SSLP.
This yields the following improvement of the result of \cite{GaJoMoWe19}:

\begin{thm}
\label{thm:RMQ_via_top-dag}
Given an SSLP of size $m$ generating a string $s$ of length $n$ over an alphabet of $\sigma$ many integers
one can compute a top dag $\calG$ for the Cartesian tree corresponding to $s$
of size $|\calG| \le \min(\mathcal{O}(n/\log n), \mathcal{O}(m \cdot \sigma))$
and depth $\mathcal{O}(\log n)$,
and there is a data structure of size $\mathcal{O}(|\calG|)$ that answers RMQs on $s$ in time $\mathcal{O}(\log n)$.
If $m_{\text{opt}}$ denotes the size of a smallest SSLP generating $s$
then, using Rytter's algorithm, we can assume that $m \le \mathcal {O}(m_{\text{opt}} \cdot \log n)$.
\end{thm}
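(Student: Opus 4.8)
The plan is to obtain the statement as a direct consequence of Corollary~\ref{coro-top-dag} combined with the RMQ data structure of \cite{GaJoMoWe19} that uses a top dag for the Cartesian tree, which I would treat essentially as a black box. The crucial observation, already highlighted just before the theorem, is that the query time $\mathcal{O}(\log\sigma\cdot\log n)$ in \cite[Corollary~1.4]{GaJoMoWe19} is proportional to the height of the top dag constructed there, so it suffices to replace that top dag by a balanced one.

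First I would observe that the size bound is essentially inherited from \cite[Corollary~1.4]{GaJoMoWe19}. Its construction produces, from the string $s$, a top dag for the Cartesian tree $t$ of $s$ of size $\mathcal{O}(n/\log n)$; and, passing through an SSLP for $s$, a top dag for $t$ whose size is that SSLP's size times $\mathcal{O}(\sigma)$. Feeding it the given SSLP of size $m$ directly — instead of first running Rytter's algorithm on $s$, which is the sole origin of the factor $\log n$ in the original bound $\mathcal{O}(m_{\text{opt}}\cdot\log n\cdot\sigma)$ — yields a top dag $\calG_0$ for $t$ with $|\calG_0| \le \min(\mathcal{O}(n/\log n),\mathcal{O}(m\cdot\sigma))$. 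In general this $\calG_0$ is not balanced.

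Next I would apply Corollary~\ref{coro-top-dag} to $\calG_0$. Since the Cartesian tree satisfies $|t| = \Theta(n)$, this produces in time $\mathcal{O}(|\calG_0|)$ a top dag $\calG$ for $t$ with $|\calG| = \mathcal{O}(|\calG_0|) \le \min(\mathcal{O}(n/\log n),\mathcal{O}(m\cdot\sigma))$ and $\depth(\calG) = \mathcal{O}(\log|t|) = \mathcal{O}(\log n)$, which already gives the asserted bounds on the size and depth of $\calG$.

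Finally I would re-instantiate the RMQ data structure of \cite{GaJoMoWe19} on top of $\calG$: a single bottom-up pass over $\calG$ annotates every variable with the size of the cluster it derives and with the minimum value of $s$ over the corresponding range of positions, using $\mathcal{O}(|\calG|)$ words in total. An RMQ query for $[i,j]$ is answered by reducing it, as in \cite{GaJoMoWe19}, to locating in the derivation tree of $\calG$ the lowest common ancestor in $t$ of the nodes for positions $i$ and $j$ and reading off the stored value; since $\depth(\calG) = \mathcal{O}(\log n)$, this navigation runs in time $\mathcal{O}(\log n)$. For the concluding remark it remains to note that, when only $s$ is given, one first applies Rytter's algorithm \cite{Ryt03} to produce an SSLP of size $m = \mathcal{O}(m_{\text{opt}}\cdot\log n)$. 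I expect the only point needing genuine verification — and hence the main, though mild, obstacle — to be that the navigation routine of \cite{GaJoMoWe19} treats the top dag purely as a black box with query cost $\mathcal{O}(\text{height})$, so that it profits verbatim from balancing; beyond that, the theorem is bookkeeping on top of Corollary~\ref{coro-top-dag}.
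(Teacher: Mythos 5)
Your proposal is correct and matches the paper's own (very brief) argument: the paper likewise applies the construction of \cite{GaJoMoWe19} to the given SSLP of size $m$ (avoiding the Rytter step that caused the extra $\log n$ factor), balances the resulting top dag with Corollary~\ref{coro-top-dag}, and uses that the $\mathcal{O}(\log\sigma\cdot\log n)$ query time of \cite[Corollary~1.4]{GaJoMoWe19} is just the height of the top dag, so it drops to $\mathcal{O}(\log n)$. The point you flag for verification (query cost being proportional to the top dag height) is exactly the property the paper also takes from \cite{GaJoMoWe19} without further proof.
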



\section{Open problems}

For SSLPs one may require a strong notion of balancing. Let us say that an SSLP $\calG$ is 
$c$-balanced if (i) the length of every right-hand side
is at most $c$ and (ii) if a variable $Y$ occurs in $\rho(X)$ then $|\valXG{\calG}{Y}| \leq |\valXG{\calG}{X}|/2$.
It is open, whether there is a constant $c$ such that for every SSLP of size $m$ there exists an equivalent
$c$-balanced SSLP of size $\mathcal{O}(m)$.

Another important open problem is whether the query time bound in Theorem~\ref{cor-random-access} 
(random access to grammar-compressed strings)
can be improved from $\mathcal{O}(\log n)$ to $\mathcal{O}(\log n / \log \log n)$.
If we allow space $\mathcal{O}(m \cdot \log^{\epsilon} n)$ (for any small $\epsilon>0$) then such an improvement is possible
by Corollary~\ref{cor-fusion}, but it is open whether query time $\mathcal{O}(\log n / \log \log n)$ can be achieved with space
$\mathcal{O}(m)$. By the lower bound from \cite{VerbinY13} this would be an optimal
random-access data structure for grammar-compressed strings.


\end{document}